\pgfplotsset{compat=1.18}
\tikzset{if/.code n args={3}{\pgfmathparse{#1}%
  \ifnum\pgfmathresult=1\pgfkeysalso{#2}\else\pgfkeysalso{#3}\fi}}
\tikzset{>=latex}
\tikzset{snake it/.style={decorate, decoration=snake}}
\tikzset{dbl/.style={>=stealth,
                     double,
                     double equal sign distance,
                     -implies,
                     shorten >=3pt,
                     shorten <=3pt}}          
\tikzstyle surjective=[postaction={decorate,decoration={markings, mark=at position .9 with {\arrow{latex}}}}]
\def\t{\top}
\def\adj{*}
\def\tJ{\tilde{J}}
\def\hy{\hat{y}}
\newcommand{\AND}{\textsf{AND}}
\newcommand{\XOR}{\textsf{XOR}}
\newcommand{\R}{\mathbb{R}}
\def\<{\langle}
\def\>{\rangle}
\newcommand{\lra}{\longrightarrow}
\newcommand\Aff{\text{Aff}}
\DeclareMathOperator*{\argmin}{argmin}
\DeclareMathOperator*{\sgn}{sgn}
\DeclareMathOperator*{\sym}{Sym}
\newcommand{\trop}{\operatorname{trop}}
\newcommand{\tpitchfork}{%
  \vbox{
    \baselineskip\z@skip
    \lineskip-.52ex
    \lineskiplimit\maxdimen
    \m@th
    \ialign{##\crcr\hidewidth\smash{$-$}\hidewidth\crcr$\pitchfork$\crcr}
  }%
}
\theoremstyle{definition}
\newtheorem{theorem}{Theorem}[section]
\newtheorem{corollary}[theorem]{Corollary}
\newtheorem{definition}[theorem]{Definition}
\newtheorem{lemma}[theorem]{Lemma}
\newtheorem{proposition}[theorem]{Proposition}
\newtheorem{remark}[theorem]{Remark}
\title{Geometric Theory of Ising Machines}
\author{Andrew Moore, Zachary Richey, Isaac Martin}
\begin{document}
\maketitle

\begin{abstract}
    We contribute to the mathematical theory of the design of low temperature Ising machines, a type of experimental probabilistic computing device implementing the Ising model. Encoding the output of a function in the ground state of a physical system allows efficient and distributed computation, but the design of the energy function is a difficult puzzle. We introduce a diagrammatic device that allows us to visualize the decision boundaries for Ising circuits. It is then used to prove two results: (1) Ising circuits are a generalization of 1-NN classifiers with a certain special structure, and (2) Elimination of local minima in the energy landscape can be formulated as a linear programming problem. 
\end{abstract}

\section{Introduction}

The `Ising computer' lies at the confluence of physics, artificial intelligence, and quantum-inspired classical computing. The dream is simple: if we could create a physical system whose ground states were the answers to a computational problem, then the tendency of the system to minimize its energy would solve the problem autonomously. This dangles the tantalizing promise of cutting computational power consumption by several orders of magnitude and circumventing the obstacles posed by the traditional Von Neumann computing architecture, an increasingly relevant problem in the AI age \cite{Melanson2025}. The Ising model, a favorite of statistical physics, offers a natural starting place: indeed there is a thriving field of hardware research implementing programmable Ising-type systems using everything from coupled analog oscillators \cite{roy2025experimentsoscillatorbasedising, Chou_2019} to magnetic tunnel junctions \cite{yang2025250magnetictunneljunctionsbased} to entangled photons in a loop of fiber optic cable \cite{Yamamoto2017, McMahon_2016}. A good review of the current implementation technologies and the desirability of Ising computing from a combinatorial optimization perspective can be found in \cite{mohseni2022isingmachineshardwaresolvers}. 

The problem of designing the energy functions for such circuitry, however, has received far less attention. Most research either simply re-implements traditional digital logic using Ising spins \cite{tsukiyama2024designingunitisingmodels}, or restricts the focus to combinatorial optimization problems which naturally lend themselves to formulation as Ising Hamiltonians \cite{Lucas_2014}. Very little work has been done to characterize how more general circuitry might be modeled, and what the expressive limits of the system might be. The problem of optimizing positive temperature Ising circuitry has been investigated with AI methods by our collaborators in \cite{knoll2024solvingboltzmannoptimizationproblems}. Significant progress has been made on the `inverse Ising problem' for modeling the probability distributions of interacting systems \cite{Nguyen_2017}, but this approach does not consider the problem of prescribing qualitative limiting behavior, i.e. the desired ground states. 

This paper analyzes the geometry and expressive power of Ising circuits in a non-statistical setting: rather than maximizing probabilities at finite temperature, we only insist that our circuits produce correct answers in the zero temperature limit. This weaker condition is the absolute minimum requirement for the circuit to `compute' the desired function, under the assumption that the output is observed by observing the state of the system at a given point in time. Furthermore, it reduces the problem to the design of a feasible system of linear inequalities. We will present a useful diagrammatic tool for representing the decision surfaces of such low-temperature Ising circuits. Using this visual tool, we will prove two main results:
\begin{enumerate}
    \item Ising circuits are a mild generalization of 1-NN classifiers whose centroids are the vertices of a parallelepiped. The output-output interactions, eliminated in the more popular CRBM (Conditional Restricted Boltzmann Machine), allow for nonlinear classification, though only slightly. Our diagrammatic approach allows us to visualize the nonlinear decision surfaces as an affine transformation of a Voronoi diagram of a parallelepiped, making the connection to 1-NN classifiers. The correspondence is then formally proven. 
    \item A slight modification to the linear system completely eliminates all parasitic local minima of the energy function. Local minima are a plague on all energy-based computation systems. Typical solutions include annealing the system, probabilistically hoping to avoid becoming trapped in a false minimum. We show that the requirement that no incorrect states be a local minimum can be expressed, with minimal extra cost, within the linear programming framework, using a derivation from our diagrammatic device. Solutions to this modified problem give circuits which are guaranteed to converge to the correct solution very quickly under Glauber dynamics at low temperature; they therefore have all the advantages of a convex energy landscape while being less restrictive. We supply a mathematical proof that our modified linear program indeed yields no-local-minima energy landscapes, and suggest and algorithm for its implementation. 
\end{enumerate}

The Ising circuit that we are discussing is very similar to the `conditional Boltzmann machine', though considered in the $\beta \rightarrow \infty$ limit. That model is well studied in the probabilistic setting of modeling probability distributions. We will stick with the language of `Ising circuit', as the design of small systems is of greater interest to probabilistic computing hardware than to artificial intelligence. However, the reader may read ``zero temperature conditional Boltzmann machine" for ``Ising circuit'' throughout. Nonetheless, this work has a very different flavor from the literature on CBMs. In fact, this work is almost entirely self-contained. For context, we have cited several results from our previous algorithmically focused paper on the same subject \cite{andrewisaac}. 

\subsection{Outline}

We will begin by introducing the formal definitions of Ising circuits in section \ref{sec:isingcircuits}, along with some basic results. Next, in section \ref{sec:res_ham}, we introduce a relaxed formulation and a diagrammatic device for analyzing Ising circuits. Then, we proceed to the two main results in sections \ref{sec:superset} and \ref{sec:localminima}. We have pushed a number of elementary facts, definitions, and notation regarding linear algebra and boolean functions to appendices \ref{sec:linearalgebra} and \ref{sec:boolean}. Finally, we include some interesting remarks on our empirical review of small Ising circuits in appendix \ref{sec:empirical} and a connection to tropical geometry in appendix \ref{sec:tropical}. 

\tableofcontents

\section{Ising Circuits}
\label{sec:isingcircuits}

Ising circuits are physical Ising spin systems which are thought of as computing devices: we fix the states of certain spins, called the inputs, and observe the state of other spins, called the outputs. If the temperature is low, the system will be in its ground state (conditional on the fixed state of the input spins) with high probability; i.e. it will have probabilistically computed a function of the input spins. It is necessary to bring in some standard definitions from statistical physics:

\begin{definition}
    Define $\Sigma = \{-1,1\} \subset \R$ and $B = \{0,1\} \subset \R$.
    The $n$-dimensional \textit{hypercube} is the set $\Sigma^n$. We may also refer to the convex hull of $\Sigma^n$ or the hypercube graph as the hypercube, depending on context. An element $s \in \Sigma^n$ is called a \textit{spin state} or \textit{spin configuration}. It represents a list of spin values of $n$ particles in an Ising system.
\end{definition}

\begin{definition}[Ising Hamiltonian]
    An \textit{Ising Hamiltonian} is a quadratic pseudo-Boolean polynomial $H: \Sigma^d \to \R$ with no constant terms.
    Any Ising Hamiltonian can be written as
    \begin{align}
        H(s) = h\cdot s + s^\t Js = \sum_{i=1}^d h_i s_i + \sum_{1 \le i < j \le d} J_{ij} s_is_j,
    \end{align}
    for some vector $h \in \R^d$, called the \textit{local biases}, and some strictly upper-triangular matrix $J \in \R^{d\times d}_{UT}$, called the \textit{coupling coefficients} or \textit{interaction strengths}.\footnote{Alternatively, we could replace $J$ with its symmetric part:
    \begin{align}
        \text{Sym}(J) = \frac{1}{2}(J + J^\t)
        = \frac{1}{2}\begin{pmatrix}
            0 & J_{12} & J_{13} & \dots \\
            J_{12} & 0 & J_{23} & \dots \\
            J_{13} & J_{23} & 0 & \dots \\
            \vdots & \vdots & \vdots & \ddots
        \end{pmatrix},
    \end{align}
    since $s^\t Js = s^\t \sym(J)s$.
    We note $\sym(J)$ is a symmetric hollow matrix, and also the Hessian of $H(s)$.}
    $H(s)$ represents the energy of an Ising system in the spin state $s$ with parameters specified by $h$ and $J$. We say that an Ising Hamiltonian $H$ is \textit{generic} when all the energy levels are distinct, i.e. $H(x) = H(y)$ iff $x = y$. 
    
    
\end{definition}

\subsection{Realizing Computations with Ising Systems}

It will be very useful for our analysis to separate the notion of what a circuit should do from its actual physical manifestation. An Ising system (set of spins and a Hamiltonian) does not by itself have the interpretation of a circuit. We formalize this by calling a `circuit' the abstract prescription of a function which maps inputs to outputs, and interpreting an Ising system based on the circuit that it is meant to implement or realize:

\begin{definition}[Circuit]
    A \textit{circuit of shape $(n,m)$} is a triple $(n,m,f)$, where $n$ and $m$ are natural numbers and $f: \Sigma^n \to \Sigma^m$ is a Boolean function.     If an Ising Hamiltonian $H: \Sigma^{n+m} \to \R$ is associated to a circuit $(n, m, f)$, we decompose its input as $s = (x, y) \in \Sigma^n \times \Sigma^m$.  Here, $x \in \Sigma^n$ represents a particular input spin configuration, or \textit{input state}, and $y$ represents an \textit{output state}.
\end{definition}

We will assume that we have total control over fixing the input state, i.e. that they can be pinned to any binary pattern. Now, if we consider $n+m$ spins, with the first $n$ spins pinned at some choice of states, we can see that any generic\footnote{See \Cref{prop:generic} for technical details on uniqueness of the argmin.} Ising Hamiltonian $H$ executes the computation of the following function on $x \in \Sigma^n$:
\begin{align}
    f_H(x) := \argmin_{y\in\Sigma^m} H(x,y)
\end{align}
Where by `executes the computation of $f_H$' we mean that with respect to the Boltzmann-Gibbs distribution on $\Sigma^{n+m}$ given by $H$,
\begin{align}
    \lim_{\beta \rightarrow \infty} \mathbb{P}[(\sigma_{n+1}, \dots, \sigma_{n+m}) = f_H(x) | (\sigma_1, \dots, \sigma_n) = x] = 1
\end{align}
We say that $H$ \textit{encodes} the circuit $(n, m, f_H)$. We can now state precisely what it means for a generic Ising Hamiltonian to embody a particular abstract circuit: $H$ encodes $(n,m,f)$ means $f  = f_H$. This marks the last time that we will mention temperature: at positive temperature, Ising systems compute $f_H$ probabilistically; however, at zero temperature they are deterministic. $f_H$ therefore represents the ideal behavior which is approximated by a positive-temperature circuit, and we will focus on this ideal behavior.

The basic problem of Ising circuit design is this: \textit{Given a circuit, how do we find a Hamiltonian that encodes it?} In practice, furthermore, this usually is not possible. Therefore, we must often ask a more basic question, to wit: \textit{Given a circuit, does there exist a Hamiltonian that encodes it?}. If there does exist such a Hamiltonian, we call the circuit \textit{feasible}, otherwise, it is \textit{infeasible}. Both questions can be answered at once: they are reducible to a linear programming problem.

\begin{lemma}[Linearity of the Inverse Problem]
Finding a Hamiltonian $H$ which encodes $(n,m,f)$ is equivalent to solving the linear programming problem
\begin{align}
\label{eq:linprog}
    \min_u \|u\|_1 \text{ s.t. } \langle u, v(x,y) -v(x,f(x))\rangle \geq 1 \forall y \in \Sigma^m\setminus\{f(x)\} \forall x \in \Sigma^n
\end{align}
Where $v(s) = \text{Vec}[s,(s^{\otimes 2})_{UT}] \in \Sigma^{n+m+{n + m \choose 2}}$
\begin{proof}
    Observe that when $u$ is the vector of coefficients of a Hamiltonian $H$, $H(s) = \langle u, v(s)\rangle$. Furthermore, by definition
    \begin{align}
        f(x) = \argmin_{y\in\Sigma^m} H(x,y) \iff H(x,y) > H(x, f(x)) \forall y \in \Sigma^m\setminus\{f(x)\} \forall x \in \Sigma^n
    \end{align}
    Since there are a finite number of constraints, we may let
    \begin{align}
        \epsilon(u) := \min\{ \langle u, v(x,y) -v(x,f(x)) \rangle | y \in \Sigma^m\setminus\{f(x)\}, x \in \Sigma^n\}
    \end{align}
    Observe that $(n,m,f)$ encoded by a Hamiltonian $H$ with coefficient vector $u$ iff $\epsilon(u) > 0$. Note that $u$ solves \Cref{eq:linprog} when $\epsilon(u) \geq 1$. If $\epsilon(u) \geq 1$, clearly $\epsilon(u) > 0$. On the other hand, if $\epsilon(u) > 0$, then $\epsilon(u/\epsilon(u)) = 1$, so $u/\epsilon(u)$ solves \Cref{eq:linprog}. Therefore $(n,m,f)$ is feasible iff \Cref{eq:linprog} is feasible, and up to linear rescaling, the set of $H$ which encodes $(n,m,f)$ is precisely the feasible region of \Cref{eq:linprog}. 
\end{proof}
\end{lemma}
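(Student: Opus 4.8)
The plan is to view the set of Hamiltonians encoding $(n,m,f)$ as a punctured convex cone carved out of coefficient space by a finite family of strict linear inequalities, and then to observe that such a cone is nonempty exactly when the closed polyhedron of \Cref{eq:linprog} is nonempty, the passage between the two being pure positive homogeneity. First I would set up the linear parametrization: a quadratic pseudo-Boolean polynomial on $\Sigma^{n+m}$ with no constant term is by definition a linear combination of the degree-$1$ and degree-$2$ monomials, so if $u$ denotes the list of its coefficients $(h,J)$ and $v(s)=\text{Vec}[s,(s^{\otimes2})_{UT}]$ the matching list of monomial values, then $H(s)=\langle u, v(s)\rangle$, and $u \mapsto H$ is a bijection between vectors of the stated length and Ising Hamiltonians on $\Sigma^{n+m}$.

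Next I would unfold the meaning of ``$H$ encodes $(n,m,f)$'', namely $f=f_H$, i.e. for every $x\in\Sigma^n$ the minimizer of $y\mapsto H(x,y)$ over $\Sigma^m$ is unique and equal to $f(x)$; this is equivalent to the strict inequalities $H(x,y)>H(x,f(x))$ for all $x\in\Sigma^n$ and all $y\in\Sigma^m\setminus\{f(x)\}$. Substituting the parametrization rewrites each such inequality as $\langle u, v(x,y)-v(x,f(x))\rangle>0$, a homogeneous strict linear condition indexed by the finite set $\{(x,y): x\in\Sigma^n,\ y\neq f(x)\}$, of size $2^n(2^m-1)$. Since this index set is finite, the quantity $\epsilon(u):=\min\{\langle u, v(x,y)-v(x,f(x))\rangle\}$ over it is well-defined, continuous, and positively homogeneous of degree $1$ in $u$, and $H$ encodes $(n,m,f)$ precisely when $\epsilon(u)>0$.

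Finally I would make the strict-to-nonstrict translation. If $\epsilon(u)>0$, then homogeneity gives $\epsilon(u/\epsilon(u))=1$, so $u/\epsilon(u)$ satisfies all the constraints of \Cref{eq:linprog}; conversely any $u$ feasible for \Cref{eq:linprog} has $\epsilon(u)\ge 1>0$, so its associated $H$ encodes $(n,m,f)$. Hence $(n,m,f)$ is feasible iff \Cref{eq:linprog} is feasible, and the encoding Hamiltonians are exactly the feasible vectors up to positive rescaling; the $\|u\|_1$ objective (which can itself be written with auxiliary variables to make the program genuinely linear) merely selects one normalized representative. The one point that needs care — and really the only content beyond bookkeeping — is exactly this mismatch between the open condition defining ``encodes'' (strict inequalities, plus uniqueness of the argmin) and the closed constraints a linear program allows; it is resolved by the homogeneity of $\epsilon$, which simultaneously shows why rescaling the bound to $\ge 1$ costs nothing and why the $\ge 1$ constraint already forces $f(x)$ to be the strict, hence unique, minimizer, so that $f_H$ is well-defined without separately assuming $H$ generic.
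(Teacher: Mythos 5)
Your proposal is correct and follows essentially the same argument as the paper: parametrize the Hamiltonian by its coefficient vector $u$ so that $H(s)=\langle u,v(s)\rangle$, define $\epsilon(u)$ as the minimum of the finitely many linear functionals $\langle u,v(x,y)-v(x,f(x))\rangle$, observe that encoding is equivalent to $\epsilon(u)>0$, and use positive homogeneity to pass between the strict condition $\epsilon(u)>0$ and the closed LP constraint $\epsilon(u)\ge 1$ by rescaling. Your added remarks on writing $\|u\|_1$ with auxiliary variables and on the $\ge 1$ constraint already forcing uniqueness of the argmin are sound but do not change the substance of the argument.
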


This lemma is fundamental for the rest of our analysis, much of which is based on the manipulation of the linear constraint set. While the linear program has exponentially many constraints, usually making it impractical for concretely designing circuits with more than 20 output spins even with modern computers, it is very useful as an analytic tool.

The objective of the linear program, which we have set as $L^1$ norm minimization of the coefficient vector, is only partially arbitrary. Of course, when the problem is feasible, there are many solutions. However, in many hardware applications, the $L^\infty$ norm is limited, the $L^1$ norm measures total energy consumption, or there is some other such restriction on the magnitude of the coefficients. We have chosen $L^1$ minimization on the general principle that it balances concerns about minimizing the coupling sizes and number of nonzero couplings, but in truth it may be substituted with any linear objective without changing the feasibility results that we are concerned with.

The following easy but important fact states that feasible one-output Ising circuits are precisely \textit{threshold functions}, in which case the linear program \Cref{eq:linprog} reduces to the $L^1$-regularized SVM. 
\begin{theorem}\label{cor:threshold}
    A circuit $(n,1,f)$ is feasible if and only if $f$ is a threshold function.
    \begin{proof}
        Let $H:\Sigma^n\times\Sigma^m\to\R$ be a generic Ising Hamiltonian, which we may assume is reduced by the procedure in \Cref{subsubsec:reduction}.
        Since there is only one output, there are no coupling coefficients, and hence $J = 0$. Thus, the $H$ is simply $H(x,y) = A(x)y$
        for some $A \in \Aff(\R^n, \R)$. We see that $y = -1$ is the ground state when $A(x) > 0$, and $y = 1$ is the ground state when $A(x) < 0$. Therefore, the above Hamiltonian encodes $(n,1,f)$ if and only if $f(x) = -\sgn(A(x))$. 
        Since $A$ is affine, $A(x) = w_0 + w\cdot x$ for some $w_0 \in \R$ and $w \in \R^n$.  Thus, $f(x) = \sgn(-w_0 - w\cdot x)$.
    \end{proof}
\end{theorem}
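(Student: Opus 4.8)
The plan is to use the fact that a one-output Hamiltonian is \emph{affine in the output spin}, so that minimizing over $y$ reduces to reading off the sign of an affine function of $x$. First I would write a generic Ising Hamiltonian $H:\Sigma^{n+1}\to\R$ in coordinates $s=(x,y)$ and separate the monomials containing $y$ from those that do not: since $H$ is quadratic with no $y^2$ term (the coupling matrix $J$ is strictly upper-triangular, hence has zero diagonal), every $y$-monomial is linear in $y$, so $H(x,y)=H_0(x)+A(x)\,y$ with $A(x)=h_{n+1}+\sum_{i=1}^{n}J_{i,n+1}x_i$ affine in $x$. For fixed $x$ the summand $H_0(x)$ is irrelevant to the minimization over $y\in\Sigma$, so $\argmin_{y\in\Sigma}H(x,y)=-1$ when $A(x)>0$ and $=+1$ when $A(x)<0$; genericity rules out $A(x)=0$ on $\Sigma^n$, hence $f_H(x)=-\sgn(A(x))$.

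Forward direction: if $(n,1,f)$ is feasible it is encoded by some generic $H$, and the computation above gives $f(x)=-\sgn(A(x))$ for an affine $A(x)=w_0+w\cdot x$, i.e.\ $f(x)=\sgn(-w_0-w\cdot x)$. This exhibits $f$ as a threshold function (weight vector $-w$, offset $w_0$), modulo matching the precise sign convention adopted in \Cref{sec:boolean}.

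Converse: given $f$ a threshold function, write $f(x)=\sgn(c_0+c\cdot x)$ and set $A(x):=-(c_0+c\cdot x)$ and $H(x,y):=A(x)\,y$; this is a genuine Ising Hamiltonian (quadratic, no constant term, $J_{i,n+1}=-c_i$, $h_{n+1}=-c_0$, all other parameters zero), and $f_H=-\sgn(A)=f$ on the set where $A\neq 0$. The one genuine subtlety is genericity: if $c_0+c\cdot x=0$ for some $x\in\Sigma^n$ the two outputs tie and $H$ fails to be generic. Since $\Sigma^n$ is finite, I would fix this by replacing $c_0$ with $c_0+\delta$, where $\delta$ is nonzero, of the sign dictated by the (necessarily consistent) label the threshold convention assigns to boundary points, and of magnitude smaller than $\min\{|c_0+c\cdot x| : x\in\Sigma^n,\ c_0+c\cdot x\neq 0\}$; this leaves $f$ unchanged while making $A$ nowhere zero, so the perturbed $H$ is generic and encodes $(n,1,f)$.

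I expect the structural step — linearity in $y$ collapsing the problem to a halfspace classifier — to be immediate, and the only real work to lie in the genericity bookkeeping of the converse and in lining up sign conventions. It is worth noting in passing that this theorem is just the $m=1$, $J\equiv 0$ specialization of the linearity lemma above: in that regime \Cref{eq:linprog} is precisely the feasibility problem for a linear ($L^1$-regularized) SVM.
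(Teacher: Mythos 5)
Your proposal is correct and takes essentially the same route as the paper: reduce the one-output Hamiltonian to the form $A(x)y$ (plus an $x$-only term that is irrelevant to the minimization), read off $f_H(x)=-\sgn(A(x))$, and identify this with the definition of a threshold function. The only difference is one of care, not of method: you spell out the converse direction explicitly and handle the genericity perturbation of $c_0$ when a boundary point $c_0+c\cdot x=0$ lands on $\Sigma^n$, whereas the paper folds both directions into a single ``if and only if'' and leaves that bookkeeping implicit.
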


This result shows that Ising circuits in general can be thought of as a certain multiple-output generalization of a support vector machine, where the nonlinearity is not introduced by a kernel but by the quadratic coupling between the outputs. This means that such devices are more expressive than linear models (i.e. traditional support vector machines) but not as much as a fully nonlinear classifier. The actual geometry of the nonlinear decision surfaces will be visualized in section \ref{sec:res_ham} and analyzed in section \ref{sec:superset}.

\subsubsection{Simplifying the Hamiltonian}
\label{subsubsec:reduction}

An astute reader may have noticed that the constraints of the linear programming problem do not depend on the variables which correspond to terms in the Hamiltonian only dependent on input spins, since $v(x,y) - v(x,f(x))$ has many zero entries. This redundancy can be removed: For any computation with an Ising system, the input spins will be ``pinned," so that their states are not affected by the other spins in the system. Thus, we can freely assume that the local biases of the input spins and the coupling coefficients between them are all zero.\footnote{This choice is not always desirable when chaining multiple circuits together, but it is perfectly fine for individual circuits.}  Thus, the Ising Hamiltonian reduces to

    \begin{align}
       H(x,y) &= \sum_{i=1}^m h_i y_i + \sum_{k=1}^n \sum_{i=1}^m J_{x_ky_i}x_ky_i + \sum_{1\le i < j \le m} J_{y_iy_j}y_iy_j \\
       &= \sum_i \left( h_i + \sum_k J_{x_ky_i}x_k\right)y_i + \sum_{i<j} J_{y_iy_j} y_iy_j.
    \end{align}
    We define
    \begin{align}
        A_i(x) := h_i + \sum_k J_{x_ky_k}x_k \in \Aff(\R^n,\R), \\
        A := A_1\times \dots \times A_m \in \Aff(\R^n, \R^m).
    \end{align}
    Thus, an Ising Hamiltonian is given by
    \begin{align}
        \label{hamiltonian}
        H(x, y) = A(x)\cdot y + y^\t Jy = \sum_i A_i(x)y_i + \sum_{i<j} J_{ij} y_iy_j,
    \end{align}
    for some $A \in \Aff(\R^n,\R^m)$ and $J \in \R^{m\times m}_{UT}$.

    It should be noted that the ability to pin the input spins is functionally equivalent to the ability to set their local bias terms. If their local bias terms are set strongly enough relative to the interaction strengths of the system, the qualitative structure of the energy levels will be equivalent. We choose to regard them as pinned because it is simpler, but this modeling choice in practice is hardware-dependent. 

\subsubsection{Genericity}

There is another loose end that we need to tie up. Earlier, we ignored all degeneracies by assuming that we were working with generic Ising Hamiltonians. This was needed to make the argmins unique. It is clear, of course, that a Hamiltonian which encodes a function $f$ cannot have any degeneracy in its ground state, conditional on a fixed input $x$. However, the Hamiltonian given by the solution to the linear programming problem \Cref{eq:linprog} may not be generic. To fix this, we will show that we can always perturb a non-generic Hamiltonian into a generic one which encodes the same circuit. 

\begin{lemma}\label{lem:generic}
\label{prop:generic}
There exists $H$ that encodes $(n,m,f)$ iff there exists a generic $H$ that encodes $(n,m,f)$. 
\begin{proof}
    Assume $H$ encodes $(n,m,f)$. We will show that there exists a generic Hamiltonian encoding the same circuit. The method is as follows: given $H$ with degeneracies, we will perturb it to remove one degeneracy at a time. Inductively, this shows that there must be a generic solution to $f$, since the total number of degeneracies is finite. We define the degeneracy number $D(H) = |\{(\alpha, \beta) \in \Sigma^{n+m} \times \Sigma^{n+m} : \alpha \neq \beta, H(\alpha) = H(\beta)\}|$. 

    \noindent
    Assume $D(H) > 0$. Then there exists $a, b \in \Sigma^{n+m}$ such that $H(a) = H(b)$ and $a \neq b$. Define the solution energy gap
    \begin{align}
        \delta := \min_{x \in \Sigma^n} \min_{y \in\Sigma^m \setminus \{f(x)\}} H(x,y) - H(x, f(x))
    \end{align}
    It follows from the definition of encoding that $\delta > 0$. Now, define the minimum energy gap 
    \begin{align}
        \epsilon := \min_{\alpha, \beta \in \Sigma^{n+m}, H(\alpha) \neq H(\beta)} |H(\alpha) - H(\beta)|
    \end{align}
    Again by definition, $\epsilon > 0$, and $\epsilon < \delta$. Now, pick any Hamiltonian $R$ such that $R(a) \neq R(b)$. Let $|R|$ denote $\max_x |R(x)|$. Consider 
    \begin{align}
        S := \frac{\epsilon}{3|R|}R
    \end{align}
    And let $\hat{H} := S + H$. Evidently, $\hat{H}(a) \neq \hat{H}(b)$, so we have removed a degeneracy. Now, consider any $w, z \in \Sigma^{n+m}$ such that $H(w) \neq H(z)$. Assume without loss of generality that $H(w) < H(z)$. Then $\hat{H}(w) \leq H(w) + \epsilon/3 < H(z) - \epsilon/3 \leq \hat{H}(z)$, i.e. no new degeneracy has been created. It follows that $D(\hat{H}) < D(H)$. Since this argument also applies to $(x,y)$ and $(x, f(x))$, and $\epsilon < \delta$, $\hat{H}$ still encodes $(n,m,f)$. Finally, $D(H)$ is finite, since it's the cardinality of a finite set. Therefore by iterating this process, we must eventually create a generic Hamiltonian encoding $f$. 
\end{proof}
\end{lemma}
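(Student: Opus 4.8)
The forward implication is immediate: a generic $H$ encoding $(n,m,f)$ is in particular an $H$ encoding $(n,m,f)$. So assume some $H$ encodes $(n,m,f)$ and produce a generic one. My preferred route uses the Linearity of the Inverse Problem: writing $H = H_u$ for its coefficient vector $u \in \R^N$ with $N = (n+m) + \binom{n+m}{2}$, the Hamiltonian $H_u$ encodes $(n,m,f)$ exactly when $\epsilon(u) > 0$, where $\epsilon$ is the piecewise-linear function from that lemma's proof. Since $\epsilon$ is continuous, the set $U = \{u : \epsilon(u) > 0\}$ is open, and it is nonempty by hypothesis.

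Next I would argue that $U$ must contain a generic Hamiltonian. A Hamiltonian $H_u$ is non-generic iff $H_u(\alpha) = H_u(\beta)$ for some distinct $\alpha,\beta \in \Sigma^{n+m}$, i.e. iff $\langle u,\, v(\alpha)-v(\beta)\rangle = 0$. Because $v$ is injective on $\Sigma^{n+m}$ (its first $n+m$ coordinates are just $s$), we have $v(\alpha) \ne v(\beta)$, so each such equation cuts out a genuine hyperplane in $\R^N$; there are only finitely many pairs $(\alpha,\beta)$, hence finitely many hyperplanes. A nonempty open subset of $\R^N$ cannot be covered by finitely many hyperplanes, so $U$ contains a point lying off all of them, and that point is a generic Hamiltonian encoding $(n,m,f)$.

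A more elementary alternative, avoiding any topological input, is to perturb $H$ to kill degeneracies one at a time: if $H(a)=H(b)$ with $a\ne b$, choose any Hamiltonian $R$ with $R(a)\ne R(b)$ and replace $H$ by $H + tR$ with $0 < t < \tfrac{1}{3}\,\epsilon_0 / \max_x |R(x)|$, where $\epsilon_0$ is the smallest nonzero value of $|H(\alpha)-H(\beta)|$; this separates $a$ from $b$, creates no new coincidence, and — since the solution gap $\delta := \min_x \min_{y\ne f(x)} \big(H(x,y) - H(x,f(x))\big)$ is positive and $\epsilon_0 \le \delta$ — preserves the encoding. The number of coincident pairs is a nonnegative integer that strictly drops at each step, so the process terminates. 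The only delicate point in this second version is the two-sided bound on the perturbation: it must be large enough to break the targeted degeneracy yet small enough to keep every other strict inequality (both the encoding inequalities and the already-achieved separations) intact. I expect that bookkeeping to be the main obstacle; the hyperplane argument of the first two paragraphs sidesteps it entirely by replacing the induction with a single non-covering statement.
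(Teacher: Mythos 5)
Your first (preferred) route is correct and is genuinely different from the paper's proof, which takes the iterative-perturbation approach that you describe as your "more elementary alternative." The paper defines the degeneracy count $D(H)$, picks a degenerate pair $(a,b)$, perturbs by $S := \tfrac{\epsilon}{3|R|}R$ for some $R$ separating $a$ from $b$, and shows that every strict inequality (both the encoding inequalities and the previously separated pairs) survives because the perturbation changes each energy level by at most $\epsilon/3$; since $D(H)$ strictly drops and is finite, the process terminates. This is exactly your second version, and the "bookkeeping" you worry about does go through: the single bound $\|S\|_\infty \le \epsilon/3$ simultaneously preserves all pre-existing strict gaps, so no delicate two-sided balancing is needed.

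Your hyperplane argument is cleaner and avoids induction entirely: the encoding condition $\epsilon(u) > 0$ defines a nonempty open set $U$ of coefficient vectors, the non-generic Hamiltonians form a finite union of hyperplanes $\{u : \langle u, v(\alpha) - v(\beta)\rangle = 0\}$ (using injectivity of $v$, which does follow because the first $n+m$ coordinates of $v(s)$ are $s$ itself), and a nonempty open set cannot be covered by finitely many hyperplanes. Interestingly, the paper itself acknowledges essentially this observation in the tropical-geometry appendix, noting that the proof of this lemma "could be simplified \ldots by simply noting that any Hamiltonian with degenerate energy states must live on $V(\operatorname{trop}\varepsilon)$, which is codimension 1." You have supplied the elementary version of that remark. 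The only trade-off is that the paper's iterative argument is constructive (it produces an explicit perturbation), while the non-covering argument is purely existential — but for the purposes of the lemma that difference is immaterial.
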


 It follows that we may always freely assume that a Hamiltonian encoding a circuit is generic, making all minimizers over subsets of energy levels uniquely defined, which will be frequently useful. This should put to rest any fears the reader may still be retaining regarding the uniqueness of the arg-minimizers. 

\subsection{Auxiliary Spins}

We have seen that Ising circuits are fairly restrictive in terms of the functions that they can realize: while better than linear classifiers, they only have access to the nonlinearity provided by quadratic output-output couplings. It is clear, therefore, that most interesting functions will not be feasible (for example, see \Cref{fig:xor_x1}). However, this issue can be alleviated by adding new `hidden' or \textit{auxiliary} spins, introducing new degrees of freedom into the model. Since we are concerned with modeling a function $f$ and are not worried about the actual value of these extra spins, it makes sense to define a relaxed notion of feasibility:

\begin{definition}[Feasible with auxiliaries]
    A circuit $(n,m,f)$ is \textit{feasible with $k$ auxiliaries} if there exists an Ising Hamiltonian $H: \Sigma^n \times \Sigma^m \times \Sigma^k \to \R$ such that, for each $x \in \Sigma^n$
    \begin{align}
        (f(x), \tilde z) = \argmin_{(y,z) \in \Sigma^{m+k}} H(x,y,z),
    \end{align}
    For some $\tilde{z}$. 
    In other words, if $(x,y,z)$ is the lowest-energy state among all states with the same input, then $y$ must be the correct output.
    We say that such an $H$ \textit{encodes} the circuit $(n,m,f)$. As before, we call $x \in \Sigma^n$ the \textit{input state} and $y \in \Sigma^m$ the \textit{output state}.  We call $z \in \Sigma^k$ the \textit{auxiliary state}.
\end{definition}

Recall that our second fundamental question was ``Given a circuit, does there exist a Hamiltonian that encodes it?'' Without auxiliaries, the answer was ``only if \Cref{eq:linprog} is feasible.'' With auxiliaries, the answer is always yes. This result is constructive and follows immediately from the theory of polynomial quadratization; see \cite[Proposition 2.1]{andrewisaac}. In this paper, we are mostly interested in what functions are computable with a fixed number of spins. Unfortunately, the indeterminacy of $\tilde z$ means that the problem of finding $H$ is now no longer a linear programming problem but a mixed integer-linear programming problem, since we need to specify $\tilde z$ so make the problem of finding $H$ linear. We can make this precise as follows:

\begin{proposition}\label{prop:aux_are_outs}
    A circuit $(n,m,f)$ is feasible with $k$ auxiliaries if and only if there exists a function $g:\Sigma^n \to \Sigma^k$ such that $(n,m+k,f\times g)$ is feasible without auxiliaries.  We call such a $g$ an \textit{auxiliary map}.
    \begin{proof}
        Let $H(x,y,z)$ be an Ising Hamiltonian that encodes $f$. We may assume that $H$ is generic by \Cref{prop:generic}. Select
        \begin{equation}
            g(x) = \argmin_{z\in\Sigma^k} H(x,f(x),z).
        \end{equation}
        Fix $x \in \Sigma^n$.  Suppose
        \begin{equation}
            (\widetilde y, \widetilde z) \in \argmin_{(y,z) \in \Sigma^{m+k}} H(x,y,z).
        \end{equation}
        Since $H$ encodes $f$, $\widetilde y = f(x)$.  Thus,
        \begin{align}
            \widetilde z \in \argmin_{z \in \Sigma^k} H(x,f(x),z)
            = g(x).
        \end{align}
        Therefore, $(\widetilde y, \widetilde z) = (f\times g)(x)$, which implies $H$ encodes $f\times g$. Conversely, suppose $(n,m+k,f\times g)$ is feasible.  Then it is encoded by an Ising Hamiltonian $H(x,y,z)$.  Thus, for each $x\in\Sigma^n$, if
        \begin{equation}
            (\widetilde y, \widetilde z) \in \argmin_{(y,z) \in \Sigma^{m+k}} H(x,y,z),
        \end{equation}
        then $(\widetilde y, \widetilde z) = (f\times g)(x)$.  In particular, $\widetilde y = f(x)$, which implies $H$ encodes $f$.
    \end{proof}
\end{proposition}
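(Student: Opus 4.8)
The plan is to prove the two implications separately; the reverse direction is a matter of re-reading definitions, while the forward direction requires first upgrading the given Hamiltonian to a generic one.

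For the reverse direction, suppose such a $g$ exists, so $(n,m+k,f\times g)$ is encoded by some Ising Hamiltonian $H(x,y,z)$ on $\Sigma^n\times\Sigma^m\times\Sigma^k$; that is, $\argmin_{(y,z)\in\Sigma^{m+k}} H(x,y,z) = (f\times g)(x) = (f(x),g(x))$ for every $x$. Now simply reinterpret the last $k$ coordinates as auxiliary spins rather than output spins: the conditional ground state over $(y,z)$ has $y$-component $f(x)$, witnessed by $\tilde z = g(x)$, which is precisely the statement that $H$ encodes $(n,m,f)$ with $k$ auxiliaries. Nothing else is needed here.

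For the forward direction, let $H(x,y,z)$ encode $(n,m,f)$ with $k$ auxiliaries. I would first replace $H$ by a generic Hamiltonian with the same property: the perturbation argument of \Cref{prop:generic} goes through with the natural modification, taking the solution energy gap to be $\delta := \min_{x}\bigl(\min_{(y,z):\,y\neq f(x)} H(x,y,z) - \min_{z'\in\Sigma^k} H(x,f(x),z')\bigr)$, which is positive exactly because $H$ encodes $f$ with auxiliaries; any perturbation of size less than $\delta$ cannot push a state with an incorrect $y$-value below one with the correct $y$-value, so the property survives while degeneracies are removed one at a time. With $H$ now generic, set $g(x) := \argmin_{z\in\Sigma^k} H(x,f(x),z)$, which is a well-defined single element. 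It remains to check that $H$, viewed as a Hamiltonian on $n$ inputs and $m+k$ outputs, encodes $(n,m+k,f\times g)$: fix $x$ and let $(\widetilde y,\widetilde z)$ be the unique minimizer of $H(x,\cdot,\cdot)$ over $\Sigma^{m+k}$. Since $H$ encodes $f$ with auxiliaries, $\widetilde y = f(x)$; and $\widetilde z$ must then minimize $z\mapsto H(x,f(x),z)$, for otherwise some other $z$ would yield a strictly lower energy, contradicting the minimality of $(\widetilde y,\widetilde z)$. Hence $(\widetilde y,\widetilde z) = (f(x),g(x)) = (f\times g)(x)$, as desired.

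The only real obstacle is the genericity step: without it $g$ need not be single-valued, and ``$(n,m+k,f\times g)$ is feasible without auxiliaries'' — which requires the conditional ground state to be unique — could fail even though the $y$-coordinate of the ground state is always pinned correctly. Once that is dealt with, the proposition is just the observation that output spins and auxiliary spins occupy formally identical positions in an Ising Hamiltonian, the sole difference being what we are permitted to demand of the conditional ground state.
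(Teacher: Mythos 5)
Your proposal is correct and follows essentially the same route as the paper: pick $g(x) = \argmin_{z} H(x,f(x),z)$ after upgrading to a generic $H$, verify that the unique conditional minimizer of $H$ is $(f(x),g(x))$, and note that the reverse direction is a matter of re-reading definitions. The one place where you are slightly more careful than the paper is the genericity step. The paper simply writes ``we may assume $H$ is generic by \Cref{prop:generic},'' but that lemma, both in statement and in proof, concerns Hamiltonians encoding $(n,m,f)$ \emph{without} auxiliaries (its solution energy gap $\delta$ is defined as $\min_x\min_{y\neq f(x)}H(x,y)-H(x,f(x))$, which does not parse for an $H$ on $\Sigma^n\times\Sigma^m\times\Sigma^k$). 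You explicitly observe that the perturbation argument still goes through once $\delta$ is replaced by $\min_x\bigl(\min_{(y,z):\,y\neq f(x)}H(x,y,z)-\min_{z'}H(x,f(x),z')\bigr)$, which is exactly the modification needed. This is a welcome bit of rigor; aside from it, the two proofs are the same.
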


It follows that we can treat auxiliary spins as extra outputs without loss of generality. For this reason, most of this document considers circuits with no auxiliary spins; if we choose to ignore some of the outputs, that is a special case. The problem of finding an auxiliary map with $k$ or fewer spins which renders some function $f$ feasible is in general very difficult, and out of the scope of the current paper, though we have previously published algorithms and empirical results regarding this problem in \cite{andrewisaac}.

\section{Residual Hamiltonian Diagrams}\label{sec:res_ham}

The main approach of this paper is based on a decoupling of the Hamiltonian into the input-output interactions and the output-output interactions. Since inputs are pinned, the input-output interactions are effectively affine, while the true nonlinearity arises from the output-output interactions. It is easy to see, for example, that setting all output-output coupling strengths to zero results in an ensemble of linear support vector machines. We would like to visualize and isolate the nonlinear separating behavior from the affine influence of the pinned inputs. To do this, we introduce a new energy function in which the affine influence is freely chosen:

\begin{definition}[Residual Ising Hamiltonian]    
    \label{def:res_ham}
    Given $J \in \R^{m \times m}_{UT}$, its \textit{residual Ising Hamiltonian} is given by
    \begin{align}
        \label{resid_ham}
        E_J(a, y) = a\cdot y + y^\t Jy = \sum_i a_i y_i + \sum_{i < j} J_{ij} y_i y_j.
    \end{align}
    \end{definition}

    \noindent 
    $E_J$ is the same as the original Hamiltonian given in \eqref{hamiltonian}, except we treat the $A_i$ as inputs \textit{per se}, rather than affine functions of the input spins. $E_J$ is associated to a \textit{residual Ising solution map} or \textit{ground state map}
    \begin{align}\label{eqn:ground-state-map}
        M_J(a) := \argmin_{y \in \Sigma^m} E_J(a,y).
    \end{align}
    In other words, given $J$ and $a$, $M_J(a)$ is the output state (or set of output states) that minimizes the residual Hamiltonian. For an Ising system whose Hamiltonian is given by \eqref{hamiltonian}, $M_J(A(x))$ the output state in which we expect to observe the system when the input is fixed at $x$ (at least in the low temperature limit). 

    \begin{figure}[h]
    \centering
    \includegraphics[width = 0.25\textwidth]{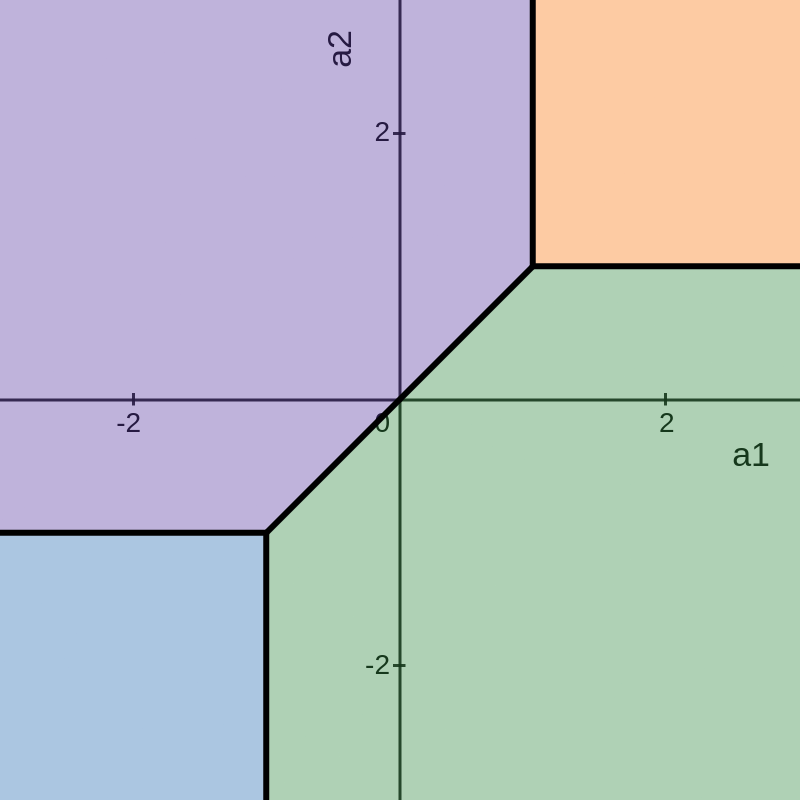}
    \quad
    \includegraphics[width = 0.25\textwidth]{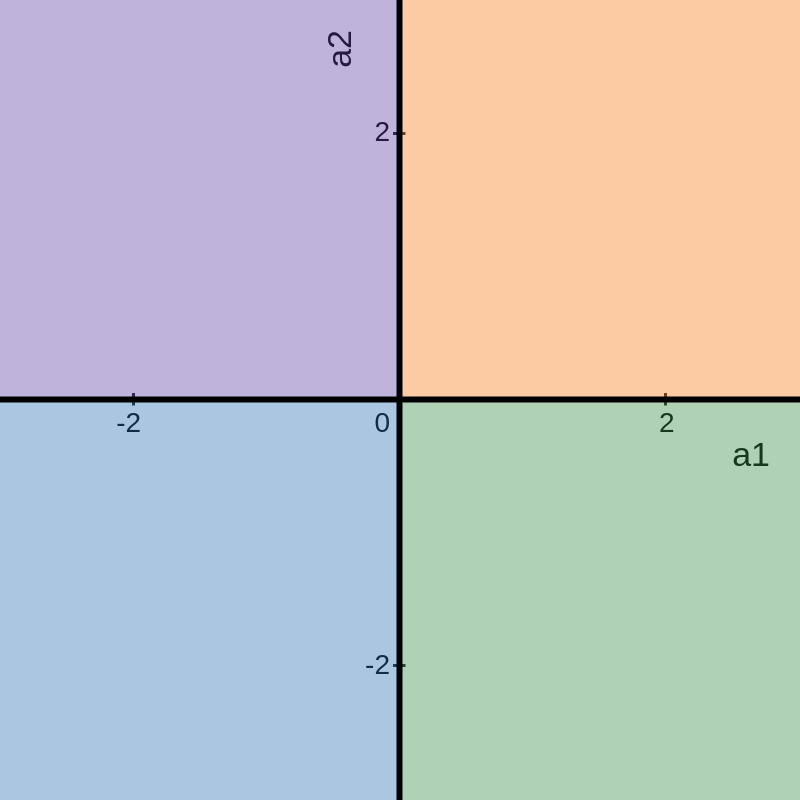}
    \quad
    \includegraphics[width = 0.25\textwidth]{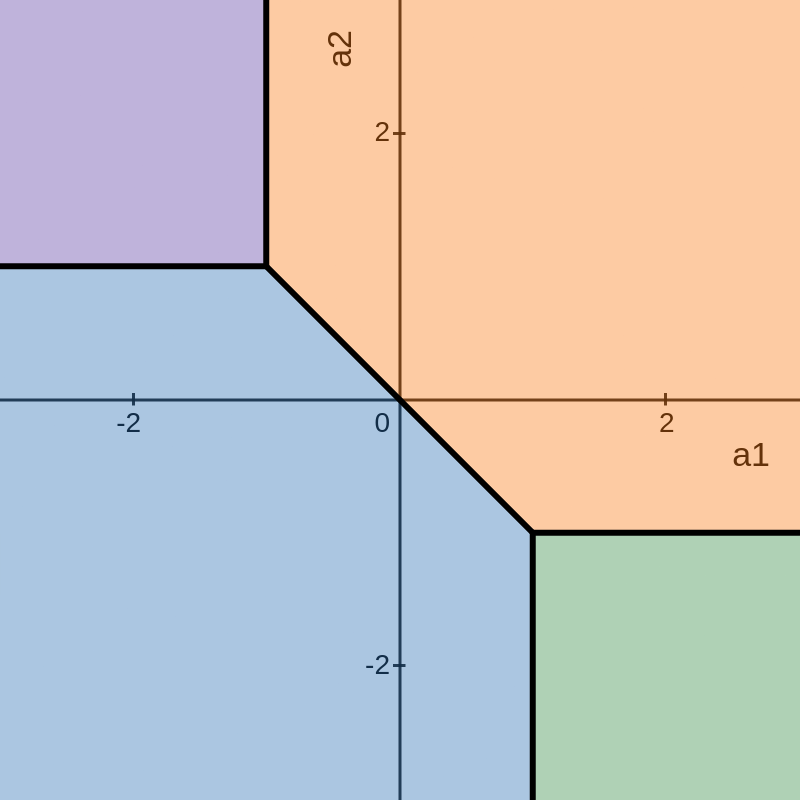}

    \caption{The minimizing partition $\mathcal{C}_J$ for a 2-bit-output circuit, plotted in $a$-space.
    The graphs are for $J_{12} = 1, 0,$ and $-1$, respectively. The colors are as follows: Orange: $\mathcal{C}_J((-1,-1))$ \quad Purple: $\mathcal{C}_J((1,-1))$ \quad Blue: $\mathcal{C}_J((1,1))$ \quad Green: $\mathcal{C}_J((-1,1))$ \quad Black: $B_J$}
    \label{fig:a_space}
\end{figure}
    
    Additionally, for $y \in \Sigma^m$ we define its \textit{minimizing cell} as
    \begin{align}
        \mathcal{C}_J(y) &:= \{a \in \R^m : M_J(a) = y\}\\
        &= \{a \in \R^m : E_J(a,z) > E_J(a,y),\ \forall z \in \Sigma^m\setminus\{y\}\}.
    \end{align} $\mathcal{C}_J(y)$ represents the region in $a$-space for which the output $y$ is the ground state.
    We define the \textit{boundary set} to be
    \begin{align}
        B_J := \{a \in \R^m : \#M_J(a) > 1\}.
    \end{align}
    That is, $B_J$ is the region where there are two or more ground states, and it is also the boundary between the minimizing cells.
    It follows that
    \begin{align}
        \mathcal{C}_J := \{\mathcal{C}_J(y) : y \in \Sigma^m\} \cup \{B_J\}
    \end{align}
    is a partition of $\R^m$ called the \textit{residual Ising minimizing partition}. This partition gives us a visualization of the nonlinear decision boundaries of an Ising circuit; for example, the case of two output spins is shown in \Cref{fig:a_space} and three output spins in \Cref{fig:3_output}.

Now, we can reformulate our discussion of feasibility and the influence of auxiliary spins in terms of these diagrams, where the affine affine part of the original Hamiltonian is conceptualized as an affine map of the input hypercube into the residual Ising minimizing partition:

\begin{SCfigure}[100]
    \includegraphics[width=0.25\textwidth]{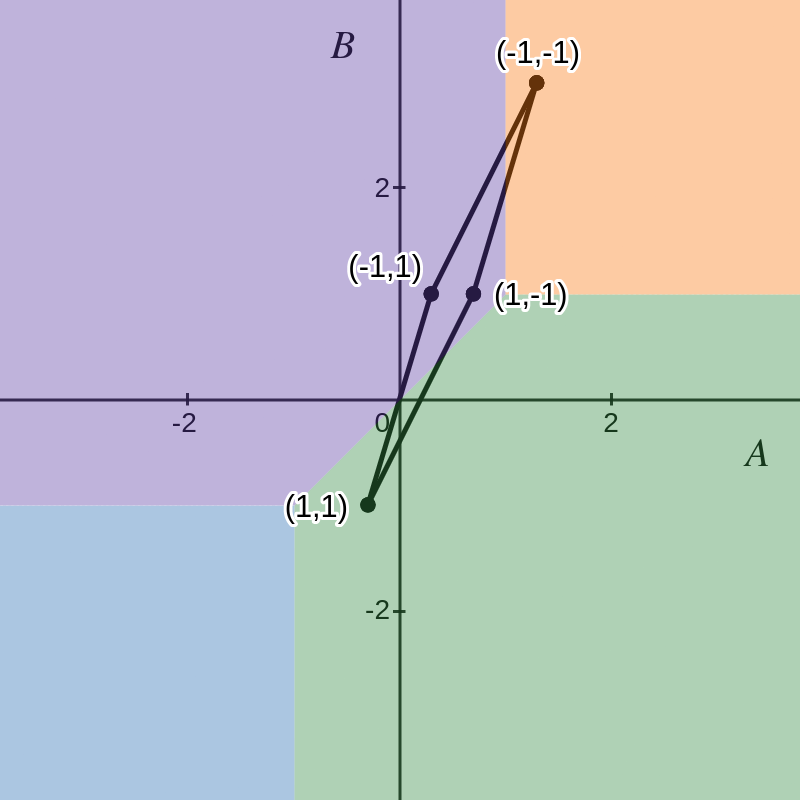}
    \caption{A visual proof of the feasibility of the circuit
    \begin{align}
        f(x_1, x_2) = (\textsf{XOR} \times \textsf{AND})(x_1,x_2).
    \end{align}
    The figure depicts the image of the input square $\Sigma^n$ under the affine map
    \begin{align}
        A(x) = \begin{pmatrix} -0.3 & -0.5 \\ -1 & -1 \end{pmatrix}
        \begin{pmatrix} x_1 \\ x_2 \end{pmatrix}
        + \begin{pmatrix} 0.5 \\ 1 \end{pmatrix}.
    \end{align}
    We have colored $a$-space according to the minimizing partition $\mathcal{C}_J$ with $J_{12} = 1$. Observe that each input $x$ is mapped to $\mathcal{C}_J(f(x))$.}
    \label{fig:xor_soln}
\end{SCfigure}

\begin{SCfigure}[100]
    \includegraphics[width=0.25\textwidth]{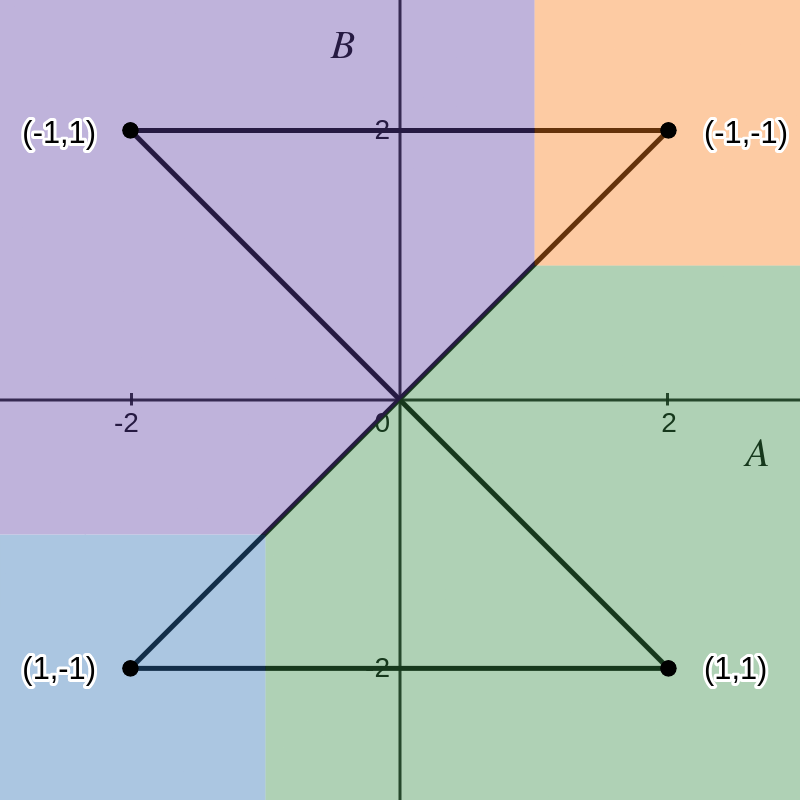}
    \caption{A visual ``proof" of the infeasibility of the circuit
    \begin{align}
        f(x_1, x_2) = (\textsf{XOR}(x_1,x_2),x_1).
    \end{align}
    Any map from $\Sigma^2$ to $a$-space that takes each input state to its corresponding minimizing cell would necessarily ``twist" the square, meaning it cannot be affine.}
    \label{fig:xor_x1}
\end{SCfigure}

\begin{theorem}[Feasibility in terms of the residual Hamiltonian]
    \label{thm:res_ham_feas}
    Given a fixed $J \in \R^{m\times m}_{UT}$, a circuit $(n,m,f)$ is feasible if and only if there exists $A \in \Aff(\R^n,\R^m)$ such that
    \begin{align}
        A(x) \in \mathcal{C}_J(f(x)), \quad \forall x \in \Sigma^n.
    \end{align}
    That is, a circuit is feasible if and only if there is an affine map taking each input state $x$ to the minimizing cell of its associated output state $f(x)$. Furthermore, the Ising Hamiltonian
    \begin{align}
        H(x,y) = A(x)\cdot y + y^\t Jy
    \end{align}
    encodes $(n,m,f)$.
    \begin{proof}
        Define $H$ as above.
        For each $x \in \Sigma^n$, we have
        \begin{align}
                H(x,y) > H(x, f(x)), \quad \forall y \in \Sigma^m\setminus\{f(x)\}
                &\iff E_J(A(x),y) > E_J(A(x),f(x)), \quad \forall y \in \Sigma^m\setminus\{f(x)\} \\
                &\iff M_J(A(x)) = f(x) \\
                &\iff A(x) \in \mathcal{C}_J(f(x)).
        \end{align}
    \end{proof}
\end{theorem}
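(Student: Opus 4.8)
The statement is a translation of definitions, so the plan is to unwind ``encodes'', the ground-state map $M_J$ of \eqref{eqn:ground-state-map}, and the minimizing cell $\mathcal{C}_J$ into a single chain of equivalences — one link per definition — and then quantify over the input states. There is no real analytic content; the only things needing setup are the normal form of the Hamiltonian and the non-degeneracy that makes $\argmin$ single-valued.

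First I would invoke the two reductions already in hand. By \Cref{subsubsec:reduction}, every Ising Hamiltonian on $\Sigma^n\times\Sigma^m$ can be rewritten, without changing which circuit it encodes, as $H(x,y)=A(x)\cdot y+y^\t Jy$ with $A\in\Aff(\R^n,\R^m)$ and $J\in\R^{m\times m}_{UT}$, and conversely each such pair $(A,J)$ gives a Hamiltonian of this shape; so once $J$ is fixed, everything is about the existence of a suitable $A$. By \Cref{prop:generic} I may also assume $H$ generic, so the minimizer defining $f_H$ is unique and the strict inequalities appearing in the definition of $\mathcal{C}_J$ line up exactly with what ``encodes'' demands (namely, no degeneracy in the ground state conditional on the input).

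Then the core step: fix $x\in\Sigma^n$ and write $a:=A(x)$. Since $H(x,\cdot)=E_J(a,\cdot)$ identically by \Cref{def:res_ham}, the three conditions ``$H(x,y)>H(x,f(x))$ for all $y\neq f(x)$'', ``$M_J(a)=f(x)$'', and ``$a\in\mathcal{C}_J(f(x))$'' are the same condition seen through three successive definitions. Running this over all $x\in\Sigma^n$ yields ``$H$ encodes $(n,m,f)$'' $\iff$ ``$A(x)\in\mathcal{C}_J(f(x))$ for every $x$'', which is at once the ``furthermore'' clause and the construction direction of the theorem: exhibiting such an $A$ exhibits an encoding Hamiltonian, hence feasibility.

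The one point I would treat with care is the converse together with the role of the fixed $J$. Feasibility as originally defined quantifies over \emph{all} Hamiltonians, and it genuinely depends on $J$ — e.g.\ with $J=0$ the cells $\mathcal{C}_J(y)$ are orthants, so only coordinatewise threshold functions are realizable, while a nonzero off-diagonal entry twists the partition (cf.\ \Cref{fig:a_space}). In the converse direction I would therefore read ``feasible'' through the reduction of \Cref{subsubsec:reduction}: an encoding Hamiltonian with output--output coupling matrix $J$, after that reduction, has an affine part $A$ with the required cell property. With that reading fixed, nothing beyond the chain of equivalences is needed and I anticipate no real obstacle.
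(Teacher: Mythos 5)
Your proof is correct and takes essentially the same route as the paper: a one-line chain of definitional equivalences, for each fixed $x$, from the encoding inequality $H(x,y) > H(x,f(x))$ through $E_J(A(x),\cdot)$ and $M_J(A(x))$ to membership $A(x) \in \mathcal{C}_J(f(x))$, then quantified over $x \in \Sigma^n$. Your caveat about the quantifier on $J$ is accurate and worth recording: the paper's proof in fact establishes the sharper ``furthermore'' biconditional (for the Hamiltonian $H(x,y) = A(x)\cdot y + y^\t J y$, encoding $(n,m,f)$ is equivalent to $A$ landing in the correct cells of $\mathcal{C}_J$), and the feasibility biconditional as literally phrased with ``given a fixed $J$'' only holds if one reads ``feasible'' as ``feasible by a Hamiltonian whose output--output coupling, after the reduction of \Cref{subsubsec:reduction}, is $J$'' — equivalently, with $J$ and $A$ existentially quantified together — precisely the reading you adopt.
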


\begin{SCfigure}[1]
    \centering
    \includegraphics[width=0.3\textwidth]{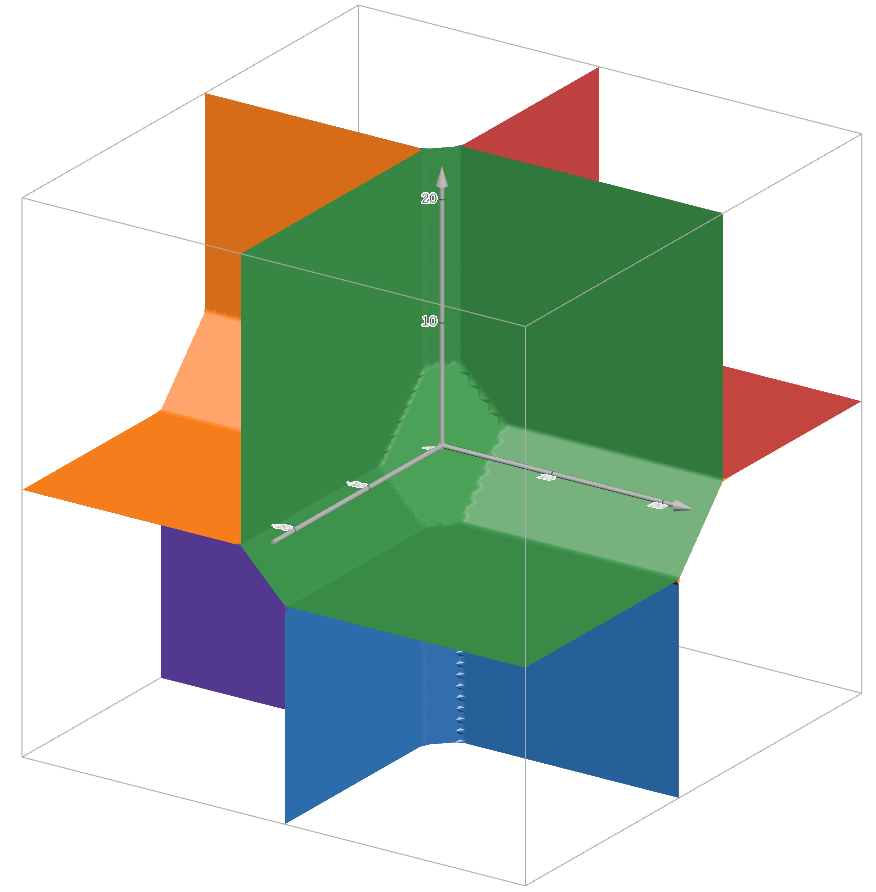}
    \caption{Visualization of the boundary set $B_J$ for a 3-bit-output circuit, with $J_{12} = 1, J_{13} = -2, J_{23} = 3$.}
    \label{fig:3_output}
\end{SCfigure}

\begin{figure}[h]
    \centering
    \includegraphics[width=0.25\textwidth]{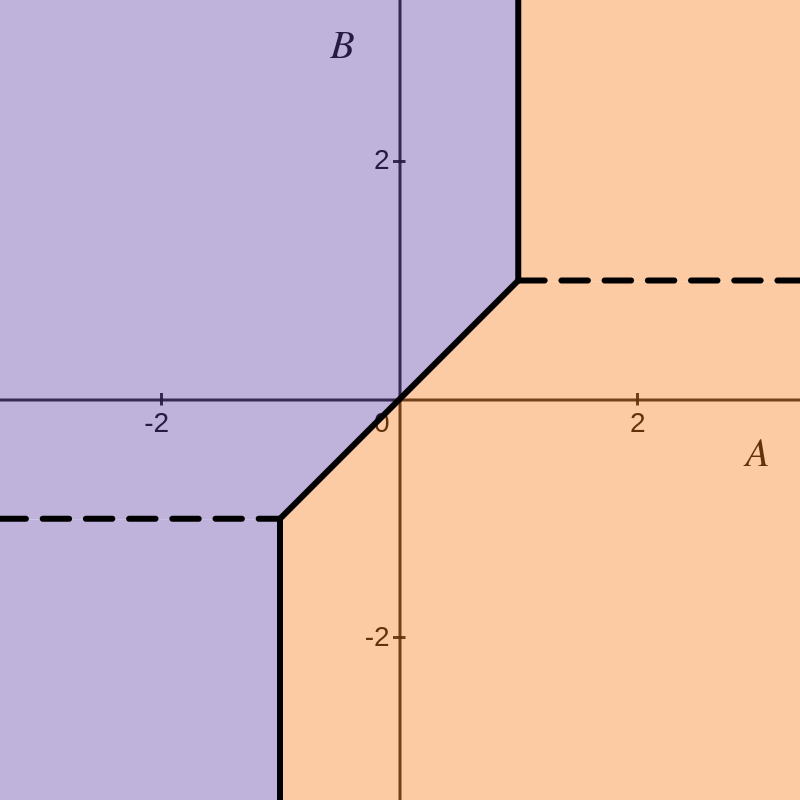}
    \quad
    \includegraphics[width=0.3\textwidth]{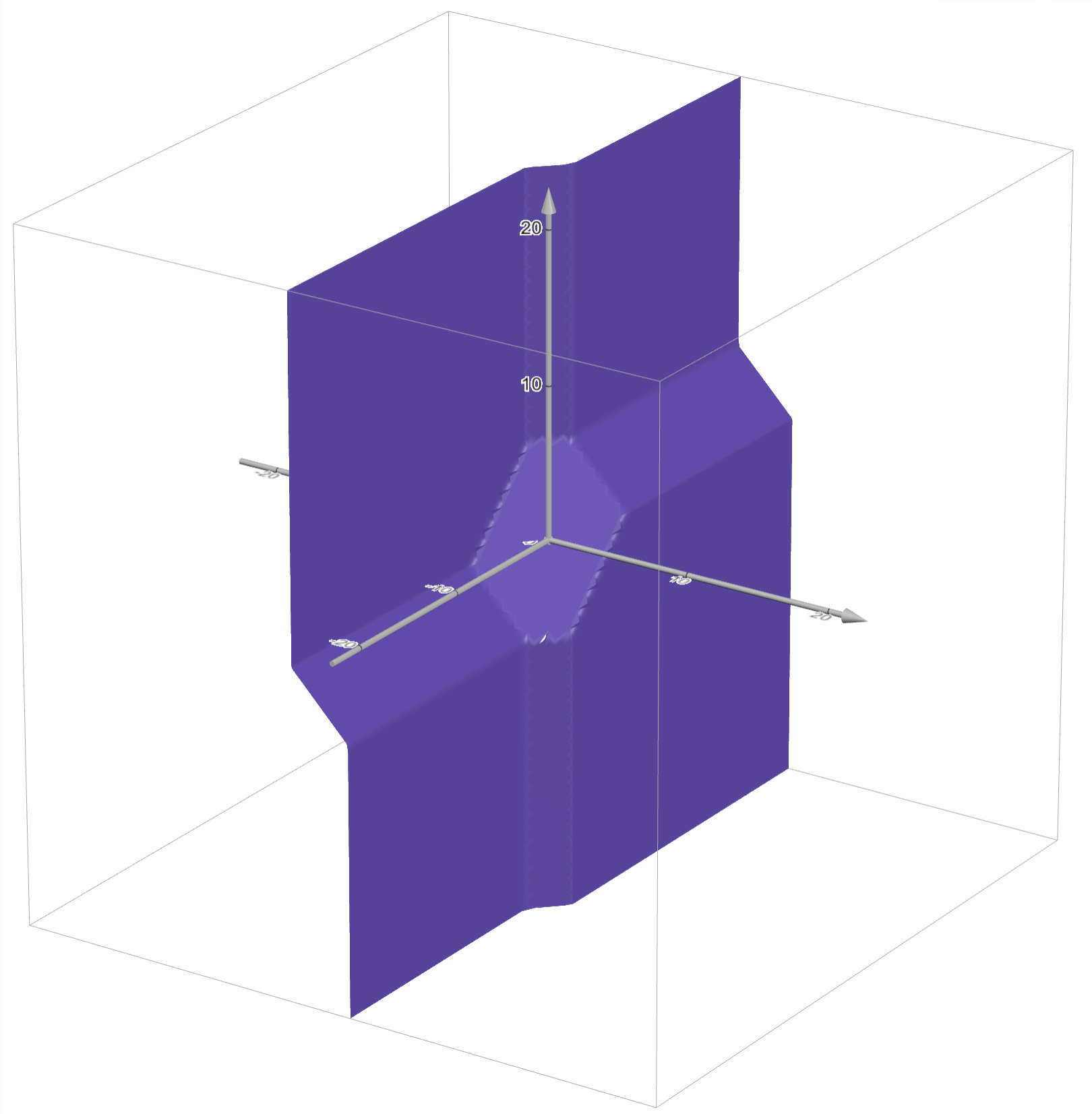}
    \caption{A visualization of $a$-space for shape $(n,1)$ circuits with one auxiliary (left) and two auxiliaries (right). The graphs are obtained by unioning minimizing cells that correspond to the same first output bit. 
    A circuit $(n,1,f)$ is feasible with one auxiliary if and only if there exists an affine map taking each $x \in \Sigma^n$ to the orange region if $f(x) = -1$ and to the purple region if $f(x) = 1$. 
    The same circuit is feasible with two auxiliaries if and only if there exists an affine map taking each input state $x$ to the right of the purple surface if $f(x) = -1$ and to the left if $f(x) = 1$.}
    \label{fig:feas_w_aux}
\end{figure}

\begin{remark}[Auxiliary Spins]
    Recall \Cref{prop:aux_are_outs} states that a circuit $(n,m,f)$ is feasible with $k$ auxiliaries if and only if there exists some auxiliary map $g: \Sigma^n \to \Sigma^k$ such that $(n,m+k,f\times g)$ is feasible.  Hence, we can often treat auxiliaries as outputs and vice versa. Additionally, the map $g$ is often not unique, and the choice of $g$ makes no difference with respect to feasibility.

    In terms of the residual Hamiltonian, treating some of the output bits as auxiliaries is equivalent to combining certain minimizing cells.
    For example, finding an Ising Hamiltonian that encodes $(n,1,f)$ with one auxiliary is equivalent to finding a $J \in \R^{2\times2}_{UT}$ and $A \in \Aff(\R^n,\R^2)$ such that
    \begin{align}
        A(x) \in \mathcal{C}_J((f(x), -1) \cup \mathcal{C}_J((f(x),1)), \quad \forall x \in \Sigma^n.
    \end{align}
    Taking the union of minimizing cells is reflective of the fact that we are agnostic about the value of the auxiliary bit, so long as the output bit is correct. See \Cref{fig:feas_w_aux} for a visualization of this fact.
\end{remark}

\section{Ising Circuits Superset Affine 1-NN Classifiers}
\label{sec:superset}

The residual Hamiltonian diagrams bear a strong resemblance to Voronoi diagrams, the decision partitions of 1-nearest-neighbor classifiers. This turns out not to be a coincidence, as certain Voronoi partitions can be transformed into residual Ising minimizing partitions by an affine map. 

\begin{definition}[Voronoi Diagram]
    For a countable subset $S \subseteq \R^n$, the \textit{Voronoi diagram} of $S$, denoted $\mathcal{V}_S$, is the partition of $\R^n$ into the \textit{Voronoi cells} 
    \begin{align}
        \mathcal{V}_S(p) := \{x \in \R^n : \|x - p\| < \| x - q \|,\ \forall q \in S \setminus \{p\}\}
    \end{align}
    for all $p \in S$; combined with a final element of the partition containing the remaining points of $X$, called the \textit{boundary set} of the diagram.  Each $\mathcal{V}_S(p)$ contains all the points in $\R^n$ that are closer to $p$ than any other point in $S$.  The boundary set contains the points in $\R^n$ that are equidistant from two or more points in $S$.
\end{definition}

\begin{figure}[h]
    \centering
    \includegraphics[width = 0.25\textwidth]{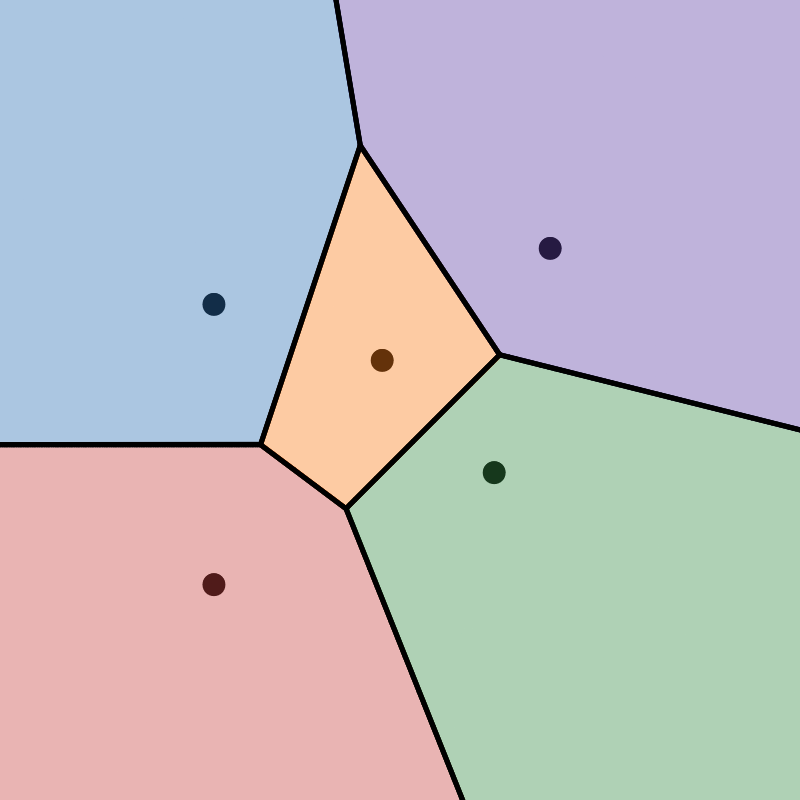}
    \caption{An example of a Voronoi diagram with $\#S = 5$.}
    \label{fig:voronoi}
\end{figure}

\begin{proposition}
    \label{prop:voronoipoly}
    For all $p \in S \subseteq \R^n$, $\mathcal{V}_S(p)$ is a convex polyhedron with H-representation\footnote{See the Wikipedia page \textit{Convex Polytope}.} $\{ P^S_{p,q} \}_{q \in S \setminus \{p\}}$, where
\begin{align}
    P^S_{p,q} := \left\{ x \in \R^n \middle| \langle p - q, x\rangle  + \frac{1}{2} (\|q\|^2 - \|p\|^2)  > 0 \right\}.
\end{align}
\begin{proof}
    Notice that
    \begin{align}
        \mathcal{V}_S(p) = \bigcap_{q\in S\setminus\{p\}} \{x \in \R^n: \|x - p\| < \|x - q\| \}.
    \end{align}
    For each $q \in S\setminus\{p\}$, we have
    \begin{align}
        \|x - p\| < \|x - q\|
        &\iff \|x - p\|^2 < \|x - q\|^2 \\
        &\iff \<x - p, x - p\> < \<x - q, x - q\> \\
        &\iff \|x\|^2 - 2\<p, x\> + \|p\|^2 < \|x\|^2 - 2\<q, x\> + \|q\|^2 \\
        &\iff -2\<p,x\> + 2\<q,x\> + \|p\|^2 - \|q\|^2 < 0 \\
        &\iff \<p - q, x\> + \frac{1}{2}(\|q\|^2 - \|p\|^2) > 0.
    \end{align}
\end{proof}
\end{proposition}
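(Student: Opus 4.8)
The plan is to exhibit $\mathcal{V}_S(p)$ directly as an intersection of open half-spaces, one for each competitor $q \in S \setminus \{p\}$, and to identify that half-space with $P^S_{p,q}$. First I would observe that, straight from the definition of the Voronoi cell, $\mathcal{V}_S(p) = \bigcap_{q \in S \setminus \{p\}} \{x \in \R^n : \|x-p\| < \|x-q\|\}$, since membership in $\mathcal{V}_S(p)$ is by definition the conjunction over all $q$ of the pairwise conditions $\|x-p\| < \|x-q\|$.

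The heart of the argument is the observation that each pairwise condition is actually affine in $x$. Because $\|x-p\|$ and $\|x-q\|$ are both nonnegative, $\|x-p\| < \|x-q\|$ holds if and only if $\|x-p\|^2 < \|x-q\|^2$. Expanding each side via $\|x-r\|^2 = \|x\|^2 - 2\langle r, x\rangle + \|r\|^2$, the quadratic term $\|x\|^2$ cancels from both sides, and after rearranging one is left with the strict affine inequality $\langle p - q, x\rangle + \frac{1}{2}(\|q\|^2 - \|p\|^2) > 0$, which is exactly the condition defining $P^S_{p,q}$. Hence $\{x \in \R^n : \|x-p\| < \|x-q\|\} = P^S_{p,q}$ for every $q$, and therefore $\mathcal{V}_S(p) = \bigcap_{q \in S \setminus \{p\}} P^S_{p,q}$.

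To conclude, each $P^S_{p,q}$ is an open half-space — the solution set of a single strict affine inequality — hence convex, and an intersection of convex sets is convex, so $\mathcal{V}_S(p)$ is convex. Moreover, having written the cell as the common solution set of the affine inequalities indexed by $\{P^S_{p,q}\}_{q \in S \setminus \{p\}}$, this collection is by definition an H-representation of $\mathcal{V}_S(p)$; thus $\mathcal{V}_S(p)$ is a convex polyhedron cut out by these half-spaces, as claimed.

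I do not expect a genuine obstacle here: the proof reduces to the one-line cancellation of the $\|x\|^2$ term, wrapped in the elementary facts that squaring preserves strict order on nonnegative reals and that an intersection of half-spaces is a convex polyhedron. The only point worth flagging is that when $S$ is infinite the H-representation has infinitely many facets, so ``polyhedron'' must be read in the generalized (possibly unbounded, possibly infinitely-faceted) sense; in every application in this paper $S$ is the finite vertex set of a parallelepiped, so this is a non-issue.
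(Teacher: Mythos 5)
Your proof is correct and follows essentially the same route as the paper's: decompose the Voronoi cell as an intersection over competitors $q$, then square the norm inequality, cancel $\|x\|^2$, and read off the affine half-space $P^S_{p,q}$. The extra remarks you add about convexity being preserved under intersection and about the generalized sense of ``polyhedron'' when $S$ is infinite are sound but not needed for the paper's use case (finite $S$).
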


\begin{proposition}
\label{prop:isingpoly}
    Let $J \in \R^{m\times m}_{UT}$.  For all $y \in \Sigma^m$, the minimizing cell $\mathcal{C}_J(y)$ is a convex polyhedron with H-representation $\{Q^J_{y,z}\}_{z\in \Sigma^m\setminus\{y\}}$, where
    \begin{align}
        Q^J_{y,z} := \left\{ a \in \R^m \middle| \<a, z-y\>  + \<J, z^{\otimes2} - y^{\otimes2}\>_F > 0 \right\}.
    \end{align}
    \begin{proof}
        Pick any $y \in \Sigma^m$. We can rewrite the definition of $\mathcal{C}_J(y)$ as follows:
    \begin{align}
        \mathcal{C}_J(y) &= \{a \in \R^m : M_J(a) = y\} \\
        &= \{a \in \R^m : E_J(a, z) > E_J(a, y),\ \forall z \in \Sigma^m \setminus \{y\}\}\\
        &= \bigcap_{z \in \Sigma^m \setminus \{y\}} \{a \in \R^m : E_J(a, z) > E_J(a, y)\} \\
        &= \bigcap_{z \in \Sigma^m \setminus \{y\}} \{a \in \R^m : E_J(a, z) - E_J(a, y) > 0\}.
        \label{eq:intersectcj}
    \end{align}
    Now, observe that these expressions are affine functions of $A$:
    \begin{align}
        E_J(a, z) - E_J(a, y)
        &= a\cdot z + z^\t Jz - (a\cdot y + y^\t Jy) \\
        &= \<a,z\> + \<J, z^{\otimes2}\>_F - \<a,y\> - \<J, y^{\otimes2}\>_F \\
        &= \<a,z - y\> + \<J, z^{\otimes2} - y^{\otimes2}\>_F.
    \end{align}
    It follows that the intersection in Equation \eqref{eq:intersectcj} is an intersection of open half-spaces, and hence that $\mathcal{C}_J(y)$ is a convex polyhedron with H-representation given by these half spaces.
    \end{proof}
\end{proposition}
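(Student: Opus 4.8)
The plan is to follow the proof of \Cref{prop:voronoipoly} essentially verbatim, with pairwise distance comparisons replaced by pairwise energy comparisons. First I would unwind the definitions: by \eqref{eqn:ground-state-map} and the definition of the minimizing cell, $a \in \mathcal{C}_J(y)$ holds exactly when $E_J(a,y) < E_J(a,z)$ for every $z \in \Sigma^m \setminus \{y\}$. Since $\Sigma^m$ is finite, this is a finite conjunction of strict inequalities, so
\[
\mathcal{C}_J(y) = \bigcap_{z \in \Sigma^m \setminus \{y\}} \{a \in \R^m : E_J(a,z) - E_J(a,y) > 0\},
\]
exactly as in \eqref{eq:intersectcj}.

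The only computation is the second step: show that for each fixed $z$, the map $a \mapsto E_J(a,z) - E_J(a,y)$ is affine in $a$ with the explicit form in the statement. Using $E_J(a,w) = a\cdot w + w^\t J w$ together with the identity $w^\t J w = \langle J, w w^\t\rangle_F = \langle J, w^{\otimes 2}\rangle_F$ (the Frobenius pairing of $J$ with the outer square, which for strictly upper-triangular $J$ reproduces $\sum_{i<j} J_{ij} w_i w_j$), the $a$-dependence is purely linear: $E_J(a,z) - E_J(a,y) = \langle a, z - y\rangle + \langle J, z^{\otimes 2} - y^{\otimes 2}\rangle_F$. Hence $\{a : E_J(a,z) - E_J(a,y) > 0\}$ is exactly $Q^J_{y,z}$, and since $z \neq y$ forces $z - y \neq 0$, this is a genuine open half-space.

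The third step is to conclude: a finite intersection of open half-spaces is by definition a convex polyhedron, and the family $\{Q^J_{y,z}\}_{z \in \Sigma^m \setminus \{y\}}$ is precisely an H-representation of it. As in the Voronoi case, I would note that ``polyhedron'' here is understood in the open sense matching the definition of the cells, so that $\mathcal{C}_J$ remains a partition of $\R^m$ with $B_J$ absorbing the boundary hyperplanes.

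I do not expect a genuine obstacle; the statement is a bookkeeping exercise parallel to \Cref{prop:voronoipoly}. The only points requiring care are the bilinear identity $w^\t J w = \langle J, w^{\otimes 2}\rangle_F$ and the compatibility of the strictly upper-triangular convention with it, and keeping every inequality strict so that the H-representation describes the open minimizing cell rather than its closure.
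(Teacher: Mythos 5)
Your proposal is correct and follows essentially the same route as the paper: unwind the definition of $\mathcal{C}_J(y)$ into a finite intersection of strict inequalities, observe that $E_J(a,z) - E_J(a,y) = \langle a, z-y\rangle + \langle J, z^{\otimes 2} - y^{\otimes 2}\rangle_F$ is affine in $a$, and conclude. The only slight addition you make beyond the paper's proof is explicitly noting that $z \neq y$ guarantees each $Q^J_{y,z}$ is a genuine half-space rather than degenerate, which is a harmless and reasonable bit of care.
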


\begin{remark}
    Comparing \Cref{fig:voronoi} to \Cref{fig:a_space} and \Cref{prop:voronoipoly} to \Cref{prop:isingpoly}, we see that the minimizing partition $\mathcal{C}_J$ looks quite similar to a Voronoi diagram, both visually and algebraically. In this section, we define a new notion of feasibility in terms of Voronoi diagrams, and link it to our original notion in \Cref{thm:vor_feas->feas}.
\end{remark}

\begin{definition}
    A circuit $(n,m,f)$ is \textit{Voronoi feasible} if there exists $B \in \Aff(\R^m, \R^n)$ such that
    \begin{align}
        x \in \mathcal{V}_{B(\Sigma^m)}(Bf(x)), \quad \forall x \in \Sigma^n.
    \end{align}
    In other words, every input state $x \in \Sigma^n$ is closer to $Bf(x)$ than it is to $By$ for any $y \in \Sigma^m\setminus\{f(x)\}$. Such a $B$ is called a \textit{Voronoi solution}.
\end{definition}

\begin{definition}[Pseudo-adjoint]
    Given $A = [T, b] \in \Aff(\R^n, \R^m)$, we define its \textit{pseudo-adjoint} as the function $A^\adj(x) = T^\t(x - b)$. The pseudo-adjoint can also be written as the affine map $[T^\t, -T^\t b] \in \Aff(\R^m, \R^n)$. It has the useful property that $A^\adj A(x) = T^\t Tx$.  In some ways, $A^\adj$ behaves like an inverse of $A$.
\end{definition}

\begin{lemma}\label{lem:inj_vor_feas->feas}    
    Suppose that $B = [T, b] \in \Aff(\R^m, \R^n)$ is injective. Define $J_T = (T^\t T)^{(UT)}$.
    Then for any $y \in \Sigma^m$, 
    \begin{align}
        -B^\adj \left( \mathcal{V}_{B(\Sigma^m)}(By) \right) = \mathcal{C}_{J_T}(y).
    \end{align}
    \begin{proof}
        It is sufficient to show that $-B^\adj$ induces a bijection between the H-representations of $\mathcal{V}_{B(\Sigma^m)}(By)$ and $\mathcal{C}_{J_T}(y)$. Since $B$ is injective, it is a bijection $\Sigma^m \setminus \{y\} \lra B(\Sigma^m) \setminus \{By\}$. Therefore, it is sufficient to show that
        \begin{align}
           -B^\adj \left( P^{B(\Sigma^m)}_{By,Bz} \right) = Q^{J_T}_{y,z}, \quad \forall z \in \Sigma^m\setminus\{y\}.
        \end{align}
        Pick any such $z$, and pick any $x \in \R^n$. It is sufficient to show that 
        \begin{align}
            \< By - Bz, x\> + \frac{1}{2} (\|Bz\|^2 - \|By\|^2) = \<-B^\adj x, z-y \> + \<J_T, z^{\otimes2} - y^{\otimes2}\>_F.
        \end{align}
        We begin by applying the definition of $B$ and re-arranging to acquire the first term:
        \begin{align}
            \langle By - Bz, x\rangle + \frac{1}{2} (\|Bz\|^2 - \|By\|^2) 
            &= \langle Ty + b - Tz - b, x\rangle + \frac{1}{2} (\|Tz + b\|^2 - \|Ty + b\|^2)\\
            &= \langle y-z, T^\t x\rangle + \frac{1}{2} (\|Tz\|^2 + 2\langle Tz, b\rangle - \|Ty\|^2 - 2\langle Ty, b\rangle)\\
            &= \langle y-z, T^\t x - T^\t b\rangle + \frac{1}{2} (\|Tz\|^2 - \|Ty\|^2) \\
            &= \langle -B^\adj x, z-y \rangle + \frac{1}{2} (\|Tz\|^2 - \|Ty\|^2).
            \label{eq:interres}
        \end{align}
        Now, we analyze the second term. By \Cref{cor:frob_formulas},
        \begin{align}
            \|Tz\|^2 - \|Ty\|^2
            &= \<T^\t T, z^{\otimes2}\>_F - \<T^\t T, y^{\otimes2}\>_F \\
            &= \<T^\t T, z^{\otimes2} - y^{\otimes2}\>_F \\
            &= \sum_{j,k = 1}^m (T^\t T)_{jk} (z^{\otimes2} - y^{\otimes2})_{jk} \\
            &= \sum_{\substack{j,k=1, \\ j \ne k}}^m (T^\t T)_{jk} (z^{\otimes2} - y^{\otimes2})_{jk}
            + \sum_{j=1}^m (T^\t T)_{jj} (z^{\otimes2} - y^{\otimes2})_{jj}.
        \end{align}
        The second sum is zero since $z_j^2 = y_j^2 = 1$ for all $j$. By symmetry of $T^\t T$ and of $z^{\otimes2} - y^{\otimes2}$, the first sum is equal to
        \begin{align}
             2\sum_{j<k} (T^\t T)_{jk} (z^{\otimes2} - y^{\otimes2})_{jk}
             = 2\left\<(T^\t T)^{(UT)}, z^{\otimes2} - y^{\otimes2}\right\>_F
             = 2\<J_T, z^{\otimes2} - y^{\otimes2}\>_F.
        \end{align}
        Substituting into Equation \eqref{eq:interres}, we obtain
        \begin{align}
            \langle By - Bz, x\rangle + \frac{1}{2} (\|Bz\|^2 - \|By\|^2)
            = \langle -B^\adj x, z-y \rangle + \left\<J_T, z^{\otimes2} - y^{\otimes2}\right\>_F,
        \end{align}
        as required.
    \end{proof}
\end{lemma}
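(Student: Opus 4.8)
The plan is to prove the set equality by comparing the H-representations of the two polyhedra furnished by \Cref{prop:voronoipoly} and \Cref{prop:isingpoly} and transporting one system of half-spaces onto the other through the affine map $-B^\adj$. First I would extract two consequences of the hypothesis that $B=[T,b]$ is injective: (i) $T$ has trivial kernel, so $T^\t$ is surjective onto $\R^m$ and hence so is the affine map $-B^\adj(x)=-T^\t(x-b)$; and (ii) $z\mapsto Bz$ is injective on $\Sigma^m$, so that $B(\Sigma^m)\setminus\{By\}=\{Bz:z\in\Sigma^m\setminus\{y\}\}$ with no coincidences, which lets us index the facets of $\mathcal{V}_{B(\Sigma^m)}(By)$ and those of $\mathcal{C}_{J_T}(y)$ by the common set $\Sigma^m\setminus\{y\}$.

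The computational heart is the identity, valid for all $x\in\R^n$ and all $z\in\Sigma^m\setminus\{y\}$,
\begin{align}
    \langle By-Bz,\,x\rangle + \frac{1}{2}\bigl(\|Bz\|^2-\|By\|^2\bigr) &= \langle -B^\adj x,\; z-y\rangle + \langle J_T,\; z^{\otimes2}-y^{\otimes2}\rangle_F.
\end{align}
I would prove it by expanding $B=[T,b]$ on the left-hand side: $By-Bz=T(y-z)$ turns the linear term into $\langle y-z,\,T^\t x\rangle$, and expanding $\|Tz+b\|^2-\|Ty+b\|^2$ produces $\|Tz\|^2-\|Ty\|^2$ together with a cross term $2\langle z-y,\,T^\t b\rangle$ that combines with $\langle y-z,\,T^\t x\rangle$ to give precisely $\langle -B^\adj x,\,z-y\rangle$. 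It then remains to match $\frac{1}{2}(\|Tz\|^2-\|Ty\|^2)$ with $\langle J_T,\,z^{\otimes2}-y^{\otimes2}\rangle_F$; using $\|Tv\|^2=\langle T^\t T,v^{\otimes2}\rangle_F$ from \Cref{cor:frob_formulas}, the symmetry of $T^\t T$, and the fact that $z^{\otimes2}-y^{\otimes2}$ is symmetric with vanishing diagonal (because $z_i^2=y_i^2=1$), the diagonal of $T^\t T$ drops out and the off-diagonal contribution equals twice that of its strictly upper-triangular part $J_T=(T^\t T)^{(UT)}$, which accounts for the factor $\frac{1}{2}$.

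Given the identity, its left side is positive exactly when $x\in P^{B(\Sigma^m)}_{By,Bz}$ and its right side is positive exactly when $-B^\adj x\in Q^{J_T}_{y,z}$, so intersecting over all $z\in\Sigma^m\setminus\{y\}$ and applying the two H-representations yields the equivalence $x\in\mathcal{V}_{B(\Sigma^m)}(By)\iff -B^\adj x\in\mathcal{C}_{J_T}(y)$ for every $x\in\R^n$. To conclude I pass from this preimage statement to the asserted forward-image equality: ``$\subseteq$'' is immediate from the equivalence, and for ``$\supseteq$'' I take $a\in\mathcal{C}_{J_T}(y)$, invoke surjectivity of $-B^\adj$ from (i) to choose $x$ with $-B^\adj x=a$, and conclude $x\in\mathcal{V}_{B(\Sigma^m)}(By)$, hence $a\in -B^\adj\bigl(\mathcal{V}_{B(\Sigma^m)}(By)\bigr)$.

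I expect the only point requiring genuine care to be this last step: $-B^\adj$ is typically not injective (when $n>m$), so the image of the Voronoi cell is not a priori well-behaved, and it is precisely the injectivity of $B$ --- routed through surjectivity of $T^\t$ --- that forces the forward image to exhaust $\mathcal{C}_{J_T}(y)$ rather than land in a proper subset of it. One can reinforce this geometrically, perhaps as a remark: the fibers of $-B^\adj$ are translates of $\ker T^\t$, which is orthogonal to every facet normal $T(y-z)$, so each fiber lies entirely on one side of each half-space $P^{B(\Sigma^m)}_{By,Bz}$; hence passing to images commutes with the relevant intersection. Everything else is the routine expansion sketched above.
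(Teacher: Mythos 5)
Your proposal is correct and follows essentially the same route as the paper: reduce to matching H-representations facet by facet, establish the central identity
\begin{align}
    \langle By - Bz, x\rangle + \tfrac{1}{2}\bigl(\|Bz\|^2 - \|By\|^2\bigr) = \langle -B^\adj x, z-y\rangle + \langle J_T, z^{\otimes2} - y^{\otimes2}\rangle_F
\end{align}
by expanding $B=[T,b]$ and using \Cref{cor:frob_formulas} with the vanishing-diagonal and symmetry observations, and then pass to the set equality. The computational core is identical.

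Where you go beyond the paper is in the final logical step, and it is worth flagging that what you treat carefully the paper treats loosely. The identity, read verbatim, says that $x\in P^{B(\Sigma^m)}_{By,Bz}$ if and only if $-B^\adj x\in Q^{J_T}_{y,z}$, i.e.\ it identifies each $P_z$ with the \emph{preimage} $(-B^\adj)^{-1}(Q_z)$, not directly with a set whose forward image is $Q_z$. The paper asserts $-B^\adj(P_z)=Q_z$ and then that this ``induces a bijection between the H-representations,'' but forward images do not commute with intersections in general, so the stated facts alone do not immediately give $-B^\adj\bigl(\bigcap_z P_z\bigr)=\bigcap_z Q_z$. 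Your argument closes both of these small gaps cleanly: since preimages do commute with intersections, $\mathcal{V}_{B(\Sigma^m)}(By)=\bigcap_z P_z=(-B^\adj)^{-1}\bigl(\bigcap_z Q_z\bigr)=(-B^\adj)^{-1}\bigl(\mathcal{C}_{J_T}(y)\bigr)$, and then surjectivity of $-B^\adj$ (which you correctly derive from injectivity of $T$, hence surjectivity of $T^\t$) gives the forward-image equality. Your geometric remark about the fibers of $-B^\adj$ being translates of $\ker T^\t$, orthogonal to every facet normal $T(y-z)$, is a nice alternative justification and explains \emph{why} the image commutes with the intersection here even though it does not in general. In short: same proof, but your write-up is tighter on the one step the paper elides.
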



\begin{lemma}\label{lem:vor_feas->inj_vor_feas}
    If an Ising circuit is Voronoi feasible, then it has an injective Voronoi solution, meaning that $B$ is injective when restricted to $\Sigma^m$.
    \begin{proof}
        (Sketch). The proof closely follows the argument for \Cref{prop:generic}. We pick a solution $B$ and enumerate all degeneracies, i.e. $x \neq y$ such that $B(x) = B(y)$. Then we perturb $B$ very slightly to remove one degeneracy, where the perturbation is small enough that it cannot create new degeneracies, and points in $\Sigma^n$ stay in the same cell. Then iterate until we have a solution with no degeneracies, i.e. an injective Voronoi solution. Details are left to the reader. 
    \end{proof}
\end{lemma}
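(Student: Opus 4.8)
The plan is to imitate the proof of \Cref{prop:generic} almost verbatim, trading ``degenerate energy level'' for ``identified pair of output states'' and ``add a small generic Hamiltonian'' for ``add a small generic affine perturbation''. For $B \in \Aff(\R^m,\R^n)$ set
\[
 D(B) := \#\{(p,q) \in \Sigma^m\times\Sigma^m : p \ne q,\ B(p) = B(q)\},
\]
a finite quantity that vanishes exactly when $B$ is injective on $\Sigma^m$. Starting from an arbitrary Voronoi solution $B$ with $D(B) > 0$, I would produce a Voronoi solution $\hat B$ with $D(\hat B) < D(B)$ and then induct downward on $D$.

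For the reduction step, pick $p \ne q$ with $c := B(p) = B(q)$, and observe first that neither $p$ nor $q$ lies in $\im f$: if $f(x_0) = p$ for some $x_0 \in \Sigma^n$, then the defining inequality of Voronoi feasibility would give $\|x_0 - B(p)\| < \|x_0 - B(q)\|$, contradicting $B(p) = B(q)$. Consequently the \emph{solution margin}
\[
 \delta := \min_{x\in\Sigma^n}\ \min_{y\in\Sigma^m\setminus\{f(x)\}}\bigl(\|x - B(y)\| - \|x - B(f(x))\|\bigr)
\]
is a minimum of finitely many strictly positive numbers, so $\delta > 0$; in particular every $x\in\Sigma^n$ is strictly farther from $c$ than from its own site $B(f(x))$. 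Now fix any $R \in \Aff(\R^m,\R^n)$ with $R(p) \ne R(q)$ and set $\hat B := B + \eta R$ with $\eta > 0$ to be taken small.

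I would then verify that for all sufficiently small $\eta$: (i) $\hat B(p) \ne \hat B(q)$, since $\hat B(p) - \hat B(q) = \eta\,(R(p) - R(q)) \ne 0$; (ii) no previously distinct pair collides, because each of the finitely many positive numbers $\|\hat B(r) - \hat B(s)\|$ (over pairs with $B(r) \ne B(s)$) depends continuously on $\eta$; and (iii) the Voronoi condition survives, because each of the finitely many expressions $\|x - \hat B(y)\| - \|x - \hat B(f(x))\|$ is continuous in $\eta$ and at least $\delta > 0$ at $\eta = 0$. Items (i)–(ii) give $D(\hat B) \le D(B) - 2 < D(B)$, while item (iii) says $\hat B$ is still a Voronoi solution, so iterating terminates (by finiteness of $D$) at a Voronoi solution injective on $\Sigma^m$.

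The step I expect to be delicate is (iii): splitting the merged site $c$ into two nearby distinct sites could a priori shove some $x \in \Sigma^n$ across the freshly created cell wall between $\mathcal V_{\hat B(\Sigma^m)}(\hat B(p))$ and $\mathcal V_{\hat B(\Sigma^m)}(\hat B(q))$. The resolution is exactly the observation that $p, q \notin \im f$: no input state is assigned to $c$, so every $x$ carries strictly positive margin against both $B(p)$ and $B(q)$, and a small enough perturbation cannot reverse any of the (finitely many, strict) inequalities defining Voronoi feasibility. This is the only place the strictness built into the Voronoi cells is essential; the rest is the same bookkeeping as in \Cref{prop:generic}, and I would relegate those continuity estimates to the reader, just as the sketch does.
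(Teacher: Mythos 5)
Your proposal is correct and follows the same perturbative strategy as the paper's sketch, but it supplies the one observation the sketch elides under ``details are left to the reader'': that if $B(p)=B(q)$ with $p\ne q$, then neither $p$ nor $q$ can lie in $\im f$. This is precisely what makes step (iii) --- the worry that splitting a collapsed site might push some $x\in\Sigma^n$ across the newly created cell wall --- a non-issue: since no input is assigned to either $p$ or $q$, every $x$ already has strictly positive margin against the merged site, and the finitely many strict inequalities defining the Voronoi solution are stable under a small perturbation. The paper's sketch simply asserts ``points in $\Sigma^n$ stay in the same cell''; your argument is what justifies it. One presentational note: your $p,q\notin\im f$ observation relies on reading Voronoi feasibility as the ``in other words'' form of the paper's definition (strictly closer to $Bf(x)$ than to $By$ for every $y\in\Sigma^m\setminus\{f(x)\}$), rather than literally as membership in a Voronoi cell of the \emph{set} $B(\Sigma^m)$, where coincident images would be silently merged. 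This is the correct reading --- under the literal set-based version the constant map would vacuously be a Voronoi solution for every circuit, which would contradict \Cref{cor:threshold} --- but it is worth flagging, since your $\delta>0$ claim is exactly where the two readings diverge.
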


\begin{theorem}\label{thm:vor_feas->feas}
    If a circuit $(n,m,f)$ is Voronoi feasible, then it is feasible.
    Furthermore, if $B = [T,b]$ is an injective Voronoi solution, then the Hamiltonian
    \begin{align}
        H(x,y) = -B^\adj (x)\cdot y + \<J_T, y^{\otimes2}\>_F
    \end{align}
    encodes $(n,m,f)$, where $J_T = (T^\t T)^{(UT)}$.
    \begin{proof}
        By \Cref{lem:vor_feas->inj_vor_feas}, we know $(n,m,f)$ has an injective Voronoi solution $B = [T,b] \in \Aff(\R^m,\R^n)$.
        Let $H$ be defined as above. Then
        \begin{align}
            \text{$B$ is a Voronoi solution}
            &\iff x \in \mathcal{V}_{B(\Sigma^m)}(Bf(x)), \quad \forall x \in \Sigma^n \\
            &\iff -B^\adj (x) \in -B^\adj(\mathcal{V}_{B(\Sigma^m)}(Bf(x))), \quad \forall x \in \Sigma^n \\ 
            &\iff -B^\adj(x) \in \mathcal{C}_{J_T}(f(x)), \quad \forall x \in \Sigma^n
            & \text{(\Cref{lem:inj_vor_feas->feas})}\\
            &\iff \text{$H$ encodes $(n,m,f)$}
            & \text{(\Cref{thm:res_ham_feas})}.
        \end{align}
    \end{proof}
\end{theorem}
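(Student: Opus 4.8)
The plan is to read off the theorem from the two preceding lemmas together with \Cref{thm:res_ham_feas}; essentially all of the geometric content has already been isolated, so what remains is assembly. First I would handle the bare implication ``Voronoi feasible $\Rightarrow$ feasible'': invoke \Cref{lem:vor_feas->inj_vor_feas} to replace the given Voronoi solution with an \emph{injective} one $B = [T,b] \in \Aff(\R^m,\R^n)$, which is exactly the hypothesis under which \Cref{lem:inj_vor_feas->feas} applies. Then the ``furthermore'' clause is proved directly and subsumes the first sentence.

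So, given an injective Voronoi solution $B=[T,b]$, set $J_T := (T^\t T)^{(UT)} \in \R^{m\times m}_{UT}$ and $A := -B^\adj$, which --- since $B^\adj(x) = T^\t(x-b)$ --- is a genuine element of $\Aff(\R^n,\R^m)$. For each fixed $x\in\Sigma^n$, Voronoi feasibility gives $x \in \mathcal{V}_{B(\Sigma^m)}(Bf(x))$, hence, taking the image under $-B^\adj$ and applying \Cref{lem:inj_vor_feas->feas},
\begin{align}
    A(x) = -B^\adj(x) \in -B^\adj\!\left( \mathcal{V}_{B(\Sigma^m)}(Bf(x)) \right) = \mathcal{C}_{J_T}(f(x)).
\end{align}
Thus $A$ carries each input state into the minimizing cell of its prescribed output, so \Cref{thm:res_ham_feas}, applied with the fixed matrix $J = J_T$, tells us that $(n,m,f)$ is feasible and that $H(x,y) = A(x)\cdot y + y^\t J_T y$ encodes it. To match the form in the statement I would finally rewrite $y^\t J_T y = \<J_T, y^{\otimes2}\>_F$, which is the identity $z^\t J z = \<J, z^{\otimes2}\>_F$ already used in the proof of \Cref{prop:isingpoly}; this yields $H(x,y) = -B^\adj(x)\cdot y + \<J_T, y^{\otimes2}\>_F$ exactly.

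I do not expect a genuine obstacle: the nonlinear-geometry step is entirely \Cref{lem:inj_vor_feas->feas} and the degeneracy-removal step is \Cref{lem:vor_feas->inj_vor_feas}, both already in hand. The only points requiring care are bookkeeping --- checking that $-B^\adj \in \Aff(\R^n,\R^m)$ and that $J_T$ is strictly upper triangular so that \Cref{thm:res_ham_feas} genuinely applies --- and noticing that for this direction only the forward set-inclusion $x\in S \Rightarrow -B^\adj(x) \in -B^\adj(S)$ is needed, not the reverse. If one wanted the sharper biconditional, namely that $B$ is a Voronoi solution iff $H$ encodes $(n,m,f)$, one would additionally have to check that $-B^\adj$ sends the Voronoi boundary set into $B_{J_T}$; this follows from disjointness of the cells plus the same affine identity as in \Cref{lem:inj_vor_feas->feas}, but it is not needed for the statement as given.
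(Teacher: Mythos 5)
Your proposal is correct and follows essentially the same route as the paper: reduce to an injective Voronoi solution via \Cref{lem:vor_feas->inj_vor_feas}, push through $-B^\adj$ using \Cref{lem:inj_vor_feas->feas}, and finish with \Cref{thm:res_ham_feas}. If anything your version is slightly more careful --- the paper writes the middle step as a biconditional, which would require $-B^\adj$ to be injective on $\R^n$ (false when $n>m$), whereas you explicitly note that only the forward implication is needed.
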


\section{Elimination of Local Minima}
\label{sec:localminima}

Like neural network optimization landscapes, the energy landscapes of Ising systems may be more or less regular. Landscapes which are quite noisy and spiky will clearly have different dynamics to those which are relatively convex, with a single deep minimum. To exert design authority over the shape of energy landscapes, we will begin by defining a precise way to model their qualitative structure in terms of graphs.

\begin{definition}[Hamming ball]
    For $x \in \Sigma^n$, the \textit{(closed) ball of radius $r$ centered at $x$} is
    \begin{align}
        B_r(x) = \{y \in \Sigma^n: d(x,y) \le r\},
    \end{align}
    where $d(x,y)$ is the Hamming distance between $x$ and $y$.
\end{definition}

\begin{definition}[Local and global minima]
    \label{def:loc_glob_min}
    Given an Ising Hamiltonian $H: \Sigma^n\times\Sigma^m \to \R$ and a fixed input state $x \in \Sigma^n$, a \textit{global minimum} of $H(x,\cdot)$ is an output state $y \in \Sigma^m$ that has lower energy than any other output state. More precisely, $y$ is a global minimum if and only if
    \begin{align}
        \label{eq:trad_constraints}
        \forall z \in \Sigma^m\setminus\{y\}: H(x,z) \ge H(x,y).
    \end{align}
    A \textit{local minimum} of $H(x,\cdot)$ is an output $y \in \Sigma^m$ that has lower energy than every output Hamming distance 1 from $y$. In other words,
    \begin{align}
        \forall z \in B_1(y)\setminus\{y\}: H(x,z) \ge H(x,y).
    \end{align}
    We say $y$ is a \textit{strict global/local minimum} if the above inequalities are strict.
    We define local and global minima of a residual Ising Hamiltonian $E_J(a,\cdot)$ similarly.
\end{definition}

\begin{proposition}
    \label{prop:global_min}
    An Ising Hamiltonian $H:\Sigma^n\times\Sigma^m \to \R$ encodes a circuit $(n,m,f)$ if and only if the output state $f(x)$ is a strict global minimum for every input state $x \in \Sigma^n$.
\end{proposition}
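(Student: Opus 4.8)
The plan is to unwind the definition of ``encodes'' and match it directly with the definition of a strict global minimum in \Cref{def:loc_glob_min}. This is essentially a bookkeeping argument, with one genuinely substantive step: computing the $\beta\to\infty$ limit and checking that a tie in the minimum really does obstruct encoding.

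First I would recall that, by definition, $H$ encodes $(n,m,f)$ means that for every input state $x \in \Sigma^n$,
\[
\lim_{\beta\to\infty}\mathbb{P}\bigl[(\sigma_{n+1},\dots,\sigma_{n+m}) = f(x) \,\big|\, (\sigma_1,\dots,\sigma_n) = x\bigr] = 1,
\]
the probability being taken with respect to the Boltzmann--Gibbs distribution of $H$ on $\Sigma^{n+m}$. Since the partition function cancels in the conditioning, this conditional probability is exactly
\[
p_\beta(x) \;=\; \frac{e^{-\beta H(x,f(x))}}{\sum_{y \in \Sigma^m} e^{-\beta H(x,y)}}.
\]

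Next I would evaluate $\lim_{\beta\to\infty} p_\beta(x)$. Put $\mu(x) := \min_{y\in\Sigma^m} H(x,y)$ and $k(x) := \#\{y \in \Sigma^m : H(x,y) = \mu(x)\}$. Dividing numerator and denominator by $e^{-\beta\mu(x)}$, every summand $e^{-\beta(H(x,y)-\mu(x))}$ with $H(x,y) > \mu(x)$ tends to $0$ as $\beta\to\infty$, while the $k(x)$ summands with $H(x,y) = \mu(x)$ each tend to $1$; the finitely many terms make the limit of the ratio rigorous. Hence $\lim_{\beta\to\infty} p_\beta(x)$ equals $1/k(x)$ if $H(x,f(x)) = \mu(x)$ and $0$ otherwise. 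Therefore this limit equals $1$ if and only if $f(x)$ attains the minimum of $H(x,\cdot)$ \emph{and} $k(x) = 1$, i.e. if and only if $f(x)$ is the unique global minimizer of $H(x,\cdot)$ over $\Sigma^m$.

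Finally I would match this with \Cref{def:loc_glob_min}: $f(x)$ being the unique global minimizer is precisely the statement that $H(x,z) > H(x,f(x))$ for all $z \in \Sigma^m\setminus\{f(x)\}$, which is the definition of $f(x)$ being a strict global minimum of $H(x,\cdot)$. Quantifying over all $x \in \Sigma^n$ yields the claimed equivalence. I expect the only delicate point to be the limit computation, and in particular the observation that a tie ($k(x)\ge 2$) forces $\lim_\beta p_\beta(x) \le 1/2 < 1$, so that strictness of the minimum is genuinely required and not an artifact of the definition; note also that no genericity hypothesis on $H$ is needed, since the argument handles degenerate $H$ directly. Everything else is immediate unwinding of definitions.
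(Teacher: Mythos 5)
Your argument is correct, but it takes a noticeably longer route than the paper does. The paper gives \Cref{prop:global_min} with no proof at all, because with the paper's operational definition of ``encodes''---namely $f = f_H$ where $f_H(x) := \argmin_{y\in\Sigma^m} H(x,y)$---the proposition is essentially a restatement: for the equation $f(x) = \argmin_y H(x,y)$ to hold, the argmin must consist of the single element $f(x)$, which is precisely the statement that $f(x)$ is a strict global minimum of $H(x,\cdot)$ in the sense of \Cref{def:loc_glob_min}. That is the whole content. You instead work from the Boltzmann--Gibbs motivation: you compute the conditional probability, divide through by $e^{-\beta\mu(x)}$, take $\beta\to\infty$, and identify the limit as $1/k(x)$ or $0$. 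This is correct and careful, and it actually supplies a rigorous justification of something the paper merely asserts with ``we can see that''---namely that the $\beta\to\infty$ limit of the conditional Gibbs probability being $1$ is equivalent to the argmin criterion with a unique minimizer. Your observation that a tie forces the limit to be $\le 1/2$ and that no genericity hypothesis is needed is a genuine and worthwhile point. The trade-off: your proof is more self-contained and validates the probabilistic intuition, but it proves more than is strictly required given the paper's definition of ``encodes,'' which is stated in terms of the argmin rather than the limit.
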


For the purposes of fine-tuning the set of inequality constraints, it is conceptually and mathematically useful to represent the set of linear inequalities in the form of graphs. 

\begin{definition}[Energy Graph of a Hamiltonian]
    Given an Ising Hamiltonian $H:\Sigma^n\times\Sigma^m \to \R$ and a fixed input $x \in \Sigma^n$, the \textit{energy graph} of $H(x,\cdot)$ is the directed graph $(\Sigma^m, \mathcal{E})$, where $(z,y) \in \mathcal{E}$ if and only if $H(x,z) > H(y,z)$. The energy graph of $H$ as a whole is the disjoint union of the energy graph of $H$ on every input level. 
\end{definition}

\begin{definition}[Energy Graph of a Constraint Set]
    A constraint set representing the problem of arranging energy levels may be written as $\{H(s) < H(s') | s, s'\in P\}$ where $P$ is some subset of $\Sigma^{n + m} \times \Sigma^{n + m}$. This can be represented as a digraph where the nodes are $\Sigma^{n+m}$ and the edges are $P$. 
\end{definition} 

In this language, we can reformulate the earlier setup of the reverse Ising problem as follows: ``A Hamiltonian solves a constraint set when the constraint set energy graph is a subgraph of the energy graph of a Hamiltonian.'' Furthermore, the constraints of \Cref{eq:linprog}, which we will call the traditional or global-minimum constraints, can be visualized as a disjoint union of $2^n$ star graphs on $2^m$ vertices each, as in \Cref{fig:trad_constraints}. However, these constraints do not completely determine the energy graph of the Hamiltonian, as demonstrated in \Cref{fig:local_min}. The same figure also demonstrates the possible existence of local minima other than the correct output state $f(x)$.
    
This poses a problem for dynamical realizations of Ising machines, either in software or hardware. Suppose $H:\Sigma^n\times\Sigma^m \to \R$ is an Ising Hamiltonian that encodes a circuit $(n,m,f)$. The na\"ive algorithm for using $H$ to compute $f$ is a sort of gradient-descent algorithm: Pick an $x \in \Sigma^n$ for which you wish to compute $f(x)$, and start at an arbitrary $y \in \Sigma^m$. Move from $y$ to the state $\tilde{y} \in B_1(y)$ that maximizes the decrease in energy $\Delta E = H(x,\tilde{y}) - H(x,y)$. Continue moving from state to state until moving no longer decreases the energy. A physical system with energy function $H$ which embodies low-temperature Glauber dynamics or Gibbs sampling of the Boltzmann probability distribution will act similarly, mostly rolling downhill, possibly with some noise. 
    
The hope is that after several iterations, the system will settle on correct output $f(x)$. This hope is achieved if there are no local minima (besides the global minimum $f(x)$). But if there are other local minima, and the temperature is low enough, the system may become trapped in the local minimum for a very long time. This problem is well known as the fundamental issue in non-convex optimization, and the typical approach is to employ an annealing schedule to allow noise to push the system out of false minima: by starting at high temperature and slowly cooling, chances are much better that the global minimum will be achieved. This slow cooling decreases the likelihood of the `trapped' or `glassy' state incurred by fast cooling. On the other hand, if we could remove local minima entirely, this would no longer be a concern, significantly simplifying the process of inference for Ising systems to a single imperative: just cool down as much as possible! Surprisingly, as we shall shortly see, this is actually possible to achieve without even sacrificing the linearity of the constraint set.

\begin{figure}[h]
    \centering
    \includegraphics[width=0.25\textwidth]{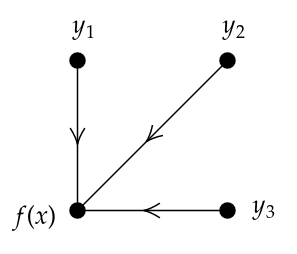}
    \caption{Global minimum constraints for $m = 2$. The state $f(x)$ is a strict global minimum if and only if all these edges are contained in the energy graph.}
    \label{fig:trad_constraints}
\end{figure}

\begin{figure}[h]
    \centering
    \includegraphics[width=0.25\textwidth]{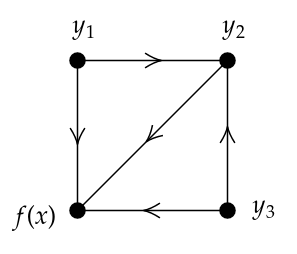}
    \includegraphics[width=0.25\textwidth]{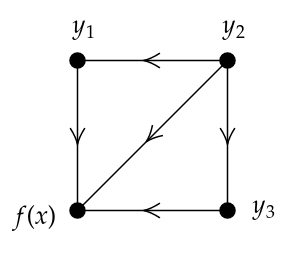}
    \caption{Two examples of energy graphs for $m = 2$ that satisfy the global minimum constraints. In the left graph, both $f(x)$ and $y_2$ are local minima, while in the right graph, only $f(x)$ is a local minimum.}
    \label{fig:local_min}
\end{figure}

\subsection{Residual Hamiltonian Approach}

    We can also understand local minima through the minimizing partition $\mathcal{C}_J$. Consider a point $a \in \R^m$ in the red region of \Cref{fig:loc_min_triangle} (left). Since $a \in \mathcal{C}_J((1,-1))$, the output $(1,-1)$ is a global minimum of $E_J(a,\cdot)$. However, $a$ is also in two of the three half spaces that define $\mathcal{C}_J((-1,1))$, namely,
    \begin{align}
        Q^J_{(-1,1),(1,1)} &= \left\{a \in \R^M \middle| E_J(a,(-1,1)) < E_J(a,(1,1))\right\}, \\
        Q^J_{(-1,1),(-1,-1)} &= \left\{a \in \R^M \middle| E_J(a,(-1,1)) < E_J(a,(-1,-1))\right\}.
    \end{align}
    This means that $(-1,1)$ is a local minimum of $E_J(a,\cdot)$. Likewise, if
    \begin{align}
        a \in \mathcal{C}_J((-1,1)) \cap Q^J_{(1,-1),(1,1)} \cap Q^J_{(1,-1),(-1,-1)},
    \end{align}
    then $(-1,1)$ is a global minimum and $(1,-1)$ is a local minimum. On the other hand, if $a$ avoids the red square in \Cref{fig:loc_min_triangle} (right), then there will be no local minima besides the global minimum.

    However, the subsets of $\mathcal{C}_J((-1,1))$ and $\mathcal{C}_J((1,-1))$ that lie outside the red square are not convex, which poses an issue for linear programming schemes. Hence, we take even smaller subsets of $\mathcal{C}_J((-1,1))$ and $\mathcal{C}_J((1,-1))$ that are convex, as seen in \Cref{fig:loc_min_triangle} (right). \Cref{thm:loc_min} generalizes this process to arbitrary $m$ and $J$.

\begin{figure}[h]
    \centering
    \includegraphics[width=0.25\textwidth]
    {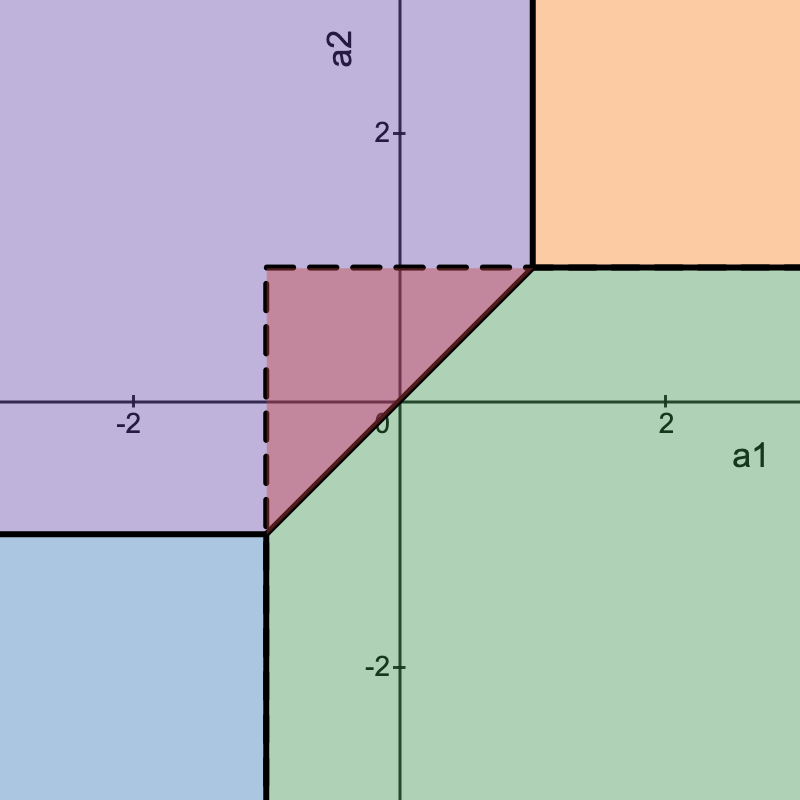}
    \hspace{3cm}
    \includegraphics[width=0.25\textwidth]
    {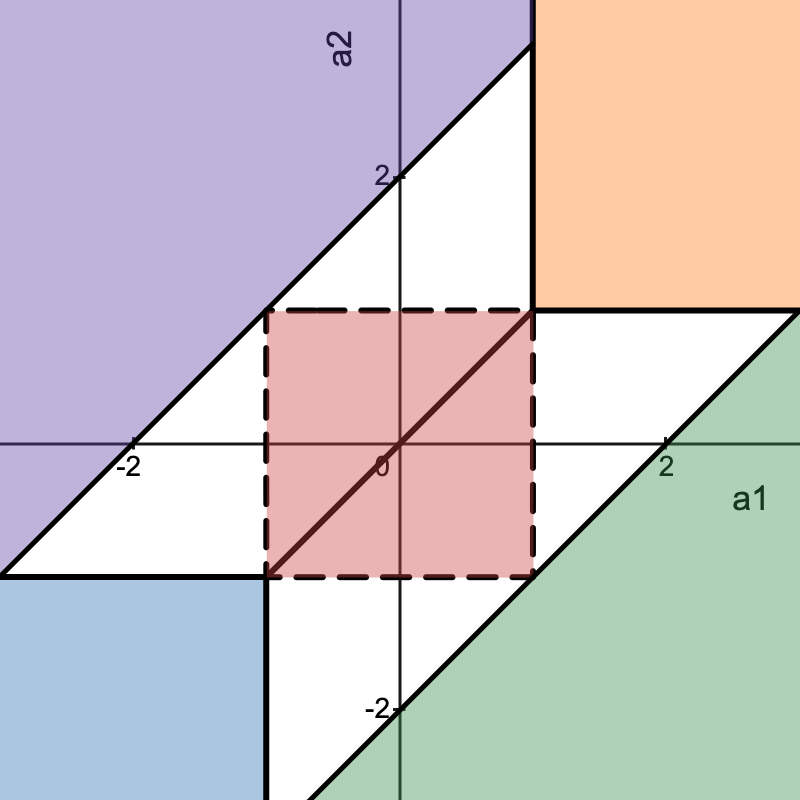}
    \caption{(Left) For $m = 2$ and $J_{12}=1$, highlighted in red, the region in $a$-space where $E_J(a,\cdot)$ has a global minimum of $(1,-1)$ and a local minimum of $(1,-1)$.
    (Right) The largest convex subsets of $\mathcal{C}_J((1,-1))$ (purple) and $\mathcal{C}_J((-1,1))$ (green) that avoid the region where $E_J(a,\cdot)$ has extraneous local minima (red).}
    \label{fig:loc_min_triangle}
\end{figure}

\begin{lemma}
    \label{lem:loc_min}
    Let $J \in \R^{m\times m}_{UT}$, and let $\tJ = \sym(J)$. Fix $a \in \R^m$. Then $y \in \Sigma^m$ is a local minimum of $E_J(a,\cdot)$ if and only if
    \begin{align}
        a_iy_i + 2\sum_{j=1}^m \tJ_{ij}y_iy_j \le 0, \quad \forall i = 1, \dots, m.
    \end{align}
    \begin{proof}
        Recall $\<\tJ,y^{\otimes2}\>_F = \<J,y^{\otimes2}\>_F$. For $y \in \R^m$ and for $i = 1, \dots, m$, define $\hy^i \in \R^m$ by
        \begin{align}
            \hy^i_j = \begin{cases}
                y_j, & j \ne i \\
                -y_j, & j = i
            \end{cases}.
        \end{align}
        We have that $y$ is a local minimum of $E_J(a,\cdot)$ if and only if
        \begin{align}
            \label{eq:loc_min}
            E_J(a,y) - E_J(a,\hy^i) \le 0, \quad \forall i = 1, \dots, m.
        \end{align}
        Notice
        \begin{align}
            E_J(a,y) - E_J(a,\hy^i)
            = \<a, y - \hy^i\> + \<\tJ, y^{\otimes2} - (\hy^i)^{\otimes2}\>_F.
        \end{align}
        The only nonzero component of $y - \hy^i$ is $(y - \hy^i)_i = 2y_i$, and the only nonzero components of $y^{\otimes2} - (\hy^i)^{\otimes2}$ are
        \begin{align}
            (y^{\otimes2} - (\hy^i)^{\otimes2})_{jk} = 2y_jy_k,
        \end{align}
        when exactly one of $j$ or $k$ is not equal to $i$. Therefore,
        \begin{align}
            E_J(a,y) - E_J(a,\hy^i)
            = 2a_iy_i + \sum_{\substack{j=1, \\ j\ne i}}^m \tJ_{ji}(2y_jy_i) + \sum_{\substack{k=1, \\ k\ne i}}^m \tJ_{ik}(2y_iy_k).
        \end{align}
        By symmetry of $\tJ$, both of the above sums are the same. Also, $\tJ_{jj} = 0$. Hence,
        \begin{align}
            E_J(a,y) - E_J(a,\hy^i)
            = 2a_iy_i + 2\sum_{j=1}^m \tJ_{ij}(2y_iy_j).
        \end{align}
        Thus, by \eqref{eq:loc_min}, $y$ is a local minimum if and only if
        \begin{align}
            \label{eq:loc_min_lemma}
            a_iy_i + 2\sum_{j=1}^m \tJ_{ij}y_iy_j \le 0, \quad \forall i = 1, \dots, m,
        \end{align}
    \end{proof}
\end{lemma}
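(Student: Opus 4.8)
The plan is to write out explicitly the finitely many inequalities that characterize a local minimum and simplify them term by term. By \Cref{def:loc_glob_min}, $y \in \Sigma^m$ is a local minimum of $E_J(a,\cdot)$ exactly when $E_J(a,y) \le E_J(a,z)$ for every $z \in B_1(y)\setminus\{y\}$. The Hamming-distance-$1$ neighbours of $y$ in $\Sigma^m$ are precisely the $m$ single-coordinate flips $\hy^i$ ($i = 1,\dots,m$), where $\hy^i$ agrees with $y$ except that $\hy^i_i = -y_i$. So the condition to verify is $E_J(a,y) - E_J(a,\hy^i) \le 0$ for all $i$, and it suffices to compute each of these $m$ differences in closed form.

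For this I would first replace $J$ by its symmetric part, since $s^\t J s = s^\t \sym(J) s$ and hence $E_J(a,y) = \langle a, y\rangle + \langle \tJ, y^{\otimes 2}\rangle_F$ with $\tJ = \sym(J)$. Then
\[
    E_J(a,y) - E_J(a,\hy^i) = \langle a,\, y - \hy^i\rangle + \langle \tJ,\, y^{\otimes 2} - (\hy^i)^{\otimes 2}\rangle_F .
\]
Now comes the only real work, which is bookkeeping: the vector $y - \hy^i$ has a single nonzero entry, namely $2y_i$ in position $i$, so the first term equals $2a_iy_i$. For the matrix $y^{\otimes 2} - (\hy^i)^{\otimes 2}$, the $(j,k)$ entry $y_jy_k - \hy^i_j\hy^i_k$ vanishes when $i \notin \{j,k\}$ (those entries are unchanged) and also when $j = k = i$ (because $y_i^2 = (\pm y_i)^2 = 1$); the surviving entries, those with exactly one of $j,k$ equal to $i$, equal $2y_jy_k$. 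Pairing against $\tJ$, using its symmetry to merge the ``row $i$'' and ``column $i$'' contributions and $\tJ_{ii} = 0$ to extend the sum over all indices $j$, yields $\langle \tJ, y^{\otimes 2} - (\hy^i)^{\otimes 2}\rangle_F = 4\sum_{j=1}^m \tJ_{ij}y_iy_j$. Hence $E_J(a,y) - E_J(a,\hy^i) = 2a_iy_i + 4\sum_{j=1}^m \tJ_{ij}y_iy_j$.

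Dividing by $2$, the inequality $E_J(a,y) - E_J(a,\hy^i) \le 0$ is equivalent to $a_iy_i + 2\sum_{j=1}^m \tJ_{ij}y_iy_j \le 0$, and imposing this for every $i = 1,\dots,m$ gives exactly the claimed characterization. There is no genuine obstacle beyond the combinatorial care needed to identify which entries of the outer-square difference survive and to get the numerical constant right — a factor of $2$ from the sign flip in each surviving entry, doubled again by the symmetry of $\tJ$ when the two index-$i$ sums are combined — after which the result is a routine expansion of the quadratic form.
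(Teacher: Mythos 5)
Your proposal is correct and follows essentially the same route as the paper: reduce the local-minimum condition to the $m$ single-coordinate flips $\hy^i$, replace $J$ by its symmetric part $\tJ$, identify the surviving entries of $y - \hy^i$ and $y^{\otimes 2} - (\hy^i)^{\otimes 2}$, and combine the row-$i$ and column-$i$ contributions by the symmetry of $\tJ$ and $\tJ_{ii}=0$ to arrive at $2a_iy_i + 4\sum_j \tJ_{ij}y_iy_j \le 0$. The bookkeeping, factor of $2$ from the flip, and final division by $2$ all match the paper's derivation.
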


\begin{remark}
    If we treat $E_J(a,\cdot)$ as a function on $\R^m$ instead of $\Sigma^m$, its partial derivative with respect to $y_i$ is
    \begin{align}
        \frac{\partial E_J}{\partial y_i}(a,y) = a_i + 2\sum_{j=1}^m \tJ_{ij}y_j.
    \end{align}
    Also, if $\hat{u}^i$ is the unit vector in the direction of $\hy^i - y$, then $\hat{u}^i_j = -y_i\delta_{ij}$. In light of this, the condition of \Cref{lem:loc_min} can be rephrased as
    \begin{align}
        \nabla_y E_J(a,y) \cdot \hat{u}^i \ge 0, \quad \forall i = 1, \dots, m.
    \end{align}
    In other words, $y \in \Sigma^m$ is a local minimum of $E_J(a,\cdot)$ if and only if the directional derivative of $E_J(a,\cdot)$ in the direction of every neighbor of $y$ is positive. This behavior is analogous to that of smooth functions over $\R^m$.
    
    This is somewhat surprising, since restricting functions over $\R^m$ to discrete lattices often changes their extrema in unpredictable ways. However, since $E_J(a,\cdot)$ is a multilinear polynomial, its restriction along any coordinate axis is an affine function, and thus the usual intuition about smooth functions holds.
\end{remark}

\begin{theorem}[Local minima]
    \label{thm:loc_min}
    Let $a \in \R^m$ and $J \in \R^{m\times m}_{UT}$. If $a$ and $J$ are such that
    \begin{align}
        E_J(a,z) - E_J(a,y) < E_J(0, z - y), \quad \forall y \in \Sigma^m\setminus\{z\},
    \end{align}
    then $E_J(a,\cdot)$ has no local minima apart from $z$.
    \begin{proof}
        Let $y, z \in \Sigma^m$ with $y \ne z$. Define
        \begin{align}
            S := \{i : y_i \ne z_i\} = \{i : y_i = -z_i\}.
        \end{align}
        Suppose $y$ is a local minimum of $E_J(a,\cdot)$. Then
        \begin{align}
            E_J(a,z) - E_J(a,y)
            &= \<a, z-y\> + \sum_{i,j} \tJ_{ij} z_iz_j - \sum_{i,j} \tJ_{ij} y_iy_j \\
            &= \sum_{i\in S} a_i(-2y_i) + \sum_{i,j} \tJ_{ij}(z_iz_j - y_iy_j) \\
            &\ge 4\sum_{i\in S, j} \tJ_{ij}y_iy_j + \sum_{i,j} \tJ_{ij}(z_iz_j - y_iy_j)
            & \text{(\Cref{lem:loc_min})}.
            \label{eq:loc_min_step}
        \end{align}
        Focusing on the first term, we see
        \begin{align}
            4\sum_{i\in S, j} \tJ_{ij}y_iy_j
            &= 4\sum_{i\in S} \left[ \sum_{j\in S} \tJ_{ij}y_iy_j + \sum_{j\in S^c} \tJ_{ij}y_iy_j \right] \\
            &= 4\sum_{i\in S, j\in S}\tJ_{ij}y_iy_j + 4\sum_{i\in S, j\in S^c}\tJ_{ij}y_iy_j \\
            &= 4\sum_{i\in S, j\in S}\tJ_{ij}z_iz_j - 4\sum_{i\in S, j\in S^c}\tJ_{ij}z_iz_j & (y_i = -z_i \text{ for } i \in S).
            \label{eq:loc_min_term1}
        \end{align}
        Let us return to the second term of \eqref{eq:loc_min_step}. Define
        \begin{align}
            \Delta_{ij} := \tJ_{ij}(z_iz_j - y_iy_j).
        \end{align}
        Then we have
        \begin{align}
            \sum_{i,j} \tJ_{ij}(z_iz_j - y_iy_j)
            &= \sum_{i,j} \Delta_{ij} \\
            &= \sum_{i\in S,j\in S}\Delta_{ij}
            + \sum_{i\in S^c,j\in S}\Delta_{ij}
            + \sum_{i\in S,j\in S^c}\Delta_{ij}
            + \sum_{i\in S^c,j\in S^c}\Delta_{ij} \\
            &= \sum_{i\in S,j\in S}0
            + \sum_{i\in S^c,j\in S}\tJ_{ij}(2z_iz_j)
            + \sum_{i\in S,j\in S^c}\tJ_{ij}(2z_iz_j)
            + \sum_{i\in S^c,j\in S^c}0 \\
            &= 4\sum_{i\in S,j\in S^c}\tJ_{ij}z_iz_j.
            \label{eq:loc_min_term2}
        \end{align}
        Adding \eqref{eq:loc_min_term1} and \eqref{eq:loc_min_term2}, we obtain
        \begin{align}
            E_J(a,z) - E_J(a,y)
            &\ge 4\sum_{i\in S, j\in S}\tJ_{ij}z_iz_j - 4\sum_{i\in S, j\in S^c}\tJ_{ij}z_iz_j + 4\sum_{i\in S,j\in S^c}\tJ_{ij}z_iz_j \\
            &= 4\sum_{i\in S, j\in S}\tJ_{ij}z_iz_j \\
            &= \sum_{i\in S,j\in S}\tJ_{ij}(z-y)_i(z-y)_j & (y_i = -z_i \text{ for } i \in S) \\
            &= \sum_{i,j} \tJ_{ij}(z-y)_i(z-y)_j \\
            &= E_J(0,z-y).
        \end{align}
        In summary, if $y$ is a local minimum of $E_J(a,\cdot)$, then
        \begin{align}
            E_J(a,z) - E_J(a,y) \ge E_J(0,z-y).
        \end{align}
        By contrapositive, if
        \begin{align}
            E_J(a,z) - E_J(a,y) < E_J(0,z-y), \quad \forall y \in \Sigma^m\setminus\{z\},
        \end{align}
        then $E_J(a,\cdot)$ has no local minima except $z$.
    \end{proof}
\end{theorem}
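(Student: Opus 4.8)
The plan is to prove the contrapositive, in a quantitative form: I will show that \emph{if} $y \in \Sigma^m$ with $y \ne z$ is a local minimum of $E_J(a,\cdot)$, \emph{then} $E_J(a,z) - E_J(a,y) \ge E_J(0, z-y)$. Granting this, the hypothesis $E_J(a,z) - E_J(a,y) < E_J(0,z-y)$ for all $y \ne z$ immediately forbids any local minimum other than $z$, which is the claim.

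To set up the target quantity, I would first record that $E_J(0,w) = w^\t J w = \sum_{i,j}\tJ_{ij}w_iw_j$ with $\tJ = \sym(J)$, so $E_J(0,z-y)$ is a quadratic form evaluated on the difference vector $z - y$. Introduce the flip set $S := \{i : y_i \ne z_i\} = \{i : y_i = -z_i\}$; then $(z-y)_i = 2z_i$ for $i \in S$ and $(z-y)_i = 0$ otherwise, so $E_J(0,z-y) = 4\sum_{i,j \in S}\tJ_{ij}z_iz_j$.

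Next I would expand $E_J(a,z) - E_J(a,y) = \langle a, z-y\rangle + \sum_{i,j}\tJ_{ij}(z_iz_j - y_iy_j)$. The linear part equals $-2\sum_{i\in S}a_iy_i$. This is where \Cref{lem:loc_min} enters: local minimality of $y$ gives $a_iy_i + 2\sum_j \tJ_{ij}y_iy_j \le 0$, i.e.\ $-2a_iy_i \ge 4\sum_j\tJ_{ij}y_iy_j$ for each $i$; summing over $i\in S$ bounds the linear part below by $4\sum_{i\in S}\sum_j \tJ_{ij}y_iy_j$. So it remains to verify the identity
\[
4\sum_{i\in S}\sum_j \tJ_{ij}y_iy_j \;+\; \sum_{i,j}\tJ_{ij}(z_iz_j - y_iy_j) \;=\; 4\sum_{i,j\in S}\tJ_{ij}z_iz_j .
\]
This is pure bookkeeping: split each sum into the four blocks $S\times S$, $S\times S^c$, $S^c\times S$, $S^c\times S^c$. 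In the second sum the $S\times S$ and $S^c\times S^c$ blocks vanish (an even number of coordinate flips, so $z_iz_j = y_iy_j$), and the two cross blocks combine by symmetry of $\tJ$ to $4\sum_{i\in S,\,j\in S^c}\tJ_{ij}z_iz_j$. In the first sum, substituting $y_iy_j = z_iz_j$ on $S\times S$ and $y_iy_j = -z_iz_j$ on $S\times S^c$ gives $4\sum_{i,j\in S}\tJ_{ij}z_iz_j - 4\sum_{i\in S,\,j\in S^c}\tJ_{ij}z_iz_j$; adding the two, the cross-terms cancel and only $4\sum_{i,j\in S}\tJ_{ij}z_iz_j = E_J(0,z-y)$ survives. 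Throughout one uses $\tJ_{ii}=0$ so diagonal terms never interfere.

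I expect the sign-and-index bookkeeping in that final identity — tracking which blocks vanish and which cross-terms cancel — to be the only real point of friction; every other step is a direct substitution. A slightly slicker alternative would be to work from the start with $d := z - y$ (noting $d_i \in \{0,\pm 2\}$ and $\operatorname{supp} d = S$) and expand $E_J(a,z)-E_J(a,y)$ in terms of $d$ and $y$ directly, but the block decomposition above is elementary and self-contained, so that is the version I would write up.
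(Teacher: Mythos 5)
Your proposal is correct and follows essentially the same route as the paper's own proof: contrapositive reduction, applying \Cref{lem:loc_min} to bound the linear term $\langle a, z-y\rangle$, and then a block decomposition over $S$ and $S^c$ to show the lower bound collapses to $E_J(0,z-y)$. The only cosmetic difference is that you first simplify $E_J(0,z-y) = 4\sum_{i,j\in S}\tJ_{ij}z_iz_j$ and prove an exact identity against that target, whereas the paper arrives at $4\sum_{i,j\in S}\tJ_{ij}z_iz_j$ and then re-expresses it as $E_J(0,z-y)$ at the end; the substance is the same.
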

 
\begin{corollary}[Local minima]\label{cor:loc_min}
    Let
    \begin{align}
        H(x,y) = A(x)\cdot y +  \<J, y^{\otimes2}\>_F
    \end{align}
    be an Ising Hamiltonian that encodes the circuit $(n,m,f)$. For $x \in \Sigma^n$, if
    \begin{align}\label{eq:loc_min_constraints}
        H(x,f(x)) - H(x,y) < \<J, (f(x)-y)^{\otimes2}\>_F, \quad \forall y \in \Sigma^m\setminus\{f(x)\},
    \end{align}
    then $H(x,\cdot)$ has no local minima apart from $f(x)$.
    \begin{proof}
        Apply \Cref{thm:loc_min} with $a = A(x)$ and $z = f(x)$.
    \end{proof}
\end{corollary}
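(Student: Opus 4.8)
The plan is to derive this as an immediate specialization of \Cref{thm:loc_min}, by recognizing that, on a fixed input level, $H(x,\cdot)$ is nothing but a residual Ising Hamiltonian. First I would record the identity $\langle J, y^{\otimes2}\rangle_F = \sum_{j<k} J_{jk}y_jy_k = y^\t Jy$, which holds because $J$ is strictly upper triangular, so the diagonal of $y^{\otimes2}$ contributes nothing; consequently
\[
H(x,y) = A(x)\cdot y + \langle J, y^{\otimes2}\rangle_F = A(x)\cdot y + y^\t Jy = E_J(A(x),y).
\]
So for each fixed $x$ the function $H(x,\cdot)$ is literally $E_J(a,\cdot)$ with $a = A(x)$, and the notion of local minimum from \Cref{def:loc_glob_min} transports verbatim from the $x$-space picture to the residual $a$-space picture.

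Next I would rewrite the hypothesis in the form demanded by \Cref{thm:loc_min}. Taking $z = f(x)$ and using the same upper-triangularity of $J$, one has $E_J(0, z-y) = (z-y)^\t J(z-y) = \langle J, (z-y)^{\otimes2}\rangle_F$, so the assumed inequalities \eqref{eq:loc_min_constraints} become exactly
\[
E_J(A(x),z) - E_J(A(x),y) < E_J(0, z-y), \qquad \forall y \in \Sigma^m\setminus\{z\}.
\]
This is precisely the hypothesis of \Cref{thm:loc_min} with $a = A(x)$ and $z = f(x)$, so the theorem yields that $E_J(A(x),\cdot)$, and hence $H(x,\cdot)$, has no local minimum besides $z = f(x)$. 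I would finish by noting that the conclusion is not vacuous: since $H$ encodes $(n,m,f)$, \Cref{prop:global_min} makes $f(x)$ a strict global minimum, hence a local minimum, so $f(x)$ really is the unique local minimum of $H(x,\cdot)$ over $\Sigma^m$.

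I do not anticipate any genuine difficulty; all the work sits in \Cref{thm:loc_min} (and, behind it, \Cref{lem:loc_min}), and the corollary is just the passage from the ``$a$-space'' picture to the ``$x$-space'' picture. The only steps needing a moment's care are the bookkeeping identity $\langle J, v^{\otimes2}\rangle_F = v^\t Jv$ and the observation --- already built into \Cref{def:res_ham} and the surrounding discussion --- that instantiating the free residual parameter $a$ at $A(x)$ recovers $H(x,\cdot)$, so that results about residual Hamiltonians apply to it unchanged.
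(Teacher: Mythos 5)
Your argument is correct and is exactly the paper's proof, merely spelled out: you instantiate \Cref{thm:loc_min} at $a = A(x)$, $z = f(x)$, using the identities $\langle J, v^{\otimes2}\rangle_F = v^\t J v$ (since $J$ is strictly upper triangular) and $H(x,\cdot) = E_J(A(x),\cdot)$ to translate the hypothesis into the form required by the theorem. The closing observation that $f(x)$ is indeed a local minimum (via \Cref{prop:global_min}) is a pleasant extra remark but not needed for the stated conclusion.
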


We see that requiring $H$ to be free of extraneous local minima is achievable by replacing the constraints on $H$ with slightly stricter versions. These more strict conditions can be incorporated into a linear program in the same manner as with the original constraints. Therefore, we can produce ``local-minima-free" Ising Hamiltonians with little added mental effort. Stricter constraints may require the addition of more auxiliary spins, increased dynamic range, etc. However, for small circuits where performance and reliability is important, a guarantee that there are no local minima is quite valuable, and may be worth the increased spin budget. 

\subsubsection{Spanning Tree Approach}

Suppose that we have a Hamiltonian $H$ and input state $x$ such that the energy graph of $H(x,\cdot)$ contains the edges shown in \Cref{fig:span-tree}. By the transitive property of inequality, such an energy graph must also contain the edge $(y_2,f(x))$. Thus, $H(x,\cdot)$ satisfies the requirement that $f(x)$ is a global minimum. Furthermore, every output state other than $f(x)$ has at least one edge flowing out of it. Thus, $H(x,\cdot)$ has no local minima besides $f(x)$. The graph in \Cref{fig:span-tree} is an example of a directed spanning tree of the hypercube graph.\footnote{Since the graph is directed, a more precise term for this structure would be \textit{arborescence} (see the Wikipedia page \textit{Arborescence (graph theory)}), but we will continue to use spanning tree.}

    \begin{figure}[h]
    \centering
    \includegraphics[width=0.25\textwidth]{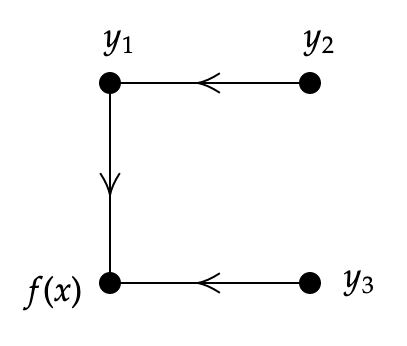}
    \caption{An example of an energy graph containing a spanning tree.  Transitivity of inequality implies the root of the tree, namely, $f(x)$ is a global minimum.}
    \label{fig:span-tree}
\end{figure}

    In fact, if the energy graph of $H(x,\cdot)$ contains a spanning tree with root $f(x)$, then $H(x,\cdot)$ necessarily has $f(x)$ as a global minimum and has no other local minima.  Therefore to require that a circuit have no spurious local minima it is sufficient to enforce that the energy graph of $H(x,\cdot)$ contains some spanning tree with root $f(x)$. In fact, we can make this statement even stricter:

    \begin{proposition}
        $H(x,\cdot)$ has no local minima if and only if there exists a directed spanning tree $T$ of the hypercube graph of $\Sigma^m$ such that $H(x,\cdot)$ satisfies the constraint set represented by $T$.  
        \begin{proof}
            The reverse direction is trivial, since as we have noted spanning tree graphs cannot have local minima. To prove the forward direction, we can construct a spanning tree graph by an iterative procedure. Begin by setting $T = (\{f(x)\}, \emptyset)$, i.e. the tree with only the root node and no edges. Now iterate over the nodes in $\Sigma^m \setminus \{f(x)\}$. For each such node, either it is in $V(T)$ or it is not. If is is, skip to the next node. Now, assume that it is not. Because $H(x, \cdot)$ has no local minima, for any node $y$ in $\Sigma^m$ such that $y \neq f(x)$, there exists a non-self-intersecting directed path $y \rightarrow f(x)$, which we can denote as $y = v_0, v_1, \dots, v_k = f(x)$, such that $H(x,v_i) \geq H(x, v_{i+1})$ and $d(v_i, v_{i+1}) = 1$ where $d$ denotes Hamming distance. Now consider the sub-path connecting $y$ to $T$ as follows: let $s = \min \{i : v_i \in V(T)\}$. Then add the path $v_0, \dots, v_s$ to $T$. Since each vertex in the chain except the last one is new to $T$, and the chain does not self-intersect, we have not added any cycles, so $T$ remains a directed tree. Iterating this process over all nodes must create a directed spanning tree of the hypercube graph which is a subgraph of the energy graph of $H(x,\cdot)$. 
        \end{proof}
    \end{proposition}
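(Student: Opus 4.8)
The plan is to prove the two implications separately, reading ``$H(x,\cdot)$ has no local minima'' as ``the strict global minimum of $H(x,\cdot)$ is its only local minimum'' (consistent with \Cref{def:loc_glob_min} and \Cref{prop:global_min}), reading ``directed spanning tree'' as an \emph{arborescence} --- a spanning tree with all edges oriented toward a single root $r$, as in the footnote preceding the statement --- and reading ``$H(x,\cdot)$ satisfies the constraint set represented by $T$'' as ``$T$ is a subgraph of the energy graph of $H(x,\cdot)$'', i.e.\ $H(x,\cdot)$ strictly decreases along every directed edge of $T$ as one moves toward $r$.

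\textbf{($\Leftarrow$)} Suppose such a $T$ exists, with root $r$. Every $y \in \Sigma^m \setminus \{r\}$ has exactly one outgoing edge in $T$, leading to its parent $y' \in B_1(y)$, and by hypothesis $H(x,y') < H(x,y)$; hence $y$ is not a local minimum of $H(x,\cdot)$. So $r$ is the unique local minimum. Walking the unique $T$-path $y = v_0, v_1, \dots, v_k = r$ and chaining the strict inequalities $H(x,v_0) > H(x,v_1) > \dots > H(x,v_k)$ gives $H(x,r) < H(x,y)$ for every $y \ne r$, so $r$ is in fact the strict global minimum.

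\textbf{($\Rightarrow$)} Here I would carry out the greedy tree-growing procedure indicated in the excerpt. Let $r$ be the global minimum, which by hypothesis is the unique local minimum of $H(x,\cdot)$, and initialize $T := (\{r\}, \emptyset)$. Process the vertices of $\Sigma^m \setminus \{r\}$ in any order; for the current vertex $u$, if $u \in V(T)$ do nothing, and otherwise set $v_0 := u$ and iteratively pass from $v_i$ to some $v_{i+1} \in B_1(v_i)$ with $H(x,v_{i+1}) < H(x,v_i)$. Such a neighbor exists precisely while $v_i$ is not a local minimum; since $r$ is the only local minimum, after finitely many steps the walk reaches $v_k = r$, and since the energies strictly decrease the $v_i$ are pairwise distinct, so $v_0, \dots, v_k$ is a simple directed path contained in the energy graph. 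Let $s := \min\{i : v_i \in V(T)\}$ (well-defined since $v_k = r \in V(T)$), and adjoin to $T$ the vertices $v_0, \dots, v_{s-1}$ and the edges $(v_0, v_1), \dots, (v_{s-1}, v_s)$. Each new vertex acquires exactly one outgoing edge and the new path meets the old tree only at $v_s$, so no directed cycle appears and $T$ stays an arborescence rooted at $r$, every edge of which satisfies the required strict inequality. Once all vertices are processed, $T$ is a directed spanning tree rooted at $r$ whose constraint set is, by construction, satisfied by $H(x,\cdot)$.

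\textbf{Main obstacle.} The content of the argument lies entirely in the ($\Rightarrow$) direction, in justifying the greedy descent: (a) it is well-defined at a non-minimum, which is immediate from the ``$\ge$'' form of the definition of local minimum --- its negation produces a \emph{strictly} lower neighbor; (b) it terminates, immediate from finiteness of $\Sigma^m$ together with strict monotonicity; and (c) it terminates \emph{at} $r$, which is exactly where the hypothesis enters, since $r$ is the only vertex at which a descent can halt. (If one prefers, \Cref{prop:generic} lets us assume at the outset that $H$ is generic, so that ``local'' automatically means ``strict local'' and the global minimum is unique.) The remaining claims --- simplicity of the descent path, and the fact that the truncate-then-attach step preserves the arborescence structure (acyclicity plus a single outgoing edge at each non-root) --- are short bookkeeping with no hard estimate or delicate construction involved.
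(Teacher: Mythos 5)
Your proof is correct and takes essentially the same approach as the paper: the reverse direction observes that every non-root vertex in an arborescence has an outgoing edge and so cannot be a local minimum, and the forward direction grows a spanning tree rooted at $f(x)$ by repeatedly descending from an unvisited vertex along strictly energy-decreasing neighbors until the existing tree is reached, then attaching the truncated path. You spell out the well-definedness, termination, and termination-at-$r$ of the greedy descent in slightly more detail than the paper, which simply asserts the existence of the non-self-intersecting descending path, but the construction and justification are the same.
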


    This suggests that to design Hamiltonians without local minima, we should enforce the linear constraints corresponding to a directed spanning tree on each input level. However, it is not clear \textit{a priori} which spanning tree we should try to enforce, and the space of possible spanning trees is huge. Enforcing `bad' constraints makes the problem more difficult, either by rendering it infeasible, or by increasing the $L^1$ norm of the solution. 
    
    Fortunately, there is a simple way to iteratively improve the choices of spanning trees. Suppose that we start with a Hamiltonian $H_0$ solving the no-local-minima constraint in \Cref{cor:loc_min}. This guarantees that we have a solution without local minima. Now, suppose that we have a Hamiltonian $H_i$ without local minima. Then $H_i$ must satisfy a spanning tree on each input level. We construct a tree for each input level by selecting the out-flowing edge for each vertex which maximizes the energy drop:
    \begin{align}
        T_x = \left(\Sigma^m, \left\{\left(y,\argmin_{z : d(z, y) = 1} H_i(x,z) \right)\middle| y \in \Sigma^m \setminus \{f(x)\}\right\}\right)
    \end{align}
    Now, let $H_{i+1}$ be a solution to the linear program represented by the constraint sets $T_x$. We repeat until stability is reached. This procedure creates Hamiltonians with much lower $L^1$ norm than the original $H_0$, though there is no guarantee of uniqueness or optimality of the final solution.

\section{Acknowledgments}

This work was funded by the Laboratory for Physical Sciences (LPS) in Catonsville, Maryland. The authors would like to thank their team members John Daly, Maria Gorgone, and Teresa Ranadive for their frequent collaboration on this subject. 

\printnomenclature

\appendix 

\section{Linear Algebra Background}
\label{sec:linearalgebra}

\begin{definition}[Product function]
    Given functions $f_1:\R^n \to \R^{m_1}$ and $f_2:\R^n \to \R^{m_2}$, their \textit{product} is $f_1\times f_2: \Sigma^n \to \Sigma^{m_1+m_2}$ given by $(f_1\times f_2)(x) = (f_1(x), f_2(x))$.

    Given $f = f_1\times\dots\times f_m$, we call the functions $f_i$ the \textit{components} of $f$.
\end{definition}

\begin{definition}[Affine map]
    Let $V$ and $W$ be vector spaces.
    An \textit{affine map} from $V$ to $W$ is a function $A : V \lra W$ that can be written $A(x) = Tx + b$ for some linear function $T: V \to W$ and vector $b \in W$. We will write $A := [T, b]$. Despite the fact that $A$ may not be linear, we abbreviate $A(x)$ as $Ax$ for convenience. We will denote the set of all affine maps from $V$ to $W$ as $\Aff(V, W)$.

    In the case $V = \R^n$ and $W = \R^m$, $T$ is simply a matrix in $\R^{m\times n}$.
\end{definition}



\begin{definition}[Strictly upper-triangular matrix]
    A matrix $A \in \R^{n\times n}$ is \textit{strictly upper-triangular} if $A_{ij} = 0$ whenever $i \ge j$.  That is, $A$ has no nonzero entries on or below the diagonal.  The space of all such matrices is $\R^{n\times n}_{UT}$.

    The \textit{upper-triangular part} of a matrix $A \in \R^{n\times n}$ is $A^{(UT)}$, defined by
    \begin{equation}
        A^{(UT)}_{ij} = \begin{cases}
            A_{ij}, & i < j \\
            0, & i \ge j
        \end{cases}.
    \end{equation}
\end{definition}

\begin{definition}[Outer Product]
    Given $x \in \R^n$ and $y \in \R^m$, their \textit{outer product} is the matrix
    \begin{align}
        x \otimes y = xy^\t = (x_iy_j)_{1\le i\le n, 1\le j \le m} \in \R^{n\times m}.
    \end{align}
    We abbreviate $x\otimes x$ as $x^{\otimes2}$.
\end{definition}

\begin{definition}[Frobenius inner product]
    Given two matrices $A, B \in \R^{m\times n}$, their \textit{Frobenius inner product} is
    \begin{align}
        \<A, B\>_F = \sum_{i=1}^n \sum_{j=1}^m A_{ij}B_{ij}.
    \end{align}
    Note its similarity to the usual dot product of vectors.
\end{definition}

\begin{proposition}
    Let $x,y \in \R^n$ and $A, B \in \R^{m\times n}$. Then
    \begin{align}
        \<Ax, By\> = \<A^\t B, x\otimes y\>_F.
    \end{align}
    \begin{proof}
        \begin{align}
            \<Ax, By\>
            &= \sum_{i=1}^m (Ax)_i (Bx)_i \\
            &= \sum_{i=1}^m \left( \sum_{j=1}^n A_{ij}x_j \right) \left( \sum_{j=1}^n B_{ij}y_j \right) \\
            &= \sum_{i=1}^m \sum_{j=1}^n \sum_{k=1}^n A_{ij}B_{ik}x_jx_k \\
            &= \sum_{j=1}^n \sum_{k=1}^n x_j y_k \sum_{i=1}^m A_{ij}B_{ik} \\
            &= \sum_{j=1}^n \sum_{k=1}^n (x\otimes y)_{jk} (A^\t B)_{jk} \\
            &= \<A^\t B, x\otimes y\>_F.
        \end{align}
    \end{proof}
    \label{prop:frob_thm}
\end{proposition}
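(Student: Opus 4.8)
The plan is a direct computation in coordinates: expand both sides over the standard bases of $\R^m$ and $\R^n$ and match them term by term. First I would write the Euclidean inner product on the left as $\<Ax,By\> = \sum_{i=1}^m (Ax)_i(By)_i$, then substitute $(Ax)_i = \sum_{j=1}^n A_{ij}x_j$ and $(By)_i = \sum_{k=1}^n B_{ik}y_k$ to get the triple sum $\sum_{i}\sum_{j}\sum_{k} A_{ij}B_{ik}\,x_jy_k$. The one manipulation that matters is to reorder the summations so the sum over $i$ is innermost; this isolates $\sum_{i=1}^m A_{ij}B_{ik}$, which is exactly the $(j,k)$ entry of $A^\t B$. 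Recognizing at the same time that $(x\otimes y)_{jk} = x_jy_k$ by the definition of the outer product, what is left is $\sum_{j,k} (A^\t B)_{jk}\,(x\otimes y)_{jk}$, and this is the Frobenius inner product $\<A^\t B, x\otimes y\>_F$ by definition.

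I do not expect any real obstacle here; this is a routine index-chasing identity. The only things to be careful about are bookkeeping matters: keeping the two free indices $j,k \in \{1,\dots,n\}$ (coming from $x$ and from $y$, respectively) distinct, and identifying $\sum_i A_{ij}B_{ik}$ with an entry of $A^\t B$ rather than of $AB^\t$ or a transpose of the wrong thing. No genericity hypothesis, no analysis, and no earlier result from the paper is needed.

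If a cleaner write-up were wanted, I would instead argue in matrix language: treating scalars as $1\times 1$ matrices, $\<Ax,By\> = (Ax)^\t(By) = x^\t(A^\t B)y = \text{Tr}\!\big(x^\t(A^\t B)y\big) = \text{Tr}\!\big((A^\t B)\,yx^\t\big)$ by cyclicity of the trace, and then, since $yx^\t = (x\otimes y)^\t$ and $\<M,N\>_F = \text{Tr}(MN^\t)$, this equals $\<A^\t B, x\otimes y\>_F$. This is shorter but leans on the trace identities as black boxes, so for a self-contained appendix I would actually present the index computation above and at most relegate the trace argument to a one-line remark.
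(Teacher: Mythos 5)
Your primary argument is correct and is exactly the paper's own proof: expand in coordinates, reorder the sums so $i$ is innermost, and recognize $\sum_i A_{ij}B_{ik}$ as $(A^\t B)_{jk}$ and $x_jy_k$ as $(x\otimes y)_{jk}$. The trace-cyclicity alternative you sketch is a fine shortcut but is not needed.
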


\begin{corollary}
    \label{cor:frob_formulas}
    For a vector $x \in \R^n$ and matrix $A \in \R^{n\times n}$, we have
    \begin{enumerate}
        \item
        \begin{align}
            x^\t Ax = \sum_{i,j = 1}^n A_{ij}x_ix_j = \langle A, x^{\otimes2} \rangle_F.
        \end{align}
        \item
        \begin{align}
            \|Ax\|^2 = \<A^\t A, x^{\otimes2} \>_F.
        \end{align}
    \end{enumerate}
    \begin{proof}
    $ $
        \begin{enumerate}
            \item
            \begin{align}
                x^\t Ax = \<Ix, Ax\> = \<I^\t A, x\otimes x\>_F = \<A, x^{\otimes2}\>_F.
            \end{align}
            \item
            \begin{align}
                \|Ax\|^2 = \<Ax, Ax\> = \<A^\t A, x\otimes x\>_F.
            \end{align}
        \end{enumerate}
    \end{proof}
\end{corollary}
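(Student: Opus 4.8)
The plan is to obtain both identities as one-line specializations of \Cref{prop:frob_thm}, which asserts $\<Ax, By\> = \<A^\t B, x\otimes y\>_F$ for $x,y \in \R^n$ and $A,B \in \R^{m\times n}$ (here all matrices are square, so $m = n$).

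For part (a), I would first rewrite the scalar $x^\t A x$ as the Euclidean inner product $\<Ix, Ax\>$, where $I$ is the $n\times n$ identity. Feeding the matrix pair $(I,A)$ and the vector pair $(x,x)$ into \Cref{prop:frob_thm} gives $\<Ix,Ax\> = \<I^\t A, x^{\otimes2}\>_F = \<A, x^{\otimes2}\>_F$, using $I^\t A = A$. The remaining equality $x^\t A x = \sum_{i,j} A_{ij}x_i x_j$ is just the coordinate expansion of the quadratic form, and it agrees with $\<A, x^{\otimes2}\>_F$ term by term since $(x^{\otimes2})_{ij} = x_i x_j$ by the definition of the outer square.

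For part (b), I would write $\|Ax\|^2 = \<Ax, Ax\>$ and apply \Cref{prop:frob_thm} with the matrix pair $(A,A)$ and the vector pair $(x,x)$, which yields $\<Ax,Ax\> = \<A^\t A, x\otimes x\>_F = \<A^\t A, x^{\otimes2}\>_F$ directly.

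There is essentially no obstacle: each part is an immediate corollary of \Cref{prop:frob_thm}, and the only thing to watch is instantiating that proposition correctly — in particular choosing the identity as one of the two matrices in part (a), and recalling the abbreviation $x^{\otimes2} = x\otimes x$. If \Cref{prop:frob_thm} were unavailable, one could instead expand both sides of each identity into double sums over matrix entries, which is routine.
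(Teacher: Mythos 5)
Your proposal matches the paper's proof exactly: both parts are obtained as instances of \Cref{prop:frob_thm}, rewriting $x^\t A x = \langle Ix, Ax\rangle$ with the matrix pair $(I,A)$ for part (a) and $\|Ax\|^2 = \langle Ax, Ax\rangle$ with the pair $(A,A)$ for part (b). The remark about the coordinate expansion agreeing term by term is a small clarification the paper omits but is harmless.
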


\begin{proposition}[Symmetric Hollow Matrices]
    Let $A \in \R^{n\times n}$ be a nonzero symmetric hollow matrix.  That is,
    \begin{equation}
        A = \begin{pmatrix}
            0 & A_{12} & A_{13} & \dots \\
            A_{12} & 0 & A_{23} & \dots \\
            A_{13} & A_{23} & 0 & \dots \\
            \vdots & \vdots & \vdots & \ddots
        \end{pmatrix}.
    \end{equation}
    Then $A$ is indefinite, meaning $A$ has positive and negative eigenvalues.  Consequently, the function $x \mapsto x^\t A x$ is an indefinite quadratic form and has a saddle point.
    \begin{proof}
        Since $A$ is symmetric, its eigenvalues are real.  The sum of the eigenvalues of $A$ is equal to the trace of $A$, which is 0.  Thus, it only remains to prove the eigenvalues of $A$ are not all 0.  Since $A$ is real symmetric, it is diagonalizable, i.e. 
        \begin{equation}
            A = Q^{-1} \text{diag}(\lambda_1, \dots, \lambda_n)Q,
        \end{equation}
        for some matrix $Q$, where $\lambda_1, \dots, \lambda_n$ are the eigenvalues of $A$.  But if $\lambda_i = 0$ for all $i$, then $A = 0$, contrary to our assumption.
    \end{proof}
\end{proposition}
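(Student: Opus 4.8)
The plan is to read off the eigenvalue signs from two elementary trace identities, using only the spectral theorem. First I would invoke that a real symmetric matrix is orthogonally diagonalizable, so its eigenvalues $\lambda_1,\dots,\lambda_n$ are all real and $\mathrm{tr}(A)=\sum_i \lambda_i$. Since $A$ is hollow, every diagonal entry is $0$, so $\mathrm{tr}(A)=0$ and hence $\sum_i \lambda_i = 0$. That is the first of the two facts I need.

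Next I would show that the eigenvalues are not all zero. The cleanest route is the identity $\sum_i \lambda_i^2 = \mathrm{tr}(A^2) = \sum_{i,j} A_{ij}^2$, which is strictly positive precisely because $A\neq 0$; hence at least one $\lambda_i \neq 0$. (Alternatively: if every eigenvalue vanished, diagonalization would force $A=0$, the argument used in the paper's own treatment of the off-diagonal-free setting.) Now combine the two facts: if all the nonzero eigenvalues had the same sign, their sum could not be $0$, contradicting $\sum_i\lambda_i=0$. Therefore there is a strictly positive eigenvalue $\lambda_+$ and a strictly negative eigenvalue $\lambda_-$, which is exactly the assertion that $A$ is indefinite.

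For the corollary on the quadratic form $q(x)=x^\t A x$, I would pick unit eigenvectors $v_+$ and $v_-$ for $\lambda_+$ and $\lambda_-$. Then $q(tv_+)=\lambda_+ t^2 \ge 0$ with strict inequality for $t\neq 0$, while $q(tv_-)=\lambda_- t^2 \le 0$ with strict inequality for $t\neq 0$, and $q(0)=0$. So along the line spanned by $v_+$ the origin is a strict local (in fact global) minimum of $q$, and along the line spanned by $v_-$ it is a strict local maximum; this is the definition of a saddle point, and it shows $q$ takes both signs, i.e.\ it is an indefinite quadratic form.

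I do not anticipate a genuine obstacle here; the only step that deserves a second of care is the claim that a nonzero symmetric matrix must have a nonzero eigenvalue, and the computation $\mathrm{tr}(A^2)=\sum_{i,j}A_{ij}^2>0$ disposes of it immediately. If one wished to avoid even the spectral theorem for the quadratic-form part, one could simply choose indices $i\neq j$ with $A_{ij}\neq 0$ and test the vectors $e_i\pm e_j$, for which $q(e_i\pm e_j)=\pm 2A_{ij}$ already exhibits both signs; but since the statement explicitly asks about eigenvalues, I would keep the eigenvalue argument as the main line and mention the explicit test vectors only as a remark.
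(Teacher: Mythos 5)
Your argument is correct and follows the same main line as the paper: symmetric implies real eigenvalues, hollow implies trace zero implies $\sum_i\lambda_i=0$, then show the eigenvalues are not all zero, and conclude indefiniteness. You diverge in two small but worthwhile ways. First, for the ``not all zero'' step you use $\sum_i\lambda_i^2=\mathrm{tr}(A^2)=\sum_{i,j}A_{ij}^2>0$, which is a bit more direct than the paper's appeal to diagonalization (and you correctly cite the paper's route as the alternative). Second, you actually prove the ``consequently'' clause about the quadratic form having a saddle point by exhibiting eigenvectors $v_\pm$ and computing $q(tv_\pm)=\lambda_\pm t^2$; the paper states this consequence but its proof only establishes indefiniteness of $A$ and says nothing further. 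You also make explicit the final inference (if $\sum\lambda_i=0$ and some $\lambda_i\neq 0$, then both signs occur), which the paper leaves unstated after the sentence ``it only remains to prove the eigenvalues of $A$ are not all $0$.'' Your remark about testing $e_i\pm e_j$ is a nice entirely spectral-theorem-free alternative for the quadratic-form claim, and it would in fact also yield indefiniteness of $A$ directly via the variational characterization, but as you note it does not literally produce eigenvalues of each sign, so keeping it as a side remark is the right call.
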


\section{Boolean Functions}
\label{sec:boolean}

The definitive reference for this section is the survey paper by Boros and Hammer \cite{Boros_2002}. See \cite{threshold_book} and \cite{threshold_msc_thesis} for specific literature on threshold functions.

\begin{definition}[Boolean function]
    A \textit{(vector-valued) Boolean function} is a function $f: \Sigma^n \to \Sigma^m$ or $f: B^n \to B^m$.
    We call the former convention \emph{spin} or \emph{multiplicative} convention and the latter \emph{Boolean} or \emph{additive} convention.
    Unless otherwise stated, we will always use the spin convention.

    A \textit{pseudo-Boolean function} is a function $f: \Sigma^n \to \R$.  We denote the set of all such functions as $\mathcal B^n$, and note that it is a $\R$-vector space.
\end{definition}

\begin{remark}[Converting between conventions]
    While we generally favor the spin convention over the Boolean convention, the two are completely equivalent, and we can translate between them at will.
    In particular, the degree of polynomials does not change when we translate between conventions, as the following example will illustrate.
    
    Define $\varphi: \R^n \to \R^n$ by $\varphi(x) = 2x - 1$.  Notice that $\varphi$ is an affine isomorphism from $\R^n$ to itself that takes $\Sigma^n$ to $B^n$.
    Furthermore, if $H: \Sigma^d \to \R$ is a quadratic polynomial, i.e.
    \begin{align}
        H(s) = b + \sum_i h_is_i + \sum_{i<j} J_{ij}s_is_j,
    \end{align}
    then there exists a quadratic polynomial $\widetilde{H}: B^n \to \R$ such that the diagram
    \begin{center}
        \begin{tikzcd}
            \Sigma^n \arrow[r, "H"] \arrow[d, "\varphi"] & \mathbb R \\
            B^n \arrow[ur, "\widetilde H"] &
        \end{tikzcd}
    \end{center}
    commutes.  Indeed,
    \begin{equation}
        \widetilde H(s) = \tilde b ~+~ \sum_{i}\tilde h_i s_i ~+~ \sum_{i<j}\tilde J_{ij}s_is_j,
    \end{equation}
    where
    \begin{itemize}
        \item $\tilde b = b + \sum\limits_{i<j}J_{ij} ~-~ \sum\limits_{i}h_i$,
        \item $\tilde h_i = 2h_i\left(\sum\limits_{\ell < i} J_{\ell i} ~+~ \sum\limits_{i < j} J_{ij}\right)$,
        \item $\tilde J_{ij} = 4J_{ij}$.
    \end{itemize}
    Viewing $H$ as an element of $S^{\leq 2}(\mathbb R) = \bigoplus_{i=0}^2 S^i(\mathbb R)$, the above definitions demonstrate that $\widetilde H$ is obtained from $H$ via an affine isomorphism from $S^{\leq 2}(\mathbb R)$ to itself.
\end{remark}

\begin{remark}[Another conversion between conventions]
    Notice that $\Sigma$ and $B$ are precisely the multiplicative and additive groups of order 2, and hence we have a group isomorphism $s \mapsto (-1)^s$ from $B$ to $\Sigma$.  This is different from the isomorphism $\varphi$ in the previous remark, as it takes $0$ to $1$ instead of $-1$.

    Note that if $\Sigma = \{-1,1\}$ is considered to be $\mathbb F_3^\times$, then $\Sigma^n$ is precisely the $\mathbb F_3$-rational points of the $n$-dimensional algebraic torus over $\mathbb F_3$. Alternatively, we may think of it as a subgroup of either $\mathbb R^\times$ or $\mathbb C^\times$. This perspective may be enlightening in the next section.
\end{remark}

\begin{definition}
    A \emph{threshold function} is a Boolean function $f: \Sigma^n \to \Sigma$ which can be written $f(x) = \operatorname{sgn}(w_0 + w\cdot x)$ for some $w_0\in \mathbb R$ and $w\in \mathbb R^n$. We write it this way so that in coordinates
    \begin{align}
        f(x) = \operatorname{sgn}(w_0 + w_1x_1 + ... + w_nx_n).
    \end{align}
    We call $w$ the \emph{weight vector} and $w_0$ the \emph{bias}.
\end{definition}

\begin{remark}
    Consider a threshold function $f(x) = \sgn(w_0 + w\cdot x)$.  Notice the equation $w_0 + w\cdot x = 0$ defines a hyperplane in $\R^n$. The function $f$ assigns $-1$ to every point in $\Sigma^n$ on one side of this hyperplane and $+1$ to every point on the other side.  Hence the name ``threshold function."

    Thus, a function $f: \Sigma^n \to \Sigma$ is a threshold function if and only if there is a hyperplane that separates $f^{-1}(\{-1\})$ from $f^{-1}(\{1\})$.
\end{remark}

\begin{remark}
    Threshold functions are well studied, although much is still unknown about them.  The number of $n$-dimensional threshold functions is only known up to $n = 9$.
\end{remark}

\begin{figure}[h]
    \centering
    \includegraphics[width=0.2\linewidth]{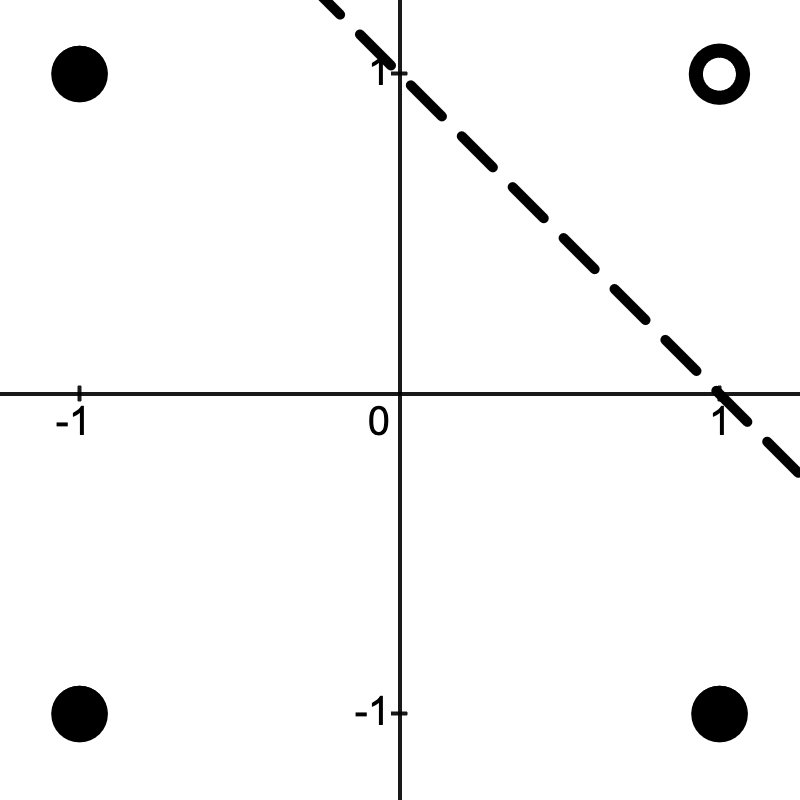}
    \quad \quad 
    \includegraphics[width=0.2\linewidth]{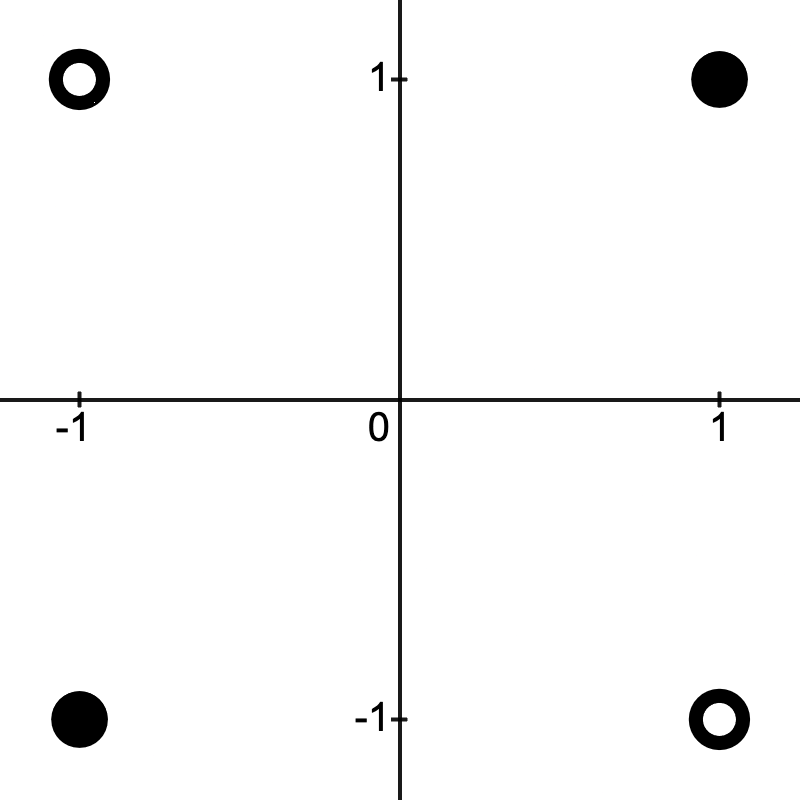}
    \caption{A visual depiction of the functions $f(x_1,x_2) = \AND(x_1,x_2)$ (left) and $f(x_1,x_2) = \XOR(x_1,x_2)$ (right).  The input space $\Sigma^n$ is plotted, with each input $x$ colored according to $f(x)$ (black for $-1$ and white for $1$).
    We see that $\AND$ is a threshold function, while $\XOR$ is not.}
    \label{fig:threshold}
\end{figure}

\section{Empirical Classification of Small Ising Circuits}
\label{sec:empirical}

\Cref{cor:threshold} can be viewed as a classification of shape $(n,1)$ circuits. General classification theorems for Ising circuits seem desirable, but as of now it is quite unclear what form such theorems should take. Nevertheless, in searching for such theorems, we made several observations regarding the structure of shape $(n,m)$ circuits for small values of $n$ and $m$. These are collected here.

We start with the following obvious lemma, which we include because it serves as the first case of every analysis to follow.

\begin{lemma}[Gluing]\label{lem:gluing}
    Let $(n,m_1,f_1)$ and $(n,m_2,f_2)$ be feasible Ising circuits.  Then $(n, m_1+m_2, f_1\times f_2)$ is feasible. 
    \begin{proof}
        Decompose $y \in \Sigma^{m_1+m_2}$ as $y = (y_1,y_2) \in \Sigma^{m_1}\times\Sigma^{m_2}$.
        Since $f_1$ and $f_2$ are feasible, they are encoded by Ising Hamiltonians $H_1(x,y_1)$ and $H_2(x,y_2)$.
        Then $H(x,y) := H_1(x,y_1) + H_2(x,y_2)$ is a quadratic pseudo-Boolean polynomial which $(n, m_1+m_2, f_1\times f_2)$. Additionally, we have
        \begin{align}
            H(x,y) = (A_1\times A_2)(x)\cdot y + y^\t
            \begin{pmatrix}
                J_1 & 0 \\
                0 & J_2
            \end{pmatrix} y.
        \end{align}
    \end{proof}
\end{lemma}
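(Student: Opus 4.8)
The plan is to prove the Gluing Lemma by direct construction: take the Hamiltonians encoding $f_1$ and $f_2$ and show that their sum encodes $f_1 \times f_2$. This is the natural approach because the variable sets involved — the output spins of circuit 1 and those of circuit 2 — are disjoint, so there is no interference term between the two subsystems. The key structural fact is that minimizing a sum of two functions over a product domain, when each summand depends on only one factor, decouples into two independent minimizations.

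First I would invoke feasibility of $(n,m_1,f_1)$ and $(n,m_2,f_2)$ to obtain Ising Hamiltonians $H_1: \Sigma^n \times \Sigma^{m_1} \to \R$ and $H_2: \Sigma^n \times \Sigma^{m_2} \to \R$ encoding them. By \Cref{prop:generic} we may assume both are generic. Next I would define $H(x,y) := H_1(x,y_1) + H_2(x,y_2)$ where $y = (y_1, y_2)$, and observe this is a quadratic pseudo-Boolean polynomial with no constant term — hence an Ising Hamiltonian on $\Sigma^n \times \Sigma^{m_1 + m_2}$. (If one wants $H$ itself to be generic, one can apply \Cref{prop:generic} again at the end, but it is cleaner to use the argmin characterization directly.) The crucial step is the computation
\begin{align}
    \argmin_{(y_1,y_2) \in \Sigma^{m_1+m_2}} H(x,y_1,y_2)
    = \left( \argmin_{y_1 \in \Sigma^{m_1}} H_1(x,y_1),\ \argmin_{y_2 \in \Sigma^{m_2}} H_2(x,y_2) \right)
    = (f_1(x), f_2(x)),
\end{align}
which holds because for fixed $x$ the objective separates as a sum of a function of $y_1$ alone and a function of $y_2$ alone, so each coordinate block can be minimized independently. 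This shows $f_H = f_1 \times f_2$, i.e. $H$ encodes $(n, m_1+m_2, f_1\times f_2)$.

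Finally I would record the block form of the coupling matrix: writing $H_i(x,y_i) = A_i(x)\cdot y_i + y_i^\t J_i y_i$ per \eqref{hamiltonian}, the sum has affine part $(A_1 \times A_2)(x)$ and block-diagonal coupling matrix $\mathrm{diag}(J_1, J_2)$, with the off-diagonal blocks zero precisely because $H_1$ and $H_2$ share no output variables. I do not anticipate a genuine obstacle here — the only point requiring a moment's care is the decoupling of the argmin, and even that is immediate once one notes the objective is a sum over disjoint variable blocks; the rest is bookkeeping about which terms appear in the quadratic form.
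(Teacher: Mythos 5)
Your proof is correct and takes essentially the same route as the paper's: construct $H = H_1 + H_2$ and observe that it encodes $f_1 \times f_2$ because the quadratic form is block-diagonal over the disjoint output variables. You spell out the argmin-decoupling step, which the paper leaves implicit, but the underlying argument is identical.
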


\subsection{Shape (1,m) and (2,m) Circuits}

\begin{proposition}
    All shape $(1,m)$ circuits are feasible.
    \begin{proof}
        The only functions $f: \Sigma \to \Sigma$ are
        \begin{enumerate}
            \item $f(x) = -1$,
            \item $f(x) = 1$,
            \item $f(x) = x$,
            \item $f(x) = -x$.
        \end{enumerate}
        These are all threshold functions and thus feasible by \Cref{cor:threshold}.  Consider a function $f = f_1\times\dots\times f_m : \Sigma \to \Sigma^m$.  Then each of its component functions $f_i:\Sigma\to\Sigma$ is feasible.
        Since $f$ consists of two independently feasible Ising circuits, it is feasible.
    \end{proof}
\end{proposition}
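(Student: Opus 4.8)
The plan is to reduce the claim to the two results already in hand: the identification of feasible one-output circuits with threshold functions (\Cref{cor:threshold}) and the Gluing Lemma (\Cref{lem:gluing}). A shape $(1,m)$ circuit is a triple $(1,m,f)$ with $f:\Sigma\to\Sigma^m$. Writing $f=f_1\times\cdots\times f_m$ in components, each $f_i:\Sigma\to\Sigma$ is a function on a two-element set, hence one of the four maps $x\mapsto 1$, $x\mapsto -1$, $x\mapsto x$, $x\mapsto -x$. So the first step is to verify that each of these four maps is a threshold function: the two constants are $x\mapsto\sgn(w_0)$ with weight vector $0$ and bias $w_0=1$ or $w_0=-1$; the identity is $x\mapsto\sgn(x)$, which agrees with $x\mapsto x$ on $\Sigma$; and its negation is $x\mapsto\sgn(-x)$. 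By \Cref{cor:threshold}, each of the circuits $(1,1,f_i)$ is therefore feasible.

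The second step is to glue. Proceeding by induction on $m$, the base case $m=1$ is exactly the previous paragraph. For the inductive step, write $(1,m,f)$ as the product of $(1,m-1,f_1\times\cdots\times f_{m-1})$, feasible by the inductive hypothesis, with $(1,1,f_m)$, feasible by \Cref{cor:threshold}; \Cref{lem:gluing} (applied with $n=1$) then yields feasibility of $(1,m,f)$.

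I do not expect any real obstacle here; the argument is essentially bookkeeping on top of \Cref{cor:threshold} and \Cref{lem:gluing}. The only point requiring the slightest care is the convention for $\sgn$ at $0$ in the definition of threshold function, which is why the two constant maps must be exhibited with a nonzero bias $w_0$ rather than with $w_0=0$; with that caveat the four component maps are visibly threshold functions and the rest follows formally.
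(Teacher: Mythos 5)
Your proof is correct and takes essentially the same approach as the paper: identify the four unary Boolean functions as threshold functions (hence each component circuit is feasible by \Cref{cor:threshold}), then combine components via \Cref{lem:gluing}. The only difference is that you spell out the threshold verifications and phrase the repeated gluing as an explicit induction on $m$, which the paper leaves implicit.
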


\begin{theorem}
    The only infeasible shape $(2,1)$ circuits are $\XOR$ and $-\XOR$.
    \begin{proof}
        All functions $f:\Sigma^2 \to \Sigma$ can be depicted visually as done in \Cref{fig:threshold}.  One can see that $\XOR$ and $-\XOR$ are the only ones that are not threshold functions.
    \end{proof}
\end{theorem}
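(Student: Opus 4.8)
The plan is to invoke \Cref{cor:threshold} to reduce this to a finite, elementary classification. That theorem says $(2,1,f)$ is feasible if and only if $f \colon \Sigma^2 \to \Sigma$ is a threshold function, and by the remark following the definition of threshold functions, $f$ is a threshold function precisely when some affine line in $\R^2$ separates $f^{-1}(\{-1\})$ from $f^{-1}(\{1\})$. Since $\Sigma^2$ is the four-element vertex set of a square, there are exactly $2^4 = 16$ candidate functions, and I would check them all, organized by $k := \#f^{-1}(\{-1\}) \in \{0,1,2,3,4\}$.

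For $k = 0$ and $k = 4$ the function is constant, hence a threshold function with weight vector $0$ and an appropriately-signed bias. For $k = 1$ the single point of $f^{-1}(\{-1\})$ is a corner of the square and can be cut off from the other three corners by a line parallel to the diagonal through the opposite corner; the case $k = 3$ is dual. This disposes of $1 + 4 + 4 + 1 = 10$ functions, all feasible, and leaves the $\binom{4}{2} = 6$ functions with $k = 2$.

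When $k = 2$, the set $f^{-1}(\{-1\})$ is either a pair of corners sharing an edge of the square ($4$ such pairs) or a pair of diagonally opposite corners ($2$ such pairs). In the edge case, the line parallel to that edge through the center of the square separates the two preimages, so those $4$ functions are threshold functions. In the diagonal case, the segment joining the two points of $f^{-1}(\{-1\})$ and the segment joining the two points of $f^{-1}(\{1\})$ cross at the center of the square, so any closed half-plane containing $f^{-1}(\{-1\})$ also contains that crossing point and hence meets $f^{-1}(\{1\})$; no separating line exists. Thus exactly these $2$ functions fail to be threshold functions. Finally I would observe that these are precisely $\XOR$ and $-\XOR$: their $-1$-preimages are the two diagonal pairs $\{(1,1),(-1,-1)\}$ and $\{(1,-1),(-1,1)\}$, which one being which depending only on the sign convention. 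Tallying, $14$ of the $16$ shape-$(2,1)$ circuits are feasible, and the two infeasible ones are $\XOR$ and $-\XOR$.

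There is no genuinely hard step here; the only point that warrants a written sentence is the geometric fact that two opposite corners of a square cannot be affinely separated from the other two, which the crossing-diagonals argument above settles. If one prefers to avoid geometry, the same conclusion follows algebraically: supposing $\XOR(x) = \sgn(w_0 + w_1 x_1 + w_2 x_2)$ for all four inputs (with the sign arguments nonzero, as feasibility forces), one adds the defining strict inequalities over the diagonal $\{(1,1),(-1,-1)\}$ to get $2w_0 > 0$ and over $\{(1,-1),(-1,1)\}$ to get $2w_0 < 0$, a contradiction. Either route is short, so the bulk of the writeup is simply the bookkeeping of the sixteen cases.
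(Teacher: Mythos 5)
Your argument is correct and follows the paper's approach exactly: invoke \Cref{cor:threshold} to reduce feasibility to the threshold-function property, then inspect all sixteen Boolean functions on $\Sigma^2$. The paper compresses the case check to ``one can see'' by pointing at \Cref{fig:threshold}, whereas you spell out the bookkeeping by preimage size and give both a geometric (crossing diagonals) and algebraic (contradictory signs of $w_0$) justification for the one nontrivial case, but the underlying reasoning is the same.
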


\begin{theorem}
    A circuit $(2,m,f)$ is feasible if and only if at least one of the following is satisfied:
    \begin{enumerate}
        \item No component of $f$ is $\XOR$ or $-\XOR$.
        \item At least one component of $f$ is $\AND$, up to spin action.
    \end{enumerate}
    \begin{proof}
        If all component functions of $f$ are feasible, then $f$ is feasible by \Cref{lem:gluing}. Suppose then that $f$ has at least one infeasible component function, necessarily of shape $(2,1)$.

        The only infeasible shape $(2,1)$ circuit, up to sign, is $\XOR$. Such a circuit is solved by the presence of an $\AND$ component up to spin action equivalence \cite{andrewisaac}. Thus, a circuit is feasible either if it contains no $\XOR$ functions or if it contains at least one $\AND$ component up to linear equivalence. 
    \end{proof}
\end{theorem}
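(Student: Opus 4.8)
The plan is to prove the two implications separately, reducing each to the already-settled classification of shape $(2,1)$ circuits, whose only infeasible members are $\XOR$ and $-\XOR$. Throughout, call a component \emph{$\XOR$-like} if it equals $\XOR$ or $-\XOR$; the only functions $\Sigma^2\to\Sigma$ that are not $\XOR$-like are threshold functions (the two constants, the four coordinate functions $\pm x_k$, and the eight functions equivalent to $\AND$ up to the spin action).

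For the ``if'' direction there are two cases. If condition (1) holds, every component of $f$ is a threshold function, hence a feasible shape $(2,1)$ circuit by \Cref{cor:threshold}, so $f$ is feasible by repeated application of \Cref{lem:gluing}. If condition (2) holds, partition the components of $f$ into the $\XOR$-like ones and the rest; the rest form a feasible sub-circuit (each is a threshold function), and one of them is equivalent to $\AND$ up to the spin action, so by \Cref{lem:gluing} it suffices to prove feasibility of the sub-circuit consisting of all the $\XOR$-like components together with one $\AND$-equivalent component. A single $\XOR$-like component paired with one $\AND$-equivalent component is, up to the spin action, exactly the circuit $\XOR\times\AND$ realized in \Cref{fig:xor_soln} (see also \cite{andrewisaac}), hence feasible; let $H_0(x,y_1,y_2)$ be a generic Ising Hamiltonian encoding it. For each further $\XOR$-like component $f_k$, adjoin a new output spin $y_k$ together with the single coupling $-\varepsilon_k\,y_1y_k$, where $\varepsilon_k=+1$ if $f_k=\XOR$ and $\varepsilon_k=-1$ if $f_k=-\XOR$. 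Since $y_k$ is coupled to nothing but $y_1$, every ground state has $y_k=\varepsilon_k y_1=\varepsilon_k\XOR(x)=f_k(x)$, so the enlarged Hamiltonian encodes all the $\XOR$-like components along with the $\AND$-equivalent one; a final application of \Cref{lem:gluing} yields feasibility of $f$, and \Cref{prop:generic} disposes of any genericity concern.

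For the ``only if'' direction I would prove the contrapositive: suppose neither (1) nor (2), so $f$ has a $\XOR$-like component and no $\AND$-equivalent component --- that is, every component is a constant, a coordinate function $\pm x_k$, or $\pm\XOR$ --- and show $f$ is infeasible. The engine is a \emph{peeling lemma}: if $(2,m,g)$ is feasible and its last component $g_m$ is (i) constant, (ii) of the form $\pm x_k$, or (iii) equal to $\pm g_j$ for some other component $g_j$, then the circuit $(2,m-1,g')$ obtained by deleting $g_m$ is feasible. To prove it, take a generic Ising Hamiltonian $H(x,y)$ encoding $g$, written in the reduced form of \Cref{subsubsec:reduction} (available by \Cref{prop:generic}), and substitute for $y_m$ the quantity $c$, $x_k$, or $\pm y_j$, respectively. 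Using $x_k^2=y_j^2=1$, every term produced by the substitution is a constant, a function of $x$ alone, an $x_ky_i$ term, or a $y_iy_j$ term; absorbing the $x_ky_i$ terms into the affine maps $A_i$ and the $y_iy_j$ terms into the coupling matrix, the result $H'(x,y')$ is again a valid reduced Ising Hamiltonian. For each fixed $x$ the global minimizer of $H(x,\cdot)$ lies on the substituted slice and remains a strict minimizer there --- the $2^{m-1}$ energies on the slice are pairwise distinct because $H$ is generic --- so $H'$ encodes $g'$.

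With the peeling lemma in hand the conclusion is immediate: starting from the hypothetical feasible $f$, peel off all constant components, then all coordinate-function components, and finally --- any two $\XOR$-like components being copies or negatives of each other, which is case (iii) --- peel off $\XOR$-like components one at a time until exactly one remains. This leaves a feasible shape $(2,1)$ circuit whose sole component is $\XOR$ or $-\XOR$, contradicting the classification of $(2,1)$ circuits; hence $f$ is infeasible. I expect the peeling lemma to be the main obstacle: the delicate points are checking that every term produced by the substitution fits inside the allowed shape of a reduced Ising Hamiltonian --- the stray constant and input-only terms being harmless precisely because they do not affect $\argmin_{y'}$ --- and observing that it is genericity of $H$ that survives restriction to the slice and keeps the minimizer strict. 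Everything else is either bookkeeping with \Cref{lem:gluing} or the soft strong-coupling construction, which may alternatively be attributed to \cite{andrewisaac}.
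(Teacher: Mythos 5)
Your proof is correct, but it travels a genuinely different road from the paper's. The paper's argument is a single paragraph that leans almost entirely on the cited reference \cite{andrewisaac}: it notes that the only infeasible $(2,1)$ component is $\XOR$ up to sign, asserts (via the citation) that an $\XOR$ component is ``solved'' by the presence of an $\AND$-equivalent component, and then states the conclusion; the ``only if'' direction is essentially delegated to the reference rather than argued in-text. You instead make the result self-contained by supplying two small tools that the paper never states. For the ``if'' direction you use a strong-coupling (slave-spin) construction: starting from a Hamiltonian for $\XOR\times\AND$, you adjoin each additional $\XOR$-like output $y_k$ with a single coupling $-\varepsilon_k y_1 y_k$, which forces $y_k=\varepsilon_k y_1$ in the ground state without disturbing the original argmin because the new term always contributes $-1$ at its own minimum; a final gluing finishes it. For the ``only if'' direction you prove a \emph{peeling lemma} --- that a component equal to a constant, $\pm x_k$, or $\pm$ another component can be substituted into a reduced Hamiltonian, with all the resulting constant and $x$-only terms discarded and the rest absorbed into the affine part and the coupling matrix, leaving a reduced Hamiltonian for the smaller circuit whose minimizer is the restriction of the original one. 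Iterating the peeling lemma reduces any hypothetical counterexample to a feasible $(2,1)$ $\XOR$ circuit, contradicting \Cref{cor:threshold}. What your route buys is rigor and independence from the external reference (you need only the base case $\XOR\times\AND$, which the paper itself exhibits in \Cref{fig:xor_soln}); what it costs is length. One detail worth flagging explicitly if you write this up: in the substitution $y_m\mapsto \pm x_k$, the sign must track which of $\pm x_k$ the component actually is --- you say ``substitute $x_k$'' where ``substitute the correct signed $x_k$'' is meant --- and the resulting $H'$ is a valid Hamiltonian only after discarding the irrelevant $x$-only and constant residue, which you correctly note does not affect $\argmin_{y'}$.
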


\subsection{Shape (3,m) Circuits}

\begin{definition}
    We call a circuit $(n,m,f)$ of \textit{type 0} if all of $f$'s components are feasible, of \textit{type 1} if some but not all $f$'s components are feasible, and of \textit{type 2} if none of  $f$'s components are feasible.
\end{definition}

\begin{proposition}$ $
    \begin{enumerate}
        \item All type 0 circuits are feasible.
        \item There exist feasible and infeasible type 1 circuits.
        \item There exist feasible and infeasible type 2 circuits.
    \end{enumerate}
    \begin{proof}$ $
        \begin{enumerate}
            \item This is a consequence of \Cref{lem:gluing}.
            \item The circuit $(\XOR\times 1)(x_1,x_2)$ is type 1 infeasible, and the circuit $(\XOR\times\AND)(x_1,x_2)$ is type 1 feasible.
            \item The circuit $(\XOR\times\XOR)(x_1,x_2)$ is type 2 infeasible, and it can be shown the circuit below is type 2 feasible.  It is written in Boolean convention for readability.
            \begin{equation}\begin{array}{ccc|ccc}
                x_1 & x_2 & x_3 & f_1 & f_2 & f_3 \\
                \hline
                0 & 0 & 0 & 0 & 0 & 0 \\
                0 & 0 & 1 & 0 & 0 & 0 \\
                0 & 1 & 0 & 0 & 0 & 0 \\
                0 & 1 & 1 & 0 & 1 & 1 \\
                1 & 0 & 0 & 0 & 0 & 0 \\
                1 & 0 & 1 & 1 & 0 & 1 \\
                1 & 1 & 0 & 1 & 1 & 1 \\
                1 & 1 & 1 & 0 & 0 & 0
            \end{array}\end{equation}
        \end{enumerate}
    \end{proof}
\end{proposition}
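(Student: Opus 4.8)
The plan is to treat the three items independently. Item (1) falls out of gluing; the infeasibility witnesses needed for (2) and (3) both come from one small lemma about functions on $\Sigma^2$ invariant under $x\mapsto-x$; and the feasibility witnesses for (2) and (3) are produced by direct exhibition. For (1): a type 0 circuit is by definition $(n,\ m_1+\dots+m_k,\ f_1\times\dots\times f_k)$ with each $f_i$ feasible, so the claim follows from \Cref{lem:gluing} by induction on the number of components.

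The technical heart is the following criterion, which I would prove first: \emph{if $g:\Sigma^2\to\Sigma^{m}$ satisfies $g(-x)=g(x)$ for all $x\in\Sigma^2$ and $g$ is non-constant, then $(2,m,g)$ is infeasible.} Suppose it were feasible. By \Cref{thm:res_ham_feas}, feasibility provides some $J\in\R^{m\times m}_{UT}$ and $A=[T,b]\in\Aff(\R^2,\R^m)$ with $A(x)\in\mathcal{C}_J(g(x))$ for all $x\in\Sigma^2$. Since $A$ is affine, $A(x)+A(-x)=2b$, so $b$ is the common midpoint of the pairs $\{A(1,1),A(-1,-1)\}$ and $\{A(1,-1),A(-1,1)\}$. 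By invariance, $g$ is constant equal to some $y_a$ on $\{(1,1),(-1,-1)\}$ and equal to some $y_b$ on $\{(1,-1),(-1,1)\}$, with $y_a\ne y_b$ since $g$ is non-constant. Each cell $\mathcal{C}_J(y)$ is convex by \Cref{prop:isingpoly}, so $b\in\mathcal{C}_J(y_a)\cap\mathcal{C}_J(y_b)$, contradicting that $\mathcal{C}_J$ is a partition. Applying this criterion: $\XOR$ is invariant under $x\mapsto-x$ and non-constant, so $\XOR\times\ind$ is invariant and non-constant, hence infeasible, and it is type 1 (its $\XOR$ component is infeasible, its constant component feasible) — the infeasible witness for (2). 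Likewise $\XOR\times\XOR$ is invariant and non-constant, hence infeasible, and it is type 2 — the infeasible witness for (3). (If one wants these to have three inputs, adjoin a dummy input variable; this changes neither feasibility of the circuit nor feasibility of any component.)

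For the feasible witness in (2), the circuit $\XOR\times\AND$ has $\AND$ among its components, so it is feasible by the classification of shape-$(2,m)$ circuits (concretely, by the explicit Hamiltonian shown in \Cref{fig:xor_soln}), and it is type 1 since $\XOR$ is infeasible while $\AND$ is feasible. For the feasible witness in (3), I would take the displayed $(3,3)$ truth table and carry out two steps. First, verify that none of $f_1,f_2,f_3$ is a threshold function, so the circuit is genuinely type 2: for each $f_i$, assume a separating form $x\mapsto w\cdot x-\theta$ exists and exhibit a short $\{0,1\}$-combination of the inequalities it would have to satisfy that is self-contradictory (for $f_1$, the inequalities at inputs $101$ and $110$ jointly contradict those at $111$ and $100$, since both pairs yield the weight sum $2w_1+w_2+w_3$; similar small combinations rule out $f_2$ and $f_3$). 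Second, produce an explicit Ising Hamiltonian $H(x,y)=A(x)\cdot y+y^\t Jy$ realizing the table — obtained by solving the finite linear program \Cref{eq:linprog}, which has only $8$ input levels here — and check its finitely many strict-minimization conditions; this last verification is a bounded computation, conveniently delegated to a computer.

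The step I expect to be the main obstacle is the infeasibility criterion: recognizing that the common feature of $\XOR\times\ind$ and $\XOR\times\XOR$ is invariance under the antipodal map on $\Sigma^2$, and that this invariance, combined with convexity of the minimizing cells, forces the two antipodal pairs into cells sharing the midpoint $b$. Once that lemma is available, the remaining work is either a one-line appeal to \Cref{lem:gluing} or to the shape-$(2,m)$ classification, or an explicit bounded check.
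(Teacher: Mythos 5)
Your proof is correct, and it takes a genuinely different — and in one respect more self-contained — route than the paper's. The paper's proof is a bare enumeration of witnesses: it cites \Cref{lem:gluing} for (1), asserts the (in)feasibility of $\XOR\times\ind$, $\XOR\times\AND$, and $\XOR\times\XOR$ without argument (those claims follow from the paper's earlier shape-$(2,m)$ classification theorem, whose proof in turn leans on the external reference \cite{andrewisaac}), and says merely ``it can be shown'' for the type-2 feasible $(3,3)$ truth table. Your central innovation is the antipodal-invariance lemma: if $g:\Sigma^2\to\Sigma^m$ is non-constant and satisfies $g(-x)=g(x)$, then any affine $A=[T,b]$ would have to place both points of each antipodal pair in the same open convex cell $\mathcal C_J(\cdot)$, so by convexity the common midpoint $b$ lands in two disjoint cells at once. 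This handles both infeasibility witnesses at once, purely from \Cref{thm:res_ham_feas} and \Cref{prop:isingpoly}, with no appeal to the $(2,m)$ classification or to outside work; that is a real gain in self-containment. Your dummy-input remark is sound (fixing an input coordinate restricts a Hamiltonian on $\Sigma^{3+m}$ to one on $\Sigma^{2+m}$ up to an irrelevant constant, so feasibility is unchanged in both directions), and your hand-verified non-threshold checks for $f_1,f_2,f_3$ via paired inequalities are exactly the kind of argument needed; the one remaining step — exhibiting or computationally solving the LP for the $(3,3)$ table — is a bounded check you correctly flag, and the paper leaves it equally implicit.
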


\begin{proposition}
    By exhaustive analysis of all  $m^{2^n} = 65,536$ shape $(3,2)$ circuits, we have determined that there are
    \begin{enumerate}
        \item 10,816 type 0 circuits, all of which are feasible,
        \item 31,616 type 1 circuits, 7,808 of which are feasible,
        \item 23,104 type 2 circuits, none of which are feasible.
    \end{enumerate}
\end{proposition}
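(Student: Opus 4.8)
The plan is to organize the verification so that the genuinely non-computational content — the type counts and the feasibility of all type~0 circuits — is dispatched by hand, and only the feasibility status of individual type~1 and type~2 circuits is settled by a finite, decidable search. First I would count the types. A shape $(3,2)$ circuit is a map $f = f_1\times f_2:\Sigma^3\to\Sigma^2$, of which there are $(2^2)^{2^3} = 4^8 = 65{,}536$, and by \Cref{cor:threshold} a component $f_i:\Sigma^3\to\Sigma$ is a feasible shape $(3,1)$ circuit exactly when it is a threshold function. Using the classical fact that there are exactly $104$ threshold functions in $3$ variables (itself checkable by testing each of the $256$ functions $\Sigma^3\to\Sigma$ for linear separability), one gets $104^2 = 10{,}816$ type~0 circuits, $2\cdot104\cdot152 = 31{,}616$ type~1 circuits, and $152^2 = 23{,}104$ type~2 circuits, summing to $65{,}536$. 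For type~0 no computation is needed: both components are feasible, so \Cref{lem:gluing} produces an encoding Hamiltonian, and all $10{,}816$ are feasible.

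For the remaining $54{,}720$ circuits I would decide feasibility via linear programming. For a fixed $f$, \Cref{thm:res_ham_feas} says $(3,2,f)$ is feasible iff there exist $J_{12}\in\R$ and an affine map $A=[T,b]\in\Aff(\R^3,\R^2)$ — nine real unknowns — with $A(x)\in\mathcal C_J(f(x))$ for all $x\in\Sigma^3$, and by \Cref{prop:isingpoly} each such membership is a conjunction of at most three strict inequalities that are linear in $(T,b,J_{12})$. (Equivalently one may run \Cref{eq:linprog} directly.) This is a rational linear feasibility problem with nine variables and at most $24$ constraints, hence exactly decidable — by Fourier--Motzkin elimination or an exact-arithmetic simplex. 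Solving it for every type~1 and type~2 circuit and tallying gives the asserted $7{,}808$ feasible type~1 circuits and $0$ feasible type~2 circuits.

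Two reductions make this a computation one can actually trust. Feasibility and type are both invariant under the group $G$ generated by the affine automorphisms of the input cube $\Sigma^3$ (the hyperoctahedral group, order $48$) together with the signed permutations of the two output coordinates (order $8$): precomposing or postcomposing an encoding Hamiltonian by such relabelings again yields an Ising Hamiltonian encoding the relabeled circuit, and threshold-ness of the components is preserved. Passing to one representative per $G$-orbit cuts $65{,}536$ down to a few hundred linear programs. Moreover, for each feasible circuit the solver returns an explicit $H$, which can be checked against \Cref{prop:global_min}, while for each infeasible circuit I would record a Farkas certificate — a contradictory nonnegative combination of the defining inequalities — which can likewise be verified by hand in rational arithmetic. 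With these certificates the proposition reduces to a finite, independently checkable computation.

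The main obstacle is not the LP solving, which is trivial, but \emph{certifying the negative part}: one must supply Farkas certificates for all $23{,}104$ infeasible type~2 circuits and the $23{,}808$ infeasible type~1 circuits (more precisely, for the infeasible orbit representatives). A more satisfying route for the type~2 claim would be structural: for $m=2$ the diagram $\mathcal C_J$ consists of two ``corner'' cells and a strip split along the diagonal, so writing $A(x)=(u(x),v(x))$ one finds $f_1(x)=1 \iff u(x)<\operatorname{clamp}_{[-1,1]}(v(x))$ and $f_2(x)=1 \iff v(x)<\operatorname{clamp}_{[-1,1]}(u(x))$, and one would try to show these two clamped comparisons cannot both fail to be threshold functions on a set of only $8$ points. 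Carrying that argument through cleanly is more delicate than the exhaustive check, however, and goes beyond what the stated proposition claims.
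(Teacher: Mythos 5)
The paper gives no proof for this proposition---it is stated as the bare outcome of an exhaustive computation. Your plan is in the end the same route (decide each of the $65{,}536$ circuits by a small rational linear program), but you add genuine content in several places, and the whole thing is sound. First, you derive the type counts analytically from the classical fact that exactly $104$ of the $256$ functions $\Sigma^3\to\Sigma$ are threshold: $104^2 = 10{,}816$, $2\cdot 104\cdot 152 = 31{,}616$, $152^2 = 23{,}104$, summing to $65{,}536$; the paper merely reports these numbers. Second, you dispatch all $10{,}816$ type-0 circuits by \Cref{lem:gluing} with no computation at all. Third, the reduction to one representative per orbit of the signed input/output symmetry group and the Farkas certificates for the infeasible cases upgrade ``we ran the code'' to an independently checkable argument, which is exactly what a proposition of this kind should have. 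Your LP setup is also correct: with $m=2$, $n=3$ one has the nine unknowns $(T,b,J_{12})$ and at most $2^3\cdot(2^2-1)=24$ constraints, each linear in those unknowns by \Cref{prop:isingpoly}, so exact Fourier--Motzkin or rational simplex decides feasibility. The closing structural sketch for the type-2 case is the only speculative part, and you correctly flag it as not carried through and not needed. One small side remark: the formula $m^{2^n}$ in the proposition's statement is a typo (it would give $2^8 = 256$); the correct count $(2^m)^{2^n} = 4^8 = 65{,}536$ is what you use.
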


\begin{remark}
    Of interest here is that there are no type 2 feasible shape $(3,2)$ circuits.  One might think that there are no type 2 feasible shape $(n,2)$ circuits, but this is incorrect, as the following is a counterexample:
    \begin{equation}\begin{array}{cccc|cc}
        x_1 & x_2 & x_3 & x_4 & f_1 & f_2 \\
        \hline
        0 & 0 & 0 & 0 & 0 & 0 \\
        0 & 0 & 0 & 1 & 0 & 0 \\
        0 & 0 & 1 & 0 & 0 & 0 \\
        0 & 0 & 1 & 1 & 0 & 0 \\
        0 & 1 & 0 & 0 & 0 & 0 \\
        0 & 1 & 0 & 1 & 0 & 0 \\
        0 & 1 & 1 & 0 & 0 & 0 \\
        0 & 1 & 1 & 1 & 0 & 1 \\
        1 & 0 & 0 & 0 & 0 & 0 \\
        1 & 0 & 0 & 1 & 0 & 0 \\
        1 & 0 & 1 & 0 & 0 & 1 \\
        1 & 0 & 1 & 1 & 0 & 1 \\
        1 & 1 & 0 & 0 & 0 & 0 \\
        1 & 1 & 0 & 1 & 1 & 0 \\
        1 & 1 & 1 & 0 & 1 & 0 \\
        1 & 1 & 1 & 1 & 0 & 1
    \end{array}\end{equation}
\end{remark}

\section{Tropical Curve Formulation}
\label{sec:tropical}
The cells in the Ising minimization partitions introduced in \Cref{sec:res_ham} can be realized as the connected components of the complement of a certain tropical polynomial. Similarly, it 

Fix $n$ and $s\in \Sigma^n$. Consider the following Laurent monomial in the variables $h_i$ and $J_{ij}$:
    \begin{align*}
        \varphi_s(h, J) = \prod h_i^{s_i} \cdot \prod_{i<j} J_{ij}^{s_i\cdot s_j}.
    \end{align*}
    By summing over all $n$ we obtain a polynomial which might be regarded as the "total Ising polynomial of size $n$":
    \begin{align*}
        \varepsilon(h,J) = \sum_{s\in \Sigma^n}\varphi^n_s(h,J).
    \end{align*}
    When the choice of $n$ is ambiguous, we will denote $\varphi_s$ by $\varphi^n_s$ and $\varepsilon$ by $\varepsilon^n$.
\begin{remark}
    The tropicalization $\operatorname{trop}(\varphi_s)$ of $\varphi_s$ is precisely an arbitrary Ising Hamiltonian on $n$ spins evaluated at the state $s$. The tropicalization $\operatorname{trop}(\varepsilon)$ of the characteristic Ising polynomial of size $n$ is the function
    \begin{align*}
        f(h,J) = \argmin_{s\in \Sigma^n} H_{h,J}(s)
    \end{align*}
    where $H_{h,J}$ is an Ising Hamiltonian on $n$ spins with local bias $h$ and interaction strength $J$.
\end{remark}
\begin{remark}
    The tropical variety $V(\trop \varepsilon)$ is precisely the set of Ising Hamiltonians on $n$ spins with degenerate ground states. Its complement $\mathbb R^{\frac{n(n+1)}{2}} \setminus V(\trop \varepsilon)$ has exactly $2^n$ connected components, and any two points in the same connected component have identical ground states. That is, there is a bijection $\Gamma:\Sigma^n \to H_0(\mathbb R^{\frac{n(n+1)}{2}} \setminus V(\trop \varepsilon)$ such that $\Gamma(s)$ is the simply connected region of all Ising Hamiltonians with nondegenerate ground state $s$.

    The argument in \Cref{lem:generic} could be simplified -- at the cost of obfuscation by general tropical nonsense -- by simply noting that any Hamiltonian with degenerate energy states must live on $V(\trop \varepsilon)$, which is codimension 1.
\end{remark}
\begin{definition}
    Fix $n$. Call $\varepsilon_J = \varepsilon(-, J)$ the ``linear piece of $\varepsilon$'' or the ``fixed $J$-polynomial''. This is precisely the \emph{residual Ising solution map} or \emph{ground state map} for fixed $J$ in equation (\ref{eqn:ground-state-map}). Likewise, we call $\varepsilon_h = \varepsilon(h, -)$ the ``fixed $h$-polynomial'' or the ``quadratic total polynomial''.
\end{definition}
\begin{lemma}
    $V(\operatorname{trop}(\varepsilon^n)) = V(\operatorname{trop}(\varepsilon^{n+1}_{h=0}))$.
\end{lemma}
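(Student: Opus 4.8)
The plan is to show that, under the evident linear identification of the two parameter spaces, $\operatorname{trop}(\varepsilon^n)$ and $\operatorname{trop}(\varepsilon^{n+1}_{h=0})$ are literally the same piecewise-linear function; equality of the tropical varieties is then automatic, since $V(\operatorname{trop} p)$ is the corner (non-smoothness) locus of the piecewise-linear function $\operatorname{trop} p$, i.e.\ the set where its min is attained by at least two distinct exponent vectors.

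First I would write both tropicalizations out in min-plus form. Here $\operatorname{trop}(\varepsilon^n)$ is the function of $(h,J)\in\R^n\times\R^{\binom n2}$ given by $\min_{t\in\Sigma^n}\big(\sum_i t_ih_i+\sum_{i<j}t_it_jJ_{ij}\big)$, while $\operatorname{trop}(\varepsilon^{n+1}_{h=0})$ --- where ``$h=0$'' means each tropical factor $h_i^{s_i}$ is dropped, as $0$ is the unit for tropical multiplication --- is the function of $J'\in\R^{\binom{n+1}2}$ given by $\min_{s\in\Sigma^{n+1}}\sum_{1\le i<j\le n+1}s_is_jJ'_{ij}$. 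Since $\binom{n+1}{2}=n+\binom n2=\tfrac{n(n+1)}{2}$, both are functions on $\R^{n(n+1)/2}$. Next I would define the linear isomorphism $\iota$ of parameter spaces sending $h_i\mapsto J'_{i,n+1}$ for $1\le i\le n$ and $J_{ij}\mapsto J'_{ij}$ for $1\le i<j\le n$. The $(n+1)$-spin Hamiltonian $\sum_{i<j}s_is_jJ'_{ij}$ is invariant under the global flip $s\mapsto -s$, so its minimum over $\Sigma^{n+1}$ equals its minimum over $\{s\in\Sigma^{n+1}:s_{n+1}=1\}$; restricting to $s_{n+1}=1$ and writing $t=(s_1,\dots,s_n)$ gives $\sum_{i<j\le n+1}s_is_jJ'_{ij}=\sum_{i<j\le n}t_it_jJ'_{ij}+\sum_{i\le n}t_iJ'_{i,n+1}$, which is precisely the $n$-spin Hamiltonian $\sum_{i<j}t_it_jJ_{ij}+\sum_i t_ih_i$ after applying $\iota^{-1}$. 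Hence $\operatorname{trop}(\varepsilon^{n+1}_{h=0})\circ\iota=\operatorname{trop}(\varepsilon^n)$ as piecewise-linear functions, so they have the same corner locus and $\iota\big(V(\operatorname{trop}\varepsilon^n)\big)=V(\operatorname{trop}\varepsilon^{n+1}_{h=0})$.

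The step I expect to require the most care --- the main obstacle --- is the bookkeeping around repeated monomials and pinning down exactly which notion of tropical variety is in force. The $2^{n+1}$ spin states yield only $2^n$ distinct exponent vectors $(s_is_j)_{i<j}$: clearly $s$ and $-s$ give the same one, and conversely if $(s_is_j)=(s'_is'_j)$ for all $i<j$ then $t_i:=s_is'_i$ satisfies $t_it_j=1$ for all $i,j$, forcing $t$ to be constant and $s'=\pm s$. So the Newton polytope and the (all-zero) coefficient data of $\operatorname{trop}(\varepsilon^{n+1}_{h=0})$ match those of $\operatorname{trop}(\varepsilon^n)$ under $\iota$, and whichever standard convention one adopts --- corner locus, ``minimum attained by two distinct exponent vectors'', or the induced regular subdivision --- yields the same set. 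One should flag that if instead one demanded the $(n+1)$-spin ground state to be \emph{literally} nondegenerate, then $V(\operatorname{trop}\varepsilon^{n+1}_{h=0})$ would be all of $\R^{n(n+1)/2}$ by the $\pm$ symmetry; the lemma concerns degeneracy \emph{modulo} the global flip, which is exactly the corner-locus reading used above and the one consistent with the surrounding remarks (codimension $1$, with $2^n$ complementary components).
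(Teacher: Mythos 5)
Your argument is correct and is essentially the paper's: identify the local fields $h_i$ with the couplings $J'_{i,n+1}$ (the paper uses index $0$, you use $n+1$; cosmetic), exploit the $s\mapsto -s$ symmetry of the zero-field $(n+1)$-spin Hamiltonian to restrict the min to $s_{n+1}=1$, and observe that the resulting tropical polynomial coincides with $\operatorname{trop}(\varepsilon^n)$ under this identification. Your two added observations — that $s\mapsto(s_is_j)_{i<j}$ is \emph{exactly} two-to-one (the converse direction), and that the tropical variety must be read as the corner locus of the piecewise-linear function rather than as ``at least two (possibly repeated) terms achieve the min'' — make explicit points that the paper's proof handles only by informally ``discarding'' the redundant monomials.
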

\begin{proof}
    Let $s_1,...,s_n$ be the coordinates of $\Sigma^n$ and let $s_0$ be the additional coordinate in $\Sigma^{n+1}$. By identifying the variable $h_i$ in $\varepsilon^n$ with $J_{0,i}$ in $\varepsilon^{n+1}_{h=0}$, we see that these polynomials at least have the same number of variables.

    Notice that we are double counting monomials in $\varepsilon^{n+1}_0$. Indeed, 
    \begin{align}
        \operatorname{trop}(\varphi^{n+1}_s)(0, J) = \sum_{i<j} J_{i,j}\cdot s_is_j = \sum_{i<j} J_{i,j}\cdot (-s_i)(-s_j) = \operatorname{trop}(\varphi^{n+1}_{-s})(0,J).
    \end{align}
    Thus we discard one monomial for each antipodal pair $\varphi^{n+1}_{s}$ and $\varphi^{n+1}_{-s}$ in $\trop(\epsilon^{n+1}_{h=0}$ without affecting the value of $\trop(\epsilon^{n+1}_{h=0}$; in fact, a tropical polynomial with redundant terms is ill-posed. We can remove redundant terms by insisting that we only take $\varphi^{n+1}_{(1,s)}$ monomials such that $s_0 = 1$, but then
    \begin{align}
        \varphi^{n+1}_{(1,s)}(0,J) = \prod_{i<j} J_{i,j}^{s_is_j} = \prod_{i=1}^n J_{0,i}^{s_i} \cdot \prod_{1\leq i<j}J_{ij}^{s_is_j} = \varphi^n_s(h,J)
    \end{align}
    under the identification $J_{0,i}\mapsto h_i$.
\end{proof}

\printbibliography

\end{document}